\documentclass[12pt]{article}
\usepackage{graphicx} 
\usepackage[T1]{fontenc}
\usepackage[utf8]{inputenc}
\usepackage{amsthm}
\usepackage{amsmath}
\usepackage{amssymb,bm}
\usepackage{array}
\usepackage{natbib}
\usepackage{xcolor}
\usepackage{float}
\usepackage{makecell}
\usepackage{subcaption}
\usepackage{multirow}
\usepackage{booktabs}
\graphicspath{{Fig/}}


\def \aa {\hbox{\boldmath$a$}}

\def \uu {\hbox{\boldmath$u$}}

\def \xx {\hbox{\boldmath$x$}}
\def \yy {\hbox{\boldmath$y$}}
\def \zz {\hbox{\boldmath$z$}}
\def \ww {\hbox{\boldmath$w$}}

\def \bbeta {\hbox{\boldmath$\beta$}}

\def \G {{\mathcal G}}

\newcommand{\U}{{\mathcal U}}

\def \T {{\mathcal T}}

\theoremstyle{thmstyleone}%
\newtheorem{theorem}{Theorem}
%
\theoremstyle{thmstyletwo}%
\theoremstyle{thmstylethree}%


\addtolength{\oddsidemargin}{-.5in}%
\addtolength{\evensidemargin}{-1in}%
\addtolength{\textwidth}{1in}%
\addtolength{\textheight}{1.7in}%
\addtolength{\topmargin}{-1in}%

\newcommand\blfootnote[1]{%
  \begingroup
  \renewcommand\thefootnote{}\footnote{#1}%
  \addtocounter{footnote}{-1}%
  \endgroup
}

\begin{document}

\def\spacingset#1{\renewcommand{\baselinestretch}%
{#1}\small\normalsize} \spacingset{1}

\date{$^{1}$Center of Operations Research, Miguel Hern\'{a}ndez University of Elche, Spain\\
			$^{2}$Department of Economics and Management, University of Pisa, Italy}

  \title{\bf Temporal M-quantile models and robust bias-corrected small area predictors}
  \author{ 
    Mar\'{\i}a Bugallo$^{1}$
    Domingo Morales$^{1}$
    \\
    Nicola Salvati$^{2}$
    Francesco Schirripa Spagnolo$^{2}$
    \blfootnote{Supported by the projects PID2022-136878NB-I00 (Spain), PROMETEO-2021-063 (Valencia, Spain); PRIN-Quantification in the Context of Dataset Shift - QuaDaSh (Grant P2022TB5JF, Italy); MAPPE, Bando a Cascata Programme “Growing Resilient, Inclusive and Sustainable (GRINS)” PE0000018.}
    }
  \maketitle

\begin{abstract}
In small area estimation, it is a smart strategy to rely on data measured over time. However, linear mixed models struggle to properly capture time dependencies when the number of lags is large. Given the lack of published studies addressing robust prediction in small areas using time-dependent data, this research seeks to extend M-quantile models to this field. Indeed, our methodology successfully addresses this challenge and offers flexibility to the widely imposed assumption of unit-level independence. Under the new model, robust bias-corrected predictors for small area linear indicators are derived. Additionally, the optimal selection of the robustness parameter for bias correction is explored, contributing theoretically to the field and enhancing outlier detection. For the estimation of the mean squared error (MSE), a first-order approximation and analytical estimators are obtained under general conditions. Several simulation experiments are conducted to evaluate the performance of the fitting algorithm, the new predictors, and the resulting MSE estimators, as well as the optimal selection of the robustness parameter. Finally, an application to the Spanish Living Conditions Survey data illustrates the usefulness of the proposed predictors. 
\end{abstract}

\noindent%
{\it Keywords:} Bias-corrected estimator; MSE estimation; Optimal robustness parameter; Robust inference; Time-dependent data.


\spacingset{1.5} 
   \addtolength{\textheight}{.2in}
   
\section{Introduction}
\label{sec:intro}
Estimates of finite population parameters for subgroups, such as geographic areas or socio-economic groups, are increasingly required for better planning and evaluation of government programs. However, National Statistical Offices (NSOs) typically design surveys to obtain accurate estimates at a certain level of aggregation. When the goal is to ensure valid inference for specific subgroups of the population, where the portion of available data is small, direct estimation 
may be unreliable. This is regarded as a small area estimation (SAE) problem, which can be addressed by using statistical models that link the sample data with auxiliary covariate information (e.g., administrative records or census data) to improve the survey estimators in the areas of interest. SAE models can be grouped into two broad categories: area-level models, which relate design-based direct estimates to area-specific covariates, and unit-level models, which use individual survey responses as the target variables rather than direct estimates.

In this paper, we focus on unit-level models. Under this approach, the Empirical Best Linear Unbiased Predictors (EBLUP) based on linear mixed models (LMM) are widely used in SAE. These models typically incorporate area-specific random effects to account for the between-area variability that is not explained by the covariates. For a review of these SAE models, see \cite{rao2015} and \cite{morales2021}. Most of these models use cross-sectional data, but many surveys are repeated over time. Consequently, it is possible to improve the predictive performance and reduce the variability of the estimates by considering historical data, i.e., by taking into account potential time correlations.

The effective use of past information and modeling of temporal dependencies is an appealing method for borrowing strength in  SAE.
In this regard, several authors have developed theoretical  advances in SAE models that account for temporal correlation. Temporal adaptations of LMMs have been proposed by \cite{hobza2018},  \cite{morales2019} and \cite{guadarrama2021}. However, these models often rely on strong distributional assumptions and require the formal specification of the dependence structure of the random effects. 

The M-quantile (MQ) approach \citep{breckChamb88, Cha06} is considered a valuable alternative in SAE for relaxing some conventional modeling assumptions and obtaining predictors that are robust against outlying observations. Nonetheless, standard MQ models require the assumption of unit-level independence. Therefore, the first aim of this paper is to extend MQ models to time-dependent target variables. Specifically, we propose a Time-Weighted MQ (TWMQ) regression, inspired by the MQ Geographically Weighted Regression (GWR) proposed by \cite{salvati2012}. As an inherent property of MQ models, our approach avoids distributional assumptions and allows for characterizing differences between areas, as well as time dependencies, through data-driven estimation of the regression coefficients. Consequently, the new models feature time-varying parameters and remain distribution-free for both areas and time.

The second aim of this paper is to use the TWMQ regression to develop small area estimators. As the traditional MQ approach, our proposal can be considered a robust-projective approach based on plug-in robust prediction, i.e. the optimal, but outlier-sensitive, parameter estimates are replaced by outlier robust versions. Unfortunately, even though these methods usually lead to a low prediction variance, they may also involve an unacceptable prediction bias \citep{chambers1986, Chambers14, Jio13}. To solve this issue, we propose robust bias-corrected predictors based on the TWMQ regression. As a result, the introduction of a bias correction term may increase the variability of the predictors. Generally speaking, the robust bias-corrected predictor is obtained by incorporating a second influence function that depends on a tuning constant, called robustness parameter. Its choice is crucial because it allows a compromise between bias and variance. To the best of our knowledge, no published studies have proposed optimal criteria for accurately selecting selecting the robustness parameter in the MQ approaches.
Therefore, we have defined an optimal criterion aimed at reducing bias without unbalancing the MSE.
The proposed idea can be applied all the MQ model-based predictors, not only to the predictors derived from our model. Furthermore, its potential role in the detection of outliers has been explored.

The effectiveness of the proposed model and the performance of the candidate predictors have been examined through model-based simulation experiments. We have also investigated the automatic selection of the robustness parameters and the performance of the MSE estimators. Empirical results presented in this paper demonstrate significant improvements. 
Finally, we applied the proposed TWMQ approach to data from the 2013-2022 Spanish Living Conditions Survey to assess changes in the average income levels across the provinces of Empty Spain.

The paper is organized as follows. Section \ref{sec.MQfun} introduces the MQ functions. Section \ref{MQ} reviews the basic theory of MQ models for SAE, focusing on the most relevant aspects in order to subsequently define the temporal model.
Section \ref{TWMQ} adapts the three-level MQ linear regression to time-dependent target variables, it derives robust small area predictors and analytical MSE estimators, and presents a data-driven criterion for an optimal selection of the robustness parameter for bias correction. Section \ref{sec.sim} includes several model-based simulations to investigate the performance of the novel statistical methodology. Section \ref{sec.aplic} applies the proposed approach to estimate the average level of income in Empty Spain. Section \ref{sec.conc} summarises the main findings and lines of future research. The paper includes Supplementary Material organized in four sections.
Section A derives a first-order approximation of the MSE of the robust bias-corrected MQ predictor based on the TWMQ models and proposes analytical estimators. Section B presents an optimal criterion for the selection of the robustness parameter and proves the existence and uniqueness of the solution. Section C presents further results from the model-based simulation experiments. Section D contains additional results from the application to real data.
%
%
\section{M-Quantile functions}\label{sec.MQfun}
Let $Y$ be a random variable with cumulative distribution function (c.d.f.) $F_Y(y)=P(Y\leq y)$, $y\in \mathbb{R}$, and standard deviation $\sigma_Y>0$. Let
us define the $(q,\sigma_q,\psi)$-check function
\begin{equation*}
\rho_{q}(u,\sigma_q)=2\sigma_q\big|{q}-I_{(-\infty,0)}(u)\big|\rho(\sigma_q^{-1}u),\quad  u\in \mathbb{R},\,\, \sigma_q>0, \quad 0<q<1,
\end{equation*}
where $\rho(u)$ is a continuously differentiable loss function and $\psi(u)=\rho(u)/du$.
The partial derivative of $\rho_{q}(u,\sigma_q)$, with respect to $u$, is
\begin{equation}\label{psiqu}
\psi_{q}(u,\sigma_q)=\frac{\partial\rho_{q}(u,\sigma_q)}{\partial u}
=2\Big\{{q} I_{(0,\infty)}(u)+(1-{q}) I_{(-\infty,0]}(u)\Big\}\,\psi(\sigma_q^{-1}u).
\end{equation}
The MQ function of order $q$ of $Y$ \citep{breckChamb88} is calculated
as
$$
Q_q(Y;\sigma_q,\psi)=\underset{\xi_{q}\in \mathbb{R}}{\mbox{solution}}\,\bigg\{ \int_{\mathbb{R}}\psi_{q}(y-\xi_{q},\sigma_q)\, dF_Y(y)=0\bigg\},\quad 0<q<1.
$$
To complete the definition of $\psi_{q}(u,\sigma_q)$ in (\ref{psiqu}), we take $\sigma_q=\sigma_Y$ and use the Huber function
\begin{eqnarray}\label{psiHuber}
\psi(u)&=&u I_{(-c_\psi,c_\psi)}(u)+c_\psi\, \mbox{sgn}(u) I_{(-\infty,-c_\psi]\cup[c_\psi,\infty)}(u), \ u\in\mathbb R;\quad c_\psi>0.
\end{eqnarray}
\section{Three-level M-quantile linear models}\label{MQ}
Let $U$ be a finite population of size $N$ hierarchically partitioned in domains $U_d$ and subdomains $U_{dt}$ of sizes $N_{d}$ and $N_{dt}$, respectively, $d=1,\ldots,D$, $t=1,\ldots,T$. Let $s$, $s_d$ and $s_{dt}$ be the corresponding sampled subsets of sizes $n$, $n_d$ and $n_{dt}$, respectively. Throughout the theoretical part, we assume that a vector of $p\geq 1$ unit-level auxiliary variables ${\xx}_{dtj}^{\prime}=(x_{dtj1},\ldots,x_{dtjp})$ is known for all individuals in $U_{dt}$ and the variable of interest, $y_{dtj}$, is only observed for individuals in $s_{dt}$. 
For $0<q<1$, the three-level MQ linear regression (MQ3) model \citep{Marchetti2018} is
\begin{equation}\label{MQ3}
    y_{dtj}={\xx}_{dtj}^{\prime}\bbeta_\psi(q)+ e_{\psi,dtj}(q),\quad d=1,\ldots,D,\, t=1,\ldots,T,\, j=1,\ldots,N_{dt},
\end{equation}
where $\bbeta_\psi(q)=(\beta_{\psi1}(q),\ldots,\beta_{\psi p}(q))^\prime$ is the vector of regression coefficients at $q$ and
$e_{\psi,dtj}(q)$ are independent model errors with unknown c.d.f. $F_{q}(u)=P(e_{\psi, dtj}(q)\leq u|{\xx}_{dtj})$, $u\in\mathbb R$. 

For the MQ function, we assume that
$Q_{q}(e_{\psi,dtj}(q);\sigma_q,\psi|{\xx}_{dtj})=0$, 
and, although no explicit parametric assumptions are being made, the homoscedasticity assumption $\sigma_q=\mbox{var}^{1/2}(e_{\psi,dtj}(q))=\sigma_{\psi}(\bbeta_\psi(q))$ is imposed.
It is worth noting that $\bbeta_\psi(q)$ only varies with the order of the quantile.
Indeed, models \eqref{MQ3} can be expressed with two subscripts, $d$ and $i$, where $i$ runs through all combinations of values of the indexes $t$ and $j$. However, the adopted notation is necessary from Section \ref{sec.res} onwards.
%
For the Huber function, it is customary to set $c_\psi = 1.345$ because it guarantees 95\% efficiency when the errors are standard normal, and still offers protection against outliers \citep{Holland1977}.

The regression vector $\bbeta_\psi(q)$ can be estimated solving the following equations
\begin{equation}\label{hatbeta3Mquantile}
    \hat\bbeta_\psi(q)=\underset{\bbeta_\psi(q)\in \mathbb{R}}{\mbox{solution}}\,
    \bigg\{\sum_{d=1}^D\sum_{t=1}^T\sum_{j=1}^{n_{dt}}\psi_{q}\big(y_{dtj}-{\xx}_{dtj}^{\prime}\bbeta_\psi(q),\hat\sigma_q\big) x_{dtjk}=0,\, k=1,\ldots,p \bigg\}.
\end{equation}
In practice, the system \eqref{hatbeta3Mquantile} is solved using the iterative re-weighted least squares (IRLS) algorithm, which ensures convergence to a unique solution \citep{Bianchi2015}.
%
%
%
\subsection{MQ approach for inter-area variability}\label{MQ3.to.SAE}
The MQ modelling approach to SAE is described in \cite{Cha06} and just an overview of the adaptation to three-level nested data is presented here.
The idea is to non-parametrically capture the variability of the population, beyond what is explained by the auxiliary variables, using the so-called MQ coefficients \citep{dawber:2019}. For $j=1,\ldots,N_{dt}$, the unit-level MQ coefficients of models \eqref{MQ3} are
\begin{align}\label{qdtj0}
    {q}_{dtj}=\underset{0<q<1}{\mbox{solution}}\,\bigg\{ Q_{q}(y_{dtj};\sigma_q,\psi|{\xx}_{dtj})=y_{dtj}\bigg\},\
    Q_{q}(y_{dtj};\sigma_q,\psi|{\xx}_{dtj})={\xx}_{dtj}^{\prime}\bbeta_\psi(q).
\end{align}
It holds that  $y_{dtj}=Q_{{q}_{dtj}}(y_{dtj};\sigma_{{q}_{dj}},\psi|\xx_{dtj})=\xx_{dtj}^\prime\bbeta_\psi({q}_{dtj})$,  $\sigma_{{q}_{dtj}}={\sigma}_{\psi}(\bbeta_\psi({q}_{dtj}))$.
The unit-level MQ coefficient $q_{dtj}$ is the ``most likely'' quantile of unit $j$.
That is, of all the MQ3 models that vary by $0<q<1$, the model with $q=q_{dtj}$ would predict $y_{dtj}$ without error
if $\bbeta_\psi(q)$ was known. 
Using the sample observations, an estimator of ${q}_{dtj}$ is
$$
\hat{q}_{dtj}=\underset{0<q<1}{\mbox{solution}}\,\bigg\{\hat{Q}_{q}(y_{dtj};\hat\sigma_q,\psi|{\xx}_{dtj})=y_{dtj}\bigg\},\ \
\hat{Q}_{q}(y_{dtj};\hat\sigma_q,\psi|{\xx}_{dtj})={\xx}_{dtj}^{\prime}\hat\bbeta_\psi(q).
$$
The domain population and sample means of unit-level MQ coefficients are
\begin{align}\label{qd..}
    \theta_d\triangleq \bar{q}_{d..}=\frac{1}{N_{d}}\sum_{t=1}^T\sum_{j=1}^{N_{dt}}q_{dtj}, \quad \hat\theta_d\triangleq \hat{\bar{{q}}}_{d..}=\frac{1}{n_{d}}\sum_{t=1}^T\sum_{j=1}^{n_{dt}}\hat{q}_{dtj}, \quad d=1,\ldots,D,
\end{align}
respectively. We expect that the MQ3 model with $q=\hat\theta_d$ will be the one that provides the best predictions in the domain $d$. Then, in order to predict $y_{dtj}$ in the unobserved part of domain $d$,  we will choose the MQ3 model with $q=\hat\theta_d$.
\subsection{Robust small area predictors for MQ3 models}\label{sec.mq}
The MQ3 models can be used to predict subdomain quantities including but not limited to, population means.
Let $r_{dt}=U_{dt}-s_{dt}$ be the non sampled subset of $U_{dt}$.
If we assume that the sum of the residuals $e_{\psi,dtj}(\theta_d)$ in $r_{dt}$ is close to zero, then 
\begin{align}
    \label{Y.dt}
    \overline{Y}&_{dt}=\frac{1}{N_{dt}}\sum_{j=1}^{N_{dt}}y_{dtj}=
    \frac{1}{N_{dt}}\bigg\{\sum_{j\in s_{dt}}y_{dtj}+\sum_{j\in r_{dt}}\xx_{dtj}^\prime\bbeta_\psi(\theta_d)
    +\sum_{j\in r_{dt}}e_{\psi,dtj}(\theta_d)\bigg\}
    \\
    \nonumber
    &\approx\frac{1}{N_{dt}}\bigg\{\sum_{j\in s_{dt}}y_{dtj}+\sum_{j\in r_{dt}}\xx_{dtj}^\prime\bbeta_\psi\big(\hat\theta_d\big)\bigg\}
    +\frac{1}{N_{dt}}\sum_{j\in r_{dt}}\xx_{dtj}^\prime\frac{\partial\bbeta_\psi(q)}{\partial q}\Big|_{q=\hat\theta_d}(\theta_d-\hat\theta_d).
\end{align}
Typically, the second summand of the last expression is much smaller than the first one.
Therefore, we define the MQ predictor of $\overline{Y}_{dt}$ as
\begin{align}\label{mq}
    \hat{\overline{Y}}_{dt}^{mq}=\frac{1}{N_{dt}}
    \bigg\{\sum_{j\in s_{dt}}y_{dtj}+\sum_{j\in r_{dt}}\xx_{dtj}^\prime\hat\bbeta_\psi\big(\hat\theta_d\big)\bigg\}.
\end{align}
Given that $Q_{\theta_d}(e_{\psi,dtj}(\theta_d);\sigma_{\theta_d },\psi|{\xx}_{dtj})=0$, it follows that
$\sum_{j\in r_{dt}}e_{\psi,dtj}(\theta_d)\approx 0$ if $\theta_d\in(1/2-\varepsilon, 1/2+\varepsilon)$ for some small $\varepsilon>0$,
but not otherwise.
This bias can be estimated as
$$\hat B\big(\hat{\overline{Y}}_{dt}^{mq}\big)= \hat E\big[\hat{\overline{Y}}_{dt}^{mq}-\overline{Y}_{dt}\big]=-\frac{1}{n_{dt}}\Big(1-\frac{n_{dt}}{N_{dt}}\Big)
\sum_{j\in s_{dt}}\hat e_{\psi,dtj}(\hat\theta_d).
$$
Based on a robustification of $\hat e_{\psi,dtj}(\hat\theta_d)$, the robust bias-corrected (BMQ) predictor  is
\begin{eqnarray}\label{bmq}
    \hat{\overline{Y}}_{dt}^{bmq}=\hat{\overline{Y}}_{dt}^{mq}+\frac{1}{n_{dt}}\Big(1-\frac{n_{dt}}{N_{dt}}\Big)\sum_{j\in s_{dt}}\sigma_{\theta_d}
    \phi\big(\sigma_{\theta_d}\hat e_{\psi,dtj}(\hat\theta_d)\big),
\end{eqnarray}
where $\phi$ is an influence function with robustness parameter $c_\phi\geq 0$.
The last summand on the right-hand side of (\ref{bmq}) has the role of controlling the potential bias. The characterization of $\phi$ is worthy of comment. By setting the value of $c_\phi$, it is possible to trade bias for variance in robust bias-corrected predictors.
If $c_{\phi}=0$, the BMQ predictor reduces to the MQ predictor.
As $c_{\phi}$ increases, larger residuals have larger weights, biased by the MQ predictor. Consequently, larger values of $c_{\phi}$ lead to less bias, but also less robustness and more variability.
It stands to reason that it is crucial to propose optimality criteria for a proper selection of $c_{\phi}$. For brevity, this is done in Section \ref{sec.c}  for predictors derived from the TWMQ models but an analogous argument can be applied to the BMQ predictor derived from the MQ3 models.
\subsection{Residual analysis and inter-period weights}\label{sec.res}
Let define the subdomain-level residuals $r_{\psi,dt}=\frac{1}{n_{dt}}\sum_{j\in s_{dt}}\hat{e}_{\psi,dtj}(\hat{\theta}_d)$, $d=1,\ldots,D$,  $t=1,\ldots,T$, and assume that there exists
some unknown subdomain-level temporal dependency between the target variables $y_{dt_1j_1}$ and $y_{dt_2j_2}$, $t_1,t_2=1,\ldots,T$, $j_1=1,\ldots,N_{dt_1}$, $j_2=1,\ldots,N_{dt_2}$. In such case, the subdomain-level temporal dependency will remain in the subdomain-level residuals
$r_{\psi,dt_1}$ and $r_{\psi,dt_2}$. As already pointed out, the MQ3 models are able to non-parametrically capture area-level variability by fitting different regression surfaces $\hat{Q}_{\hat\theta_d}(y_{dtj};\hat\sigma_q,\psi|{\xx}_{dtj})={\xx}_{dtj}'\bbeta_{\psi}(\hat\theta_d)$,
but they are not able to model temporal dependencies.
This motivates us to fit a seasonal autoregressive $AR(P)_D$ model with period $D$ and order $0\leq P\leq T$ to $\{r_{\psi,dt}:  d=1,\ldots,D,\ t=1,\ldots,T\}$.
Here the domains play the role of seasons and $P$ measures how the auto-correlation decays over time.
The extreme cases are $P=0$, where there is no autoregressive dependence structure, and $P=T$ which includes all possible lags. Our idea is to define inter-period weights $w_{t_1t_2}$ that measure the dependency between $r_{\psi,dt_1}$ and $r_{\psi,dt_2}$, 
based on the estimated coefficients of the fitted seasonal autoregressive $AR(P)_D$ model. We differentiate between two cases:
\begin{enumerate}
    \item[(1)]
    If $1\leq t\leq P$, there is past information from the first period up to time $t$, so the weights will be distributed over $\{1,\ldots,t\}$.
    Accordingly, if $t=1$ there is no past information, i.e., only data from the first period is available.
    \item[(2)]
    If $P<t\leq T$, there is past information from the first period up to time $t$, but we only assign positive weights to the last $P$ delays, i.e.
    $\{t-P,\ldots,t\}$.
\end{enumerate}
Then, the equation of an $AR(P)_D$ process $\{z_i\}_{i\in \mathbb Z}$ is $z_i=\phi_0+\phi_1 z_{i-D}+\phi_2 z_{i-2D}+\dots+\phi_P z_{i-PD}+a_{i}$, where $\{a_i\}_{i\in \mathbb Z}$ is a collection of normally distributed independent random variables,
of zero mean and finite variance $\sigma^2_a$, and
$\phi_0,\phi_1,\dots,\phi_P\in\mathbb R$ fulfil that $\phi_P\neq 0$ and
$1-\phi_1u-\phi_2u^2-\cdots-\phi_Pu^P\neq0$, $\forall u\in\mathbb{C}, \ |u|\leq 1$ (stationarity condition). Let be $S_t=\underset{p=1}{\overset{t}{\sum}}|\phi_p|$, $1\leq t\leq P$. The set of past time periods that produce a dependency in the distributions of the target variables at time period $t$ are
\begin{align*}
    \T_t=\{\tilde{t},\ldots,t\},\quad
    \tilde{t}=\tilde{t}(t,P)=1, \text{ if } 1\leq t\leq P; \ t-P, \text{ if } t>P,
\end{align*}
so the vector of inter-period weights is $\ww_t=(w_{t1},\dots, w_{tT})$, where
\begin{equation}\label{wti}
    w_{ti}=\frac{|\phi_{t+1-i}|}{S_t} \,\text{ if }\, \tilde{t}\leq i \leq  t,\, \text{ and  } w_{ti}=0 \text{ if } i>t \text{ or } i<\tilde{t}, \quad t=1,\dots, T.
\end{equation}
The population and samples sizes are $N_{.t}=\sum_{d=1}^DN_{dt}$ and $n_{.t}=\sum_{d=1}^Dn_{dt}$, respectively, and the relevant subsets at time period $t$ and corresponding sizes, are
\begin{align}\label{tsets}
    \nonumber
    s_{d(t)}&=\bigcup_{i\in\T_t} s_{di},\
    U_{d(t)}=\bigcup_{i\in\T_t} U_{di};\ \
    n_{d(t)}=\sum_{i\in\T_t}n_{di},\
    N_{d(t)}=\sum_{i\in\T_t}N_{di},
    \\
    s_{(t)}&=\bigcup_{d=1}^D s_{d(t)},\
    U_{(t)}=\bigcup_{d=1}^D U_{d(t)};\ \
    n_{(t)}=\sum_{d=1}^D n_{d(t)},\
    N_{(t)}=\sum_{d=1}^DN_{d(t)}.
\end{align}
\section{Time-weighted M-quantile models}\label{TWMQ}
%
%
%
For $0<q<1$, $t=1,\dots,T$, the time-weighted MQ linear regression (TWMQ) models are
\begin{equation}\label{TMQ3}
    y_{dij}={\xx}_{dij}^{\prime}\bbeta_\psi(q,\ww_t)+ e_{\psi,dij}(q,\ww_t),\quad d=1,\ldots,D,\, i=1,\ldots,T,\, j=1,\ldots,N_{di},
\end{equation}
where $\bbeta_\psi(q,\ww_t)=(\beta_{\psi1}(q,\ww_t),\ldots,\beta_{\psi p}(q,\ww_t))^\prime$ is the vector of regression coefficients, $\ww_t=(w_{t1},\ldots,w_{tT})^\prime$ is the vector of known non-negative inter-period weights
and
$e_{\psi,dij}(q,\ww_t)$ are independent model errors with unknown c.d.f. $F_{qt}(u)=P(e_{\psi,dij}(q,\ww_t)\leq u|{\xx}_{dij})$, $u\in\mathbb R$. It is satisfied by definition that
$Q_{q}(e_{\psi,dij}(q,\ww_t);\sigma_{qt},\psi|{\xx}_{dij})=0$
and, although no explicit parametric assumptions are being made, the homoscedasticity assumption $\sigma_{qt}=\mbox{var}^{1/2}(e_{\psi,dij}(q,\ww_t))=\sigma_{\psi}(\bbeta_\psi(q,\ww_t))$ is imposed. Our choice is to set the weights $\ww_t$ according to (\ref{wti}), so the time-dependent subsets are \eqref{tsets}.

For $0<q<1$, $t=1,\dots, T$, the regression vector $\bbeta_\psi(q,\ww_t)$ is the solution of
\begin{equation}\label{hatbetaTMquantile}
\sum_{d=1}^D\sum_{i\in\T_t}w_{ti}\sum_{j=1}^{n_{di}}\psi_{q}\big(y_{dij}-{\xx}_{dij}^{\prime}\bbeta_\psi(q,\ww_t),\hat\sigma_{qt}\big) x_{dijk}=0,\, k=1,\ldots,p, 
\end{equation}
where $\psi$ is the Huber function \eqref{psiHuber}
and $\hat\sigma_{qt}=\hat\sigma_{\psi}(\bbeta_\psi(q,\ww_t))=\mbox{mad}_{\psi,n}(q,\ww_t)/0.6745$ is a robust estimator of $\sigma_{qt}$,
for $\bbeta_\psi(q,\ww_t)$ known, and
$\mbox{mad}_{\psi,n}(q,\ww_t)$ is the median absolute deviation (MAD) of the errors $e_{\psi,dij}(q,\ww_t)=y_{dij}-{\xx}_{dij}^{\prime}\bbeta_\psi(q,\ww_t)$. As $\bbeta_\psi(q,\ww_t)$ is unknown, we rely on an iterative procedure to solve the system (\ref{hatbetaTMquantile}) of $p$ non-linear equations.
\subsection{Iterative re-weighted least squares algorithm}\label{sec.IRLS}
The system \eqref{hatbetaTMquantile} is solved using the IRLS algorithm.

We define the weights $w_{\psi dij}(q,\ww_t)=\psi_q(e_{\psi,dij}(q,\ww_t),\hat\sigma_{qt})/e_{\psi,dij}(q,\ww_t)$ and
the $t$-relevant vectors $\yy_{s(t)}=\underset{1\leq d\leq D}{\mbox{col}}\big(\underset{i\in\T_t}{\mbox{col}}(\underset{1\leq j\leq n_{di}}{\mbox{col}}(y_{dij}))\big)$ and matrices $X_{s(t)}=\underset{1\leq d\leq D}{\mbox{col}}\big(\underset{i\in\T_t}{\mbox{col}}(\underset{1\leq j\leq n_{di}}{\mbox{col}}(\xx_{dij}^\prime))\big)$ and $W_{s(t)}(q,\ww_t)=\underset{1\leq d\leq D}{\mbox{diag}}\big(\underset{i\in\T_t}{\mbox{diag}}(\underset{1\leq j\leq n_{di}}{\mbox{diag}}(
w_{ti}w_{\psi dij}(q,\ww_t)))\big)$.
We write the equations in (\ref{hatbetaTMquantile}) as
\begin{equation}\label{hatbetaTMqweight}
    \sum_{d=1}^D\sum_{i\in\T_t}w_{ti}\sum_{j=1}^{n_{di}} w_{\psi dij}(q,\ww_t) \big(y_{dij}-{\xx}_{dij}^{\prime}\bbeta_\psi(q,\ww_t)\big)x_{dijk}=0,\quad k=1,\ldots,p.
\end{equation}
or in the matrix form $X_{s(t)}^\prime W_{s(t)}(q,\ww_t)\yy_{s(t)}-X_{s(t)}^\prime W_{s(t)}(q,\ww_t)X_{s(t)}\bbeta_\psi(q,\ww_t)=0$.

If $X_{s(t)}^\prime W_{s(t)}(q,\ww_t)X_{s(t)}$ is invertible, we can write (\ref{hatbetaTMqweight}) in explicit form, i.e. 
\begin{align}\label{beta.up}
    &\bbeta_\psi(q,\ww_t)=\big(X_{s(t)}^\prime W_{s(t)}(q,\ww_t)X_{s(t)}\big)^{-1}X_{s(t)}^\prime W_{s(t)}(q,\ww_t) \yy_{s(t)}
    \\ \nonumber
    &=\bigg(\sum_{d=1}^D\sum_{i\in\T_t}\sum_{j=1}^{n_{di}}w_{ti}w_{\psi dij}(q,\ww_t)\xx_{dij}\xx_{dij}^\prime\bigg)^{-1}
    \sum_{d=1}^D\sum_{i\in\T_t}\sum_{j=1}^{n_{di}}w_{ti}w_{\psi dij}(q,\ww_t)\xx_{dij}y_{dij}.
\end{align}
This yields to the following IRLS algorithm to calculate $\hat{\bbeta}_\psi(q,\ww_t)$.

\begin{enumerate}
    \item[1.]
    Set the initial values $\hat{\bbeta}_\psi^{(0)}(q,\ww_t)$ using e.g. the weighted least squares estimator
    \begin{equation*}
        \hat\bbeta_\psi^{(0)}(q,\ww_t)=\bigg(\sum_{d=1}^D\sum_{i\in\T_t}w_{ti}\sum_{j=1}^{n_{di}}\xx_{dij}\xx_{dij}^\prime\bigg)^{-1}\sum_{d=1}^D\sum_{i\in\T_t}w_{ti}\sum_{j=1}^{n_{di}}\xx_{dij}y_{dij}.
    \end{equation*}
    \item[2.]
    For each iteration ${l}=1,2,\ldots$, do

    \noindent 2.1.
    Calculate
    $   \hat{e}_{\psi,dij}^{({l}-1)}(q,\ww_t)=y_{dij}-{\xx}_{dij}^{\prime}\hat{\bbeta}_\psi^{({l}-1)}(q,\ww_t)$,
    $\hat\sigma_{q t}^{({l}-1)}=\hat{\sigma}_{\psi}(\hat\bbeta_\psi^{({l}-1)}(q,\ww_t))$,
    $w_{\psi dij}^{({l}-1)}(q,\ww_t)= \psi_q\big(\hat{e}_{\psi,dij}^{({l}-1)}(q,\ww_t),\hat\sigma_{q t}^{({l}-1)}\big)/\hat{e}_{\psi,dij}^{({l}-1)}(q,\ww_t)$ and
    
    $W_{s(t)}^{({l}-1)}\big(q,\ww_t)=\underset{1\leq d\leq D}{\mbox{diag}}\big(\underset{i\in\T_t}{\mbox{diag}}(\underset{1\leq j\leq n_{di}}{\mbox{diag}}(w_{ti}w_{\psi dij}^{({l}-1)}(q,\ww_t)))\big)$.

    \noindent 2.2.
    Update the estimator of $\bbeta_\psi(q,\ww_t)$ according to \eqref{beta.up}. 
    
    \item[3.]
    Repeat Step 2 until convergence.
\end{enumerate}

Here it is important to note the clear differences of the TWMQ models \eqref{TMQ3} compared to the MQ GWR \citep{salvati2012}, where the weights are symmetric, i.e. where something like $w_{ti}=w_{it}$, $i,t=1,\dots, T$ should be imposed. For spatial dependencies, adding new locations does not necessarily imply that those already considered lose relevance. In contrast, in the case of temporal dependencies, the first lags lose relevance as more recent information becomes available.
One of the great advantages of the TWMQ models is that recent data can be easily incorporated into the fitting process. In fact, it manages to attribute only positive weights to the closest temporal data through the sets $\T_t$, $t=1,\dots, T$. The computational cost of fitting the TWMQ models is not scalable if we include more recent observations. That is, the regression coefficients $\hat{\bbeta}_\psi(q,\ww_t)$ vary over time, using only the nearest data to estimate them. 
In light of the above, the TWMQ models define the fitted regression surfaces
\[\hat Q_{q}(y_{dtj};\sigma_{qt},\psi|{\xx}_{dtj})={\xx}_{dtj}^{\prime}\hat\bbeta_\psi(q,\ww_t),\ d=1,\dots,D, \ t=1,\dots,T, \ j=1,\dots, N_{dt},\]
that uses information across quantiles and over time.
\subsection{Robust small area predictors based on TWMQ models}\label{sec.tmq}
For the purpose of using the TWMQ models in SAE, we use $\hat{\bbeta}_\psi(\theta_d,\ww_t)$, where the estimation of $\theta_d$ is obtained by fitting the MQ3 models \eqref{MQ3}. We focus on the prediction of the population means  $\overline{Y}_{dt}$, which have been defined in \eqref{Y.dt}. In the following, we include some mathematical notation and developments to derive the small area predictors. For the quantile $q=\theta_d$, we write $\sigma_{\theta_dt}=\sigma_{\psi}(\bbeta_\psi(\theta_d,\ww_t))$ and introduce a simplified notation for model errors and standardized errors, i.e.
\begin{equation}\label{TMQ3resid1}
    e_{\psi,dtj} =y_{dtj}-{\xx}_{dtj}^{\prime}\bbeta_\psi(\theta_d,\ww_t),\
    u_{\psi,dtj}=\sigma_{\theta_dt}^{-1}e_{\psi,dtj}(\theta_d,\ww_t), \ j=1,\ldots,N_{dt}.
\end{equation}
For $j=1,\ldots,n_{di}$, $i=1,\dots, T$, the model residuals are 
\begin{align*}
    \hat e_{\psi,dij}(q,\ww_t)=y_{dij}-\xx_{dij}'\hat\bbeta_{\psi}(q,\ww_t).
\end{align*}
For $j=1,\ldots,N_{dt}$, we define the pseudo-residuals and standardized pseudo-residuals
\begin{eqnarray}\label{TMQ3psresid}
    \nonumber
    \tilde{e}_{\psi,dtj}(q)&=&\xx_{dtj}^\prime\big(\bbeta_\psi(q,\ww_t)-\hat\bbeta_\psi(\theta_d,\ww_t)\big),\quad
    \tilde{u}_{\psi,dtj}(q)=\sigma_{\theta_dt}^{-1}\,\tilde{e}_{\psi,dtj}(q),
    \\
    \nonumber
    \hat{e}_{\psi,dtj}(q)&=&\xx_{dtj}^\prime\big(\bbeta_\psi(q,\ww_t)-\hat\bbeta_\psi(\hat\theta_d,\ww_t)\big),\quad
    \hat{u}_{\psi,dtj}(q)= \sigma_{\theta_dt}^{-1}\hat{e}_{\psi,dtj}(q),
    \\
    \check{e}_{\psi,dtj}(q)&=&\xx_{dtj}^\prime\big(\bbeta_\psi(q,\ww_t)-\bbeta_\psi(\hat\theta_d,\ww_t)\big),\quad
    \check{u}_{\psi,dtj}(q)= \sigma_{\theta_dt}^{-1}\check{e}_{\psi,dtj}(q).
\end{eqnarray}
For $j=1,\ldots,N_{dt}$, the unit-level MQ coefficients of models \eqref{TMQ3} are
\begin{equation}\label{qdtjwt}
    {q}_{dtj}=\underset{0<q<1}{\mbox{solution}}\,\bigg\{ Q_{q}(y_{dtj};\sigma_{qt},\psi|{\xx}_{dtj})=y_{dtj}\bigg\}, \ Q_{q}(y_{dtj};\sigma_{qt},\psi|{\xx}_{dtj})={\xx}_{dtj}^{\prime}\bbeta_\psi(q,\ww_t),
\end{equation}
For the quantile $q=q_{dtj}$, we just write $\tilde{e}_{\psi,dtj}\triangleq \tilde{e}_{\psi,dtj}(q_{dtj})$, $\tilde{u}_{\psi,dtj}=\sigma_{\theta_dt}^{-1}\,\tilde{e}_{\psi,dtj}(q_{dtj})$,
\begin{align}\label{TMQ3resid}
\hat{e}_{\psi,dtj}\triangleq\hat{e}_{\psi,dtj}(q_{dtj}),\
    \hat{u}_{\psi,dtj}= \sigma_{\theta_dt}^{-1}\hat{e}_{\psi,dtj}(q_{dtj}),\ \check{e}_{\psi,dtj}\triangleq\check{e}_{\psi,dtj}(q_{dtj}), \ \check{u}_{\psi,dtj}=\sigma_{\theta_dt}^{-1}\,\check{e}_{\psi,dtj}(q_{dtj}).
\end{align}
Following the same idea as in Section \ref{sec.mq}, the TMQ predictor of $\overline{Y}_{dt}$ is
\begin{equation}\label{tmq}
    \hat{\overline{Y}}_{dt}^{tmq}=\frac{1}{N_{dt}}
    \bigg\{\sum_{j\in s_{dt}}y_{dtj}+\sum_{j\in r_{dt}}\xx_{dtj}^\prime\hat\bbeta_\psi\big(\hat\theta_d,\ww_t\big)\bigg\}
\end{equation}
and the robust bias-corrected (BTMQ) predictor of \eqref{tmq} is
\begin{eqnarray}\label{btmq}
    \hat{\overline{Y}}_{dt}^{btmq}=\hat{\overline{Y}}_{dt}^{tmq}+\frac{1}{n_{dt}}\Big(1-\frac{n_{dt}}{N_{dt}}\Big)\hat{B}_{dt}^{btmq},\quad \hat{B}_{dt}^{btmq}=
    \sum_{j\in s_{dt}}\sigma_{\theta_d t}
    \phi\big(\hat u_{\psi,dtj}\big)
\end{eqnarray}
where $\phi$ is an influence function with robustness parameter $c_\phi\geq 0$. For the first time in the literature, in Section \ref{sec.c} we propose an optimality criterion for a proper selection of $c_{\phi}$, by minimising the MSE, and prove its existence and uniqueness.

\subsection{MSE estimation for TMQ predictors}\label{MSE.TMQ}
Let $d=1,\dots, D$, $t=1,\dots, T$. Let us assume that $q=\hat\theta_d$ is known. Define the $n(t)\times 1$ vector indicating the units of $s(t)$ that belongs to domain $d$ and time period $t$, i.e. $\varepsilon_{dt}=\underset{1\leq g\leq D}{\mbox{col}}\big(\delta_{g d}\underset{i\in\T_t}{\mbox{col}}(\delta_{it}\underset{1\leq j\leq n_{g i}}{\mbox{col}}(1))\big)
=\underset{1\leq g\leq D}{\mbox{col}}\big(\underset{i\in\T_t}{\mbox{col}}(\underset{1\leq j\leq n_{gi}}{\mbox{col}}(\varepsilon_{dt,gij}))\big)$ where $\delta_{ab}=1$ if and only if $a=b$.
Then we can write the TMQ predictor in the form
\begin{align*}
    \hat{\overline{Y}}_{dt}^{tmq}&=
    \frac{1}{N_{dt}}\,\aa_{dt}^\prime \yy_{s(t)}=\frac{1}{N_{dt}}\sum_{g=1}^D\sum_{i\in\T_t}\sum_{j=1}^{n_{g i}}a_{dt,gij}y_{gij},
\end{align*}
where $\aa_{dt}^\prime=\varepsilon_{dt}^\prime+\zz_{dt}^\prime=\underset{1\leq g\leq D}{\mbox{col}}\big(\underset{i\in\T_t}{\mbox{col}}(\underset{1\leq j\leq n_{gi}}{\mbox{col}}(a_{dt,gij}))\big)$, $a_{dt,gij}=\varepsilon_{dt,gij}+z_{dt,gij}$ and $\zz_{dt}^\prime=\underset{1\leq g\leq D}{\mbox{col}}\big(\underset{i\in\T_t}{\mbox{col}}(\underset{1\leq j\leq n_{gi}}{\mbox{col}}(z_{dt,g ij}))\big)$,
$
\zz_{dt}^\prime
=\bigg(\sum_{j\in r_{dt}}\xx_{dtj}^\prime\bigg)\big(X_{s(t)}^\prime W_{s(t)}(\hat\theta_d,\ww_t)X_{s(t)}\big)^{-1}X_{s(t)}^\prime W_{s(t)}(\hat\theta_d,\ww_t)
$.
Let us define the $N(t)\times 1$ vector indicating the units of $U(t)$ that belongs to domain $d$ and time period $t$ and the vectors $1_{dt}=\underset{1\leq j\leq N_{dt}}{\mbox{col}}(1)$, $\yy_{dt}=\underset{1\leq j\leq N_{dt}}{\mbox{col}}(y_{dtj})$, $1_{r_{dt}}=\underset{j\in r_{dt}}{\mbox{col}}(1)$, $\yy_{r_{dt}}=\underset{j\in r_{dt}}{\mbox{col}}(y_{dtj})$. The prediction error and MSE of  $\hat{\overline{Y}}_{dt}^{tmq}$
are, respectively,
\begin{eqnarray*}
    \hat{\overline{Y}}_{dt}^{tmq}-\overline{Y}_{dt}
    &=&N_{dt}^{-1}\big(\aa_{dt}^\prime \yy_{s(t)}-1_{dt}^\prime y_{dt}\big)
    =N_{dt}^{-1}\big(\zz_{dt}^\prime \yy_{s(t)}-1_{r_{dt}}^\prime \yy_{r_{dt}}\big),\\
    MSE(\hat{\overline{Y}}_{dt}^{tmq})&=&E\Big[\big(\hat{\overline{Y}}_{dt}^{tmq}-\overline{Y}_{dt}\big)^2\Big]
    =
    \text{var}\big(\hat{\overline{Y}}_{dt}^{tmq}-\overline{Y}_{dt}\big)
    +B^2\big(\hat{\overline{Y}}_{dt}^{tmq}\big),
\end{eqnarray*}
where
$\text{var}\big(\hat{\overline{Y}}_{dt}^{tmq}-\overline{Y}_{dt}\big)=\frac{1}{N_{dt}^2}\text{var}\big( \zz_{dt}^\prime \yy_{s(t)}-1_{r_{dt}}^\prime \yy_{r_{dt}}\big)$ and
$B\big(\hat{\overline{Y}}_{dt}^{tmq}\big)=\frac{1}{N_{dt}}E\big[\zz_{dt}^\prime \yy_{s(t)}-1_{r_{dt}}^\prime \yy_{r_{dt}}\big].
$

Following \cite{Cha06}, we derive a first-order approximation \citep{Chambers2011} of $MSE(\hat{\overline{Y}}_{dt}^{tmq})$ and propose several estimators. In general, the procedure described below could be applied when the quantity to be predicted can be expressed as a linear combination of the values taken by the target variable in all units of the sample. It is important to note that these  pseudo-linear MSE estimators assume that the weights $\aa_{dt}$ are fixed quantities, and thus ignore their contribution to the MSE, derived from the estimation of the $\theta_d$ coefficients using the MQ3 models. In practice, the latter should not be a major problem, as this variability is expected to be rather small \citep{schirripa2021}. Firstly, the variance can be approximated as
\begin{align}\label{vardif1}
    \text{var}\big(\hat{\overline{Y}}_{dt}^{tmq}-&\overline{Y}_{dt}\big)\approx\frac{1}{N_{dt}^2} \bigg(\sum_{g=1}^D\sum_{i\in\T_t}\sum_{j=1}^{n_{g i}}z_{dt,gij}^2\text{var}(y_{gij})
    +\sum_{j \in r_{dt}}\text{var}(y_{dtj})\bigg)
    \\
    &=
    \frac{1}{N_{dt}^2} \bigg(\sum_{g=1}^D\sum_{i\in\T_t}\sum_{j=1}^{n_{g i}}z_{dt,gij}^2\text{var}(y_{gij})
    +\sum_{g=1}^D\sum_{i\in\T_t}\sum_{j=1}^{N_{gi}}\varepsilon_{rdt,g ij}\text{var}(y_{gij})\bigg),
    \nonumber
\end{align}
where
$
\varepsilon_{rdt}=\underset{1\leq g\leq D}{\mbox{col}}\big(\delta_{g d}\underset{i\in\T_t}{\mbox{col}}(\delta_{it}
\mbox{col}(\underset{j\in s_{gi}}{\mbox{col}}(0), \underset{j\in r_{g i}}{\mbox{col}}(1)))\big)
=\underset{1\leq g\leq D}{\mbox{col}}\big(\underset{i\in\T_t}{\mbox{col}}(\underset{1\leq j\leq N_{gi}}{\mbox{col}}(\varepsilon_{rdt,gij}))\big).
$

To estimate $\text{var}(y_{gij})$ and $\text{var}(y_{dtj})$, we consider two approaches.
(1) {\it Median approach}:
the variance estimators are based on the median model.
Accordingly, we use the median estimator
$\widehat{\text{var}}(y_{gij})=(y_{gij}-\xx_{g ij}^\prime\hat\bbeta_\psi(0.5,\ww_t))^2=\hat e^2_{\psi,gij}(0.5,\ww_t)$ for the variance of the sample observations in $s(t)$ and estimate
the second summand in (\ref{vardif1}) using
$$
\sum_{j \in r_{dt}}\widehat{\text{var}}(y_{dtj})
=\sum_{g=1}^D\sum_{i\in\T_t}\sum_{j=1}^{n_{gi}}\varepsilon_{dt,g ij}\frac{N_{dt}-n_{dt}}{n_{d(t)}-1}
\ \hat e^2_{\psi,gij}(0.5,\ww_t).
$$
Therefore, the median estimator of $\text{var}\big(\hat{\overline{Y}}_{dt}^{tmq}-\overline{Y}_{dt}\big)$ is
$$
\hat{V}_{11,dt}^{tmq}=\frac{1}{N_{dt}^2}\sum_{g=1}^D\sum_{i\in\T_t}\sum_{j=1}^{n_{gi}}\lambda_{dt,gij}\ \hat e^2_{\psi,gij}(0.5,\ww_t), \ \lambda_{dt,gij}=z_{dt,g ij}^2+\frac{N_{dt}-n_{dt}}{n_{d(t)}-1}\,\varepsilon_{dt,gij}.
$$

\noindent
(2) {\it Area quantile coefficient approach}:
the variance is estimated according to the representative quantile of the area $g$ from which the observation is drawn.
We use the area quantile coefficient estimator
$\widehat{\text{var}}(y_{gij})=(y_{g ij}-\xx_{g ij}^\prime\hat\bbeta_\psi(\hat\theta_g,\ww_t))^2=\hat e^2_{\psi,g ij}(\hat\theta_g,\ww_t)$ for the variance of the sample observations in $s(t)$ and estimate
the second summand in (\ref{vardif1}) using
$$
\sum_{j \in r_{dt}}\widehat{\text{var}}(y_{dtj})
=\sum_{g=1}^D\sum_{i\in\T_t}\sum_{j=1}^{n_{gi}}\varepsilon_{dt,g ij}\frac{N_{dt}-n_{dt}}{n_{d(t)}-1}\ \hat e^2_{\psi,g ij}(\hat\theta_g,\ww_t).
$$
Therefore, the area quantile coefficient estimator of $\text{var}\big(\hat{\overline{Y}}_{dt}^{tmq}-\overline{Y}_{dt}\big)$ is
$$
\hat{V}_{12,dt}^{tmq}=\frac{1}{N_{dt}^2}\sum_{g=1}^D\sum_{i\in\T_t}\sum_{j=1}^{n_{gi}}\lambda_{dt,gij}\ \hat e^2_{\psi,gij}(\hat\theta_g,\ww_t).
$$
By using a different formula for the prediction error of $\hat{\overline{Y}}_{dt}^{tmq}$, we  give two alternative estimators of the variance. For this sake, we define the scalars and vectors
$$
\bar{y}_{sdt}=\frac{1}{n_{dt}}\sum_{j\in s_{dt}}y_{dtj},\,\,\,
\bar{y}_{rdt}=\frac{1}{N_{dt}-n_{dt}}\sum_{j\in r_{dt}}y_{dtj},\,\,\,
\bar{\xx}_{rdt}^\prime=\frac{1}{N_{dt}-n_{dt}}\sum_{j\in r_{dt}}\xx_{dtj}^\prime.
$$
From equation (\ref{tmq}), we can write the TMQ predictor and the population mean as
\begin{align*}
  & \hat{\overline{Y}}_{dt}^{tmq}=\frac{1}{N_{dt}}
\Big\{n_{dt}\bar{y}_{sdt}+(N_{dt}-n_{dt})\bar{\xx}_{rdt}^\prime\hat\bbeta_\psi\big(\hat\theta_d,\ww_t\big)\Big\},\ 
\overline{Y}_{dt}=\frac{1}{N_{dt}} \Big\{n_{dt}\bar{y}_{sdt}+(N_{dt}-n_{dt})\bar{y}_{rdt}\Big\},
\end{align*}
so the prediction error is $\hat{\overline{Y}}_{dt}^{tmq}-\overline{Y}_{dt}
=\Big(1-\frac{n_{dt}}{N_{dt}}\Big)\Big\{\bar{\xx}_{rdt}^\prime\hat\bbeta_\psi\big(\hat\theta_d,\ww_t\big)-\bar{y}_{rdt}\Big\}$.

An estimator of the prediction error variance is
\begin{equation}\label{vardif2}
    \widehat{\text{var}}\big(\hat{\overline{Y}}_{dt}^{tmq}-\overline{Y}_{dt}\big)=\Big(1-\frac{n_{dt}}{N_{dt}}\Big)^2\bar{\xx}_{rdt}^\prime\hat{V}_\beta\bar{\xx}_{rdt}
    +\Big(1-\frac{n_{dt}}{N_{dt}}\Big)^2\widehat{\text{var}}(\bar{y}_{rdt}), 
\end{equation}
where $\hat{V}_\beta=\widehat{\text{var}}(\hat\bbeta_\psi(\hat\theta_d,\ww_t))$.
Based on the sandwich approach 
to estimate the asymptotic variance of the vector of coefficients in MQ linear models \citep{Bianchi2015},
we derive an estimator of ${V}_\beta$.
Under assumptions (A1)-(A8), listed in Section A.1 of the Supplementary Material, an estimator of ${V}_\beta$ is
\begin{align}\label{var.beta}
    \hat{V}_\beta=\frac{n_{(t)}^2\sigma_{\theta_d t}^2}{n_{(t)}-p}\frac{\overset{D}{\underset{g=1}{\sum}}
        \ \underset{i\in\T_t}{\sum}\ \underset{j\in s_{gi}}{\sum}
        \psi_{\hat\theta_d t}^2\big(\hat  e_{\psi,gij}(\hat\theta_d,\ww_t), \sigma_{\theta_d t}\big)}
    {\left(\overset{D}{\underset{g=1}{\sum}}
        \ \underset{i\in\T_t}{\sum}\ \underset{j\in s_{gi}}{\sum}
        \dot{\psi}_{\hat\theta_d t}\big(\hat e_{\psi,g ij}(\hat\theta_d,\ww_t),\sigma_{\theta_d t}\big)\right) ^2}
    \left(\sum_{g=1}^D\sum_{i\in\T_t}\sum_{j\in s_{gi}}\xx_{gij}^\prime \xx_{g ij}\right)^{-1},
\end{align}
where $\dot{\psi}_{\hat\theta_d,t}$ is the partial derivative of ${\psi}_{\hat\theta_d,t}$ with respect to the first argument.

In order to estimate  $\text{var}\big(\bar{y}_{rdt}\big)$ in \eqref{vardif2}, we can use
\begin{align}\label{var.opt}
    \widehat{\text{var}}_1\big(\bar{y}_{rdt}\big)=\frac{\sum_{j\in s_{dt}}\hat e^2_{\psi,dtj}}{(N_{dt}-n_{dt})(n_{d(t)}-1)}  \text{ or }
    \widehat{\text{var}}_2\big(\bar{y}_{rdt}\big)=\frac{\sum_{g=1}^D\sum_{j \in s_{gt}}\hat e^2_{\psi,gtj}}{(N_{dt}-n_{dt})(n-D)}.
\end{align}
By substituting $\hat{V}_\beta$ and $\widehat{\text{var}}_1\big(\bar{y}_{rdt}\big)$ or $\widehat{\text{var}}_2\big(\bar{y}_{rdt}\big)$ in (\ref{vardif2}),
we obtain the estimators $\hat{V}_{21,dt}^{tmq}$ and $\hat{V}_{22,dt}^{tmq}$, respectively. Secondly, the bias $B\big(\hat{\overline{Y}}_{dt}^{tmq}\big)$ can be estimated by
\begin{align}\label{ebiastmq}
    \hat B_{dt}=\frac{1}{N_{dt}}\bigg(\sum_{g=1}^D\sum_{i\in\T_t}\sum_{j \in s_{gi}}a_{dt,gij}\xx_{gij}^\prime\hat\bbeta_{\psi}(\hat\theta_g,\ww_t)
    -\sum_{j \in U_{dt}} \xx_{dtj}^\prime\hat\bbeta_{\psi}(\hat\theta_d,\ww_t)\bigg),
\end{align}
where the terms $a_{dt,gij}$'s are the components of $\aa_{dt}$ appearing in the definition of the prediction error of $\hat{\overline{Y}}_{dt}^{tmq}$.
All in all, we have four estimators of $MSE(\hat{\overline{Y}}_{dt}^{tmq})$. They are
$\ mse_{11,dt}^{tmq}=\hat V_{11,dt}^{tmq}+\hat B_{dt}^2$,
$\ mse_{12,dt}^{tmq}=\hat V_{12,dt}^{tmq}+\hat B_{dt}^2$,
$\ mse_{21,dt}^{tmq}=\hat V_{21,dt}^{tmq}+\hat B_{dt}^2\ $
and $\ mse_{22,dt}^{tmq}=\hat V_{22,dt}^{tmq}+\hat B_{dt}^2$.
By taking squared roots of the above estimators, we define $rmse_{11,dt}^{tmq}$, $rmse_{12,dt}^{tmq}$, $rmse_{21,dt}^{tmq}$ and $rmse_{22,dt}^{tmq}$, respectively.
They are estimators of $RMSE(\hat{\overline{Y}}_{dt}^{tmq})=\big(MSE(\hat{\overline{Y}}_{dt}^{tmq})\big)^{1/2}$.

{\subsection{ MSE estimation for BTMQ predictors}\label{MSE.BTMQ}

As a follow-up to Section \ref{MSE.TMQ}, the BTMQ predictor of $\overline{Y}_{dt}$ can be written as
\begin{eqnarray*}
    \hat{\overline{Y}}_{dt}^{btmq}
    &=&\frac{1}{N_{dt}}
    \bigg\{\sum_{j\in s_{dt}}y_{dtj}+\sum_{j\in r_{dt}}\xx_{dtj}^\prime\hat\bbeta_\psi\big(\hat\theta_d,\ww_t\big)\bigg\}
    +\frac{1}{n_{dt}}\Big(1-\frac{n_{dt}}{N_{dt}}\Big)\hat
    B_{dt}^{btmq},
\end{eqnarray*}
where $\hat
B_{dt}^{btmq}$ has been defined in \eqref{btmq}. 
We now define
\begin{eqnarray*}
    \overline{Y}_{dt}^{btmq}&=&\frac{1}{N_{dt}}
    \bigg\{\sum_{j\in s_{dt}}y_{dtj}+\sum_{j\in r_{dt}}\xx_{dtj}^\prime\bbeta_\psi(\theta_d,\ww_t)\bigg\}
    +\frac{1}{n_{dt}}\Big(1-\frac{n_{dt}}{N_{dt}}\Big){B}_{dt}^{btmq},\,
    \\
    \tilde{\overline{Y}}_{dt}^{btmq}&=&\frac{1}{N_{dt}}
    \bigg\{\sum_{j\in s_{dt}}y_{dtj}+\sum_{j\in r_{dt}}\xx_{dtj}^\prime\hat\bbeta_\psi(\theta_d,\ww_t)\bigg\}
    +\frac{1}{n_{dt}}\Big(1-\frac{n_{dt}}{N_{dt}}\Big)\tilde{B}_{dt}^{btmq},
    \\
    {B}_{dt}^{btmq}&=&\sum_{j\in s_{dt}}\sigma_{\theta_dt}\phi(u_{\psi,dtj}),\quad
    \tilde{B}_{dt}^{btmq}=\sum_{j\in s_{dt}}\sigma_{\theta_dt}\phi(\tilde{u}_{\psi,dtj}),
\end{eqnarray*}
where $u_{\psi,dtj}$ and $\tilde{u}_{\psi,dtj}$ have been defined in
\eqref{TMQ3resid1} and \eqref{TMQ3resid}, respectively.
It holds that:
\[MSE\big(\hat{\overline{Y}}_{dt}^{btmq}\big)=
E\big[\big(\hat{\overline{Y}}_{dt}^{btmq}-\overline{Y}_{dt}\big)^2\big]=\mbox{var}\big(\hat{\overline{Y}}_{dt}^{btmq}-\overline{Y}_{dt}\big)+(E\big[\hat{\overline{Y}}_{dt}^{btmq}-\overline{Y}_{dt}\big])^2.\]

Under the assumptions listed in Section A.1,  the prediction error of $\hat{\overline{Y}}_{dt}^{btmq}$ is:
\begin{equation}\label{pred.errbtmq}
   \hat{\overline{Y}}_{dt}^{btmq}-\overline{Y}_{dt}=
\big(\hat{\overline{Y}}_{dt}^{btmq}-\tilde{\overline{Y}}_{dt}^{btmq}\big)
+\big(\tilde{\overline{Y}}_{dt}^{btmq}-\overline{Y}_{dt}^{btmq}\big)
+\big(\overline{Y}_{dt}^{btmq}-\overline{Y}_{dt}\big)=\overline{Y}_{dt}^{(3)}+\overline{Y}_{dt}^{(2)}+\overline{Y}_{dt}^{(1)}, 
\end{equation}

To simplify the notation we define the variance and the expected prediction difference of $\overline{Y}_{dt}^{(k)}$, for $k=1,2,3$, as $V_{dt}^{(k)}$, and $E_{dt}^{(k)}$, respectively. Then we write
\begin{align*}
&\mbox{var}\big(\hat{\overline{Y}}_{dt}^{btmq}-\overline{Y}_{dt}\big)=
    V_{dt}^{(1)}+V_{dt}^{(2)}+V_{dt}^{(3)}+2\mbox{cov}\big(\overline{Y}_{dt}^{(3)},\overline{Y}_{dt}^{(2)}\big)
    +2\mbox{cov}\big(\overline{Y}_{dt}^{(3)},\overline{Y}_{dt}^{(1)}\big)+2\mbox{cov}\big(\overline{Y}_{dt}^{(2)},\overline{Y}_{dt}^{(1)}\big),
    \\&E\big[\hat{\overline{Y}}_{dt}^{btmq}-\overline{Y}_{dt}\big]=
    E_{dt}^{(1)}+E_{dt}^{(2)}+E_{dt}^{(3)}=E_{dt}^{(1)}+o(1).
\end{align*}
The covariances are $\mbox{cov}\big(\overline{Y}_{dt}^{(3)},\overline{Y}_{dt}^{(2)}\big)=E\big[\overline{Y}_{dt}^{(3)}\overline{Y}_{dt}^{(2)}\big]+o(1)$, $\mbox{cov}\big(\overline{Y}_{dt}^{(3)},\overline{Y}_{dt}^{(1)}\big)=
E\big[\overline{Y}_{dt}^{(3)}\overline{Y}_{dt}^{(1)}]+o(1)$ and
$\mbox{cov}\big(\overline{Y}_{dt}^{(2)},\overline{Y}_{dt}^{(1)}\big)=
E\big[\overline{Y}_{dt}^{(2)}\overline{Y}_{dt}^{(1)}]+o(1)$
Under regularity assumptions, the expectations of the previous cross-products should be $o(1)$. We also define the set  $\check{\G}_{dt}=\Big\{j\in s_{dt}: \big|\check{u}_{\psi,dtj}\big| <c_{\phi}\Big\}$.
The following theorem summarizes the final approximation of $MSE\big(\hat{\overline{Y}}_{dt}^{btmq}\big)$.
\begin{theorem}\label{first.order.MSE}
    \normalfont Under assumptions ($\Phi$1), (N1)-(N2), (Q1), (A1)-(A9), (B1)-(B4), (C1)-(C2), (D1)-(D2) and (E1)-(E3), listed in
    Section A.1 of Supplementary Material, a first-order approximation of $MSE(\hat{\overline{Y}}_{dt}^{btmq})$ is
    \begin{align*}
        \nonumber
        MS&E\big(\hat{\overline{Y}}_{dt}^{btmq}\big)= V_{dt}^{(1)}+V_{dt}^{(2)}+V_{dt}^{(3)}+E_{dt}^{(1)2}+o(1)=\\&=\sum_{j\in U_{dt}}\Big(\Big(1-\frac{n_{dt}}{N_{dt}}\Big)^2\frac{1}{n_{dt}^2}I_{\G_{dt}}(j)+ \frac{1}{N_{dt}^2}I_{r_{dt}}(j)\Big)
        \Big(\xx_{dtj}^\prime\kappa_\psi(\theta_{d},\ww_t)\Big)^2\xi_{dt}^2
        \\
        \nonumber
        &+\sum_{j\in U_{dt}}\Big(\Big(1-\frac{n_{dt}}{N_{dt}}\Big)^2\frac{1}{n_{dt}^2}I_{\tilde{\mathcal H}_{dt}}(j)+\frac{1}{N_{dt}^2}I_{r_{dt}}(j)\Big)
        \xx_{dtj}^\prime\text{var}\big(\hat\bbeta_{\psi}(\theta_d,\ww_t)\big)\xx_{dtj}
        \\
        \nonumber
        &+\sum_{j\in U_{dt}}    \Big(\Big(1-\frac{n_{dt}}{N_{dt}}\Big)^2\frac{1}{n_{dt}^2}I_{\hat{\mathcal H}_{dt}}(j)+\frac{1}{N_{dt}^2}I_{r_{dt}}(j)\Big)
        \xx_{dtj}^\prime\text{var}\big(\hat\bbeta_{\psi}(\hat\theta_d,\ww_t)\big)\xx_{dtj}
        \\
        \nonumber
        &+\Big(1-\frac{n_{dt}}{N_{dt}}\Big)^2\bigg( \frac{c_\phi}{n_{dt}}\sum_{j\in s_{dt}-\G_{dt}}E\big[\text{sgn}(e_{\psi,dtj})\big]
        +\frac{\sigma_{\theta_dt}}{n_{dt}}\sum_{j\in\G_{dt}}E[R_{dtj}]\bigg) ^2+o(1).
    \end{align*}
\end{theorem}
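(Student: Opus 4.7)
The proof begins from decomposition (\ref{pred.errbtmq}) together with the expressions for the variance and expectation of $\hat{\overline{Y}}_{dt}^{btmq}-\overline{Y}_{dt}$ already displayed before the statement. Since the cross-covariances $\mbox{cov}(\overline{Y}_{dt}^{(k)},\overline{Y}_{dt}^{(k')})$ for $k\neq k'$ and the means $E_{dt}^{(2)},E_{dt}^{(3)}$ are $o(1)$ under the stated regularity conditions, expanding the squared error gives
\[MSE(\hat{\overline{Y}}_{dt}^{btmq})=V_{dt}^{(1)}+V_{dt}^{(2)}+V_{dt}^{(3)}+(E_{dt}^{(1)})^2+o(1),\]
so the task reduces to evaluating the four leading quantities in closed form. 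The control of the cross-covariances rests on the orthogonality induced by the normal equation (\ref{hatbetaTMquantile}), which at the true parameter makes $E[\psi_q(e_{\psi,dij}(q,\ww_t),\sigma_{qt})\xx_{dij}]=0$, combined with the first-order consistency of both $\hat\bbeta_\psi(\theta_d,\ww_t)$ and $\hat\theta_d$ under assumptions (B1)-(E3).

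The three variance terms are then obtained by substituting successive linearizations into the definitions of $\overline{Y}_{dt}^{(k)}$. Since $\phi$ is a Huber-type influence function with break-point $c_\phi$, $\phi(u)=u$ on $|u|<c_\phi$ and $\phi(u)=c_\phi\,\text{sgn}(u)$ off this set; consequently, each sum over $s_{dt}$ arising in $\overline{Y}_{dt}^{(k)}$ splits across one of the indicator sets $I_{\G_{dt}},\,I_{\tilde{\mathcal H}_{dt}},\,I_{\hat{\mathcal H}_{dt}}$, depending on whether the relevant argument is $u_{\psi,dtj}$, $\tilde u_{\psi,dtj}$ or $\hat u_{\psi,dtj}$, while the non-sample sums over $r_{dt}$ always appear with weight $1/N_{dt}^2$. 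For $V_{dt}^{(2)}$, the sandwich-type linearization of $\hat\bbeta_\psi(\theta_d,\ww_t)-\bbeta_\psi(\theta_d,\ww_t)$ detailed in Section A.1 produces the quadratic form $\xx_{dtj}^\prime\text{var}(\hat\bbeta_\psi(\theta_d,\ww_t))\xx_{dtj}$ on $I_{\tilde{\mathcal H}_{dt}}\cup r_{dt}$. For $V_{dt}^{(3)}$, one first Taylor-expands in $q$ around $\theta_d$ to get $\hat\bbeta_\psi(\hat\theta_d,\ww_t)-\hat\bbeta_\psi(\theta_d,\ww_t)\approx\kappa_\psi(\theta_d,\ww_t)(\hat\theta_d-\theta_d)$ and then squares, producing the $(\xx_{dtj}^\prime\kappa_\psi)^2\xi_{dt}^2$ term with $\xi_{dt}^2=\text{var}(\hat\theta_d)$ on $I_{\G_{dt}}\cup r_{dt}$. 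The remaining quadratic form in $\text{var}(\hat\bbeta_\psi(\hat\theta_d,\ww_t))$ captures, via the same sandwich argument applied at $\hat\theta_d$, the residual intrinsic variance inherited from $V_{dt}^{(1)}$ once the bias-correction $\phi(\hat u_{\psi,dtj})$ is expressed in terms of the plug-in coefficients on $I_{\hat{\mathcal H}_{dt}}\cup r_{dt}$.

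For the squared bias, taking expectations in $\overline{Y}_{dt}^{(1)}$ and using that the MQ equation at $q=\theta_d$ enforces $E[\psi_{\theta_d}(e_{\psi,dtj}(\theta_d,\ww_t),\sigma_{\theta_dt})\mid\xx_{dtj}]=0$, the non-sample residual sum contributes negligibly on average, so the bias is driven by the bias-correction sum $\frac{1}{n_{dt}}(1-n_{dt}/N_{dt})\sum_{s_{dt}}\sigma_{\theta_dt}\phi(u_{\psi,dtj})$; splitting $s_{dt}$ into $\G_{dt}$ (where $\phi(u)=u$, giving the $\sigma_{\theta_dt}E[R_{dtj}]$ summands) and $s_{dt}\setminus\G_{dt}$ (where $\phi(u)=c_\phi\,\text{sgn}(u)$, giving the $c_\phi E[\text{sgn}(e_{\psi,dtj})]$ summands) reproduces the announced expression. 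The principal technical obstacle throughout is handling the piecewise-linear $\phi$ simultaneously with the implicit-function expansions of $\hat\bbeta_\psi(\cdot,\ww_t)$ and $\hat\theta_d$: the kink of $\phi$ at $\pm c_\phi$ forces every sample sum to be partitioned over the indicator sets and requires that boundary-crossing contributions be absorbed into the $o(1)$ remainder using assumption $(\Phi 1)$, while the Lagrange-type remainders in the estimating-equation linearizations must be shown to be of smaller order than each $V_{dt}^{(k)}$ using the moment, continuity and positive-definiteness conditions (A9), (B1)-(B4), (C1)-(C2), (D1)-(D2) and (E1)-(E3).
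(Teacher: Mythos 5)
Your overall architecture coincides with the paper's: the three-way split $\hat{\overline{Y}}_{dt}^{btmq}-\overline{Y}_{dt}=\overline{Y}_{dt}^{(3)}+\overline{Y}_{dt}^{(2)}+\overline{Y}_{dt}^{(1)}$, cross-covariances and $E_{dt}^{(2)},E_{dt}^{(3)}$ pushed into the $o(1)$ remainder, sample sums partitioned over the indicator sets created by the Huber kink, and the bias carried entirely by $E_{dt}^{(1)}$. Your treatment of $V_{dt}^{(2)}$ and of the squared-bias term (splitting $s_{dt}$ into $\G_{dt}$ and its complement, with the $E[e_{\psi,dtj}]$ contributions cancelling between sample and non-sample — the paper does this via assumption (B3), not via the estimating equation) matches the paper.

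There is, however, a genuine gap in your derivation of the first variance term. You obtain $\big(\xx_{dtj}^\prime\kappa_\psi(\theta_d,\ww_t)\big)^2\xi_{dt}^2$ inside $V_{dt}^{(3)}$ by expanding $\hat\bbeta_\psi(\hat\theta_d,\ww_t)-\hat\bbeta_\psi(\theta_d,\ww_t)\approx\kappa_\psi(\theta_d,\ww_t)(\hat\theta_d-\theta_d)$ and identifying $\xi_{dt}^2$ with $\text{var}(\hat\theta_d)$. That is not what this term is, and the route would not produce the stated formula: in the paper's framework $\theta_d$ and $\hat\theta_d$ are treated as known throughout the approximation (their sampling variability is deliberately excluded), and $\xi_{dt}^2$ is, by assumption (Q1), the common variance of the \emph{unit-level} MQ coefficients $q_{dtj}$ — a quantity of order one, not the $O(n_d^{-1})$ variance of the domain mean $\hat\theta_d$. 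The term actually comes from $V_{dt}^{(1)}$: a Taylor expansion of $\bbeta_\psi(q_{dtj},\ww_t)$ around $q=\theta_d$ gives $e_{\psi,dtj}=\xx_{dtj}^\prime\kappa_\psi(\theta_d,\ww_t)(q_{dtj}-\theta_d)+\xx_{dtj}^\prime r_{dtj}$, hence $\text{var}(e_{\psi,dtj})\approx\big(\xx_{dtj}^\prime\kappa_\psi(\theta_d,\ww_t)\big)^2\xi_{dt}^2$; this variance enters through the non-sampled units in $r_{dt}$ (weight $N_{dt}^{-2}$) and through the linear part of $\phi(u_{\psi,dtj})$ on $\G_{dt}$ (weight $(1-n_{dt}/N_{dt})^2n_{dt}^{-2}$), which is precisely why the indicator is $I_{\G_{dt}}$ — defined by the \emph{true} standardized errors — rather than $I_{\hat{\mathcal H}_{dt}}$, as your attribution to $\overline{Y}_{dt}^{(3)}$ would require. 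Under your reading, the irreducible population-level variability carried by $\sum_{j\in r_{dt}}e_{\psi,dtj}$ in $\overline{Y}_{dt}^{(1)}$ is never accounted for; it cannot be ``inherited'' into $\xx_{dtj}^\prime\text{var}\big(\hat\bbeta_\psi(\hat\theta_d,\ww_t)\big)\xx_{dtj}$, which reflects only coefficient-estimation error, so the decomposition would not close. In the paper $V_{dt}^{(3)}$ contributes only the third displayed term, via $\hat e_{\psi,dtj}-\tilde e_{\psi,dtj}=\xx_{dtj}^\prime\big(\hat\bbeta_\psi(\theta_d,\ww_t)-\hat\bbeta_\psi(\hat\theta_d,\ww_t)\big)$ with variance $\xx_{dtj}^\prime\text{var}\big(\hat\bbeta_\psi(\hat\theta_d,\ww_t)\big)\xx_{dtj}$.
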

\begin{proof}
    The proof is given in Section A of the Supplementary Material.
\end{proof}
An analytical estimator of $MSE\big(\hat{\overline{Y}}_{dt}^{btmq}\big)$ is given by
\begin{align}\label{mse5btmq}
    \nonumber
    mse&_{3,dt}^{btmq}=
    \Big(1-\frac{n_{dt}}{N_{dt}}\Big)^2\frac{\hat\xi_{dt}^2}{n_{dt}^2}\Big( \frac{1}{\text{card}(\hat\G_{dt})}\sum_{j\in \hat\G_{dt}}{(\hat q_{dtj}-\hat\theta_d)^{2}}\Big) ^{-1}\sum_{j\in \hat\G_{dt}}\hat e_{\psi,dtj}^2+
    \\
    \nonumber
    &+\frac{N_{dt}-n_{dt}}{n_{dt}}\frac{\hat\xi_{dt}^2}{N_{dt}^2}\Big( \frac{1}{n_{dt}}\sum_{j\in s_{dt}}{(\hat q_{dtj}-\hat\theta_d)^{2}}\Big) ^{-1}\sum_{j\in s_{dt}}\hat e_{\psi,dtj}^2
    \\
    \nonumber
    &+2\Big(1-\frac{n_{dt}}{N_{dt}}\Big)^2\frac{1}{n_{dt}^2}\sum_{j\in \hat{\mathcal G}_{dt}}\xx_{dtj}^\prime \hat V_{\beta}\xx_{dtj}
    +\frac{1}{N_{dt}^2}\sum_{j\in r_{dt}}\xx_{dtj}^\prime \hat V_{\beta}\xx_{dtj}
    \\
    &+
    \Big(1-\frac{n_{dt}}{N_{dt}}\Big)^2\frac{1}{n_{dt}^2}\bigg( c_\phi\sum_{j\in s_{dt}-\hat\G_{dt}}\text{sgn}(\hat e_{\psi,dtj})
    +\frac{1}{2\sigma_{\theta_dt}}\sum_{j\in\hat\G_{dt}}\hat e_{\psi,dtj}^2\bigg) ^2,
\end{align}
where $\hat V_\beta$ is given in \eqref{var.beta},
$\hat{\G}_{dt}=\Big\{j\in s_{dt}: \big|\hat{u}_{\psi,dtj}\big| <c_{\phi}\Big\}$, expression $\text{card}(B)$ denotes the cardinal of a set $B$ and $\hat\xi_{dt}^2$ estimates $\text{var}(q_{dtj})$, e.g. by
$$
\hat\xi_{dt}^2=\widehat{\text{var}}(q_{dtj})=\frac{1}{n_{dt}-1}\sum_{j\in s_{dt}}(\hat q_{dtj}-\hat{\bar{q}}_{dt.})^2,\quad
\hat{\bar{q}}_{dt.}=\frac{1}{n_{dt}}\sum_{j\in s_{dt}}\hat q_{dtj}.
$$
We have denoted $mse_{3,dt}^{btmq}$ because estimators $mse_{1,dt}^{btmq}$ and $mse_{2,dt}^{btmq}$ will be proposed below, in parallel to Section \ref{MSE.TMQ}.
Having said that, an estimator of $RMSE\big(\hat{\overline{Y}}_{dt}^{btmq}\big)=\big(MSE\big(\hat{\overline{Y}}_{dt}^{btmq}\big)\big)^{1/2}$ is
$rmse_{3,dt}^{btmq}=\big(mse_{3,dt}^{btmq}\big)^{1/2}$.

On the other hand, two alternative estimators of $MSE\big(\hat{\overline{Y}}_{dt}^{btmq}\big)$ could be also derived. From (\ref{TMQ3psresid}) we have
$\hat u_{\psi, dtj}-\check u_{\psi, dtj}=\sigma_{\theta_d t}^{-1}\xx_{dtj}'(\bbeta_{\psi}(\hat\theta_d,\ww_t)-\hat\bbeta_{\psi}(\hat\theta_d,\ww_t))=\hat u_{\psi, dtj}(\hat\theta_d)$.
A Taylor series expansion of $\phi(\hat u_{\psi, dtj})$ around $\phi(\check u_{\psi, dtj})$ yields to
$$\phi(\hat u_{\psi, dtj})\approx\phi(\check u_{\psi, dtj})+\dot\phi(\check u_{\psi, dtj})\frac{\xx_{dtj}'(\bbeta_{\psi}(\hat\theta_d,\ww_t)-\hat\bbeta_{\psi}(\hat\theta_d,\ww_t))}{\sigma_{\theta_d t}}, \ j=1\dots, n_{dt}.$$
Let us define:
$$
\bar{\xx}_{\check{\G}_{dt}}'=\frac{1}{\text{card}(\check{\G}_{dt})}\sum_{j\in \check{\G}_{dt}}\xx_{dtj}',\quad
\bar{\xx}_{sdt}'=\frac{1}{n_{dt}}\sum_{j\in s_{dt}}x_{dtj}', \quad \bar{e}_{rdt}=\bar{y}_{rdt}-\bar{\xx}_{rdt}^\prime\bbeta_\psi(\hat\theta_d,\ww_t).
$$
If $\phi$ is the Huber function \eqref{psiHuber}, $\dot\phi(\check u_{\psi, dtj})= 1$ if $j\in \check{\G}_{dt}$ and $\dot\phi(\check u_{\psi, dtj})= 0$, otherwise. Following \cite{Chambers14}, it holds that
\begin{align*}
    \frac{\sigma_{\theta_d t}}{n_{dt}} \sum_{j\in s_{dt}}\phi&(\hat u_{\psi, dtj})
    \approx\frac{\sigma_{\theta_d t}}{n_{dt}}\sum_{j\in s_{dt}} \phi(\check u_{\psi, dtj})+\bar{\xx}_{sdt}'(\bbeta_{\psi}(\hat\theta_d,\ww_t)-\hat\bbeta_{\psi}(\hat\theta_d,\ww_t)).
\end{align*}

Therefore, the prediction error of $\hat{\overline{Y}}_{dt}^{btmq}$ can be approximated as
\begin{align*}
    &\hat{\overline{Y}}_{dt}^{btmq}-\overline{Y}_{dt}=\frac{1}{N_{dt}}\bigg(\sum_{j\in r_{dt}}\xx'_{dtj}\hat\bbeta_\psi(\hat\theta_d,\ww_t)- \sum_{j\in r_{dt}} y_{dtj}\bigg)+\frac{1}{n_{dt}}\Big(1-\frac{n_{dt}}{N_{dt}}\Big)\hat{B}_{dt}^{btmq}
    \\
    &\approx\Big(1-\frac{n_{dt}}{N_{dt}}\Big)\bigg( \frac{\sigma_{\theta_d t}}{n_{dt}} \sum_{j\in s_{dt}}\phi\big(\check u_{\psi, dtj}\big)+
    \bar{\xx}_{sdt}^\prime\bbeta_{\psi}(\hat\theta_d,\ww_t)+(\bar{\xx}_{rdt}-\bar{\xx}_{sdt})^\prime\hat\bbeta_{\psi}(\hat\theta_d,\ww_t)-\bar{y}_{rdt}\bigg)
    \\
    &=\Big(1-\frac{n_{dt}}{N_{dt}}\Big)\bigg(
    \frac{\sigma_{\theta_d t}}{n_{dt}} \sum_{j\in s_{dt}}\phi\big(\check u_{\psi, dtj}\big)
    +(\bar{\xx}_{rdt}-\bar{\xx}_{sdt})^\prime(\hat\bbeta_{\psi}(\hat\theta_d,\ww_t)-\bbeta_{\psi}(\hat\theta_d,\ww_t))-\bar{e}_{rdt}\bigg).
\end{align*}

An estimator of the variance $V_{dt}^{btmq}=\text{var}\big(\hat{\overline{Y}}_{dt}^{btmq}-\overline{Y}_{dt}\big)$  is
\begin{align}\label{hatVdt}
    \nonumber
    \hat{V}_{dt}^{btmq} &= \Big( 1-\frac{n_{dt}}{N_{dt}}\Big)^2\bigg[
    \Big(\frac{\sigma_{\theta_d t}}{n_{dt}}\Big) ^2 \underset{j\in s_{dt}}\sum\phi^2\big(\hat{u}_{\psi,dtj}\big)
    +(\bar{\xx}_{rdt}-\bar{\xx}_{sdt})^\prime
    \hat{V}_\beta(\bar{\xx}_{rdt}-\bar{\xx}_{sdt})+\widehat{\text{var}}(\bar{e}_{rdt})\bigg],
\end{align}
where the estimation of the variance matrix $\hat V_\beta$ is given in \eqref{var.beta} and $\text{var}(\bar{e}_{r_{dt}})$ can be estimated using $\widehat{\text{var}}_1\big(\bar{y}_{rdt}\big)$ or $\widehat{\text{var}}_2\big(\bar{y}_{rdt}\big)$, given in \eqref{var.opt}. Depending on which formula we choose, we obtain the estimators $\hat V_{21,dt}^{btmq}$ and $\hat V_{22,dt}^{btmq}$
of the variance $V_{dt}^{btmq}$, respectively.
Against this background, note that the bias correction in \eqref{btmq} is controlled by $c_\phi$, which can be either very large or close to zero. Consequently, it is not advisable to ignore the presence of a potential bias, as it may (and perhaps should) still persist. Using the bias estimator $\hat{B}_{dt}=\hat{B}\big(\hat{\overline{Y}}_{dt}^{tmq}\big)$, given in \eqref{ebiastmq}, two estimators of $MSE(\hat{\overline{Y}}_{dt}^{btmq})$ are 
\begin{align} 
    mse_{\ell,dt}^{btmq}&=\hat V_{2\ell,dt}^{btmq}+\Big(\hat{B}_{dt}+\frac{1}{n_{dt}}\Big(1-\frac{n_{dt}}{N_{dt}}\Big)\hat{B}_{dt}^{btmq}\Big) ^2, \quad \ell=1,2.
\end{align}

By taking squared roots of the above estimators, we have
$rmse_{1,dt}^{btmq}$ and $rmse_{2,dt}^{btmq}$, respectively.
They are estimators of $RMSE(\hat{\overline{Y}}_{dt}^{btmq})=\big(MSE(\hat{\overline{Y}}_{dt}^{btmq})\big)^{1/2}$.
}

\subsection{Selection of the robustness parameter}\label{sec.c}
The robustness parameter plays the main role in the improvements of the BTMQ predictor over the TMQ predictor.
To the best of our knowledge, its selection has been done routinely, in many cases by setting $c_{\phi}=3$ \citep{Tzavidis10,salvati2012,Chambers14} as a reference value,
but data-driven optimization has never been investigated. Although the results using $c_{\phi}=3$ are quite encouraging for bias correction \citep{Chambers14},
it seems logical that avoiding the subjective selection of $c_{\phi}\geq 0$ would improve the performance of the BTMQ predictors.

To have a balance between bias and variance, we are looking for the values of $c_\phi\geq 0$ that minimize $MSE(\hat{\overline{Y}}_{dt}^{btmq})$.
In practice, this involves solving the minimization problems
\begin{equation}\label{c.dt}
    \hat{c}_{\phi, dt}=\text{argmin}_{c_\phi\geq 0}\, mse_{dt}^{btmq}(c_\phi), \quad d=1,\dots, D, \ t=1,\dots, T,
\end{equation}
where $mse_{dt}^{btmq}$ is an estimation of $MSE(\hat{\overline{Y}}_{dt}^{btmq})$.

The following theorem proves the existence and uniqueness of solutions of the minimization problems (\ref{c.dt}) for $mse_{dt}^{btmq}\in\{mse_{1,dt}^{btmq}, mse_{2,dt}^{btmq}\}$.

\begin{theorem}\label{exis.uni} \normalfont
    Let $\phi$ be the Huber function \eqref{psiHuber}. Let $d=1,\dots,D$, $t=1,\dots, T$. For $mse_{dt}^{btmq}\in\{mse_{1,dt}^{btmq}, mse_{2,dt}^{btmq}\}$,
    it exists an unique solution $\hat c_{\phi,dt}$ of the minimization problem \eqref{c.dt} belonging to the interval $\big[0,\underset{j\in s_{dt}}{\max}|\hat{u}_{\psi,dtj}| \big]$
    and its explicit expression is calculable.
\end{theorem}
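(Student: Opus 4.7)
The plan is to exploit the piecewise-polynomial structure that the Huber $\phi$ induces on $c_\phi \mapsto mse_{\ell,dt}^{btmq}(c_\phi)$. Fixing $d,t$, for any $\hat u_{\psi,dtj}$ the definition of $\phi$ gives $\phi(\hat u_{\psi,dtj}) = c_\phi\,\text{sgn}(\hat u_{\psi,dtj})$ while $c_\phi < |\hat u_{\psi,dtj}|$ and $\phi(\hat u_{\psi,dtj}) = \hat u_{\psi,dtj}$ once $c_\phi \geq |\hat u_{\psi,dtj}|$, so $c_\phi \mapsto \phi(\hat u_{\psi,dtj})$ is continuous and piecewise linear with a single kink at $c_\phi = |\hat u_{\psi,dtj}|$. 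Since the dependence of $mse_{\ell,dt}^{btmq}$ on $c_\phi$ enters exclusively through $\hat B_{dt}^{btmq} = \sigma_{\theta_dt}\sum_{j \in s_{dt}}\phi(\hat u_{\psi,dtj})$ inside the squared-bias part and through $\sum_{j \in s_{dt}}\phi^2(\hat u_{\psi,dtj})$ inside $\hat V_{2\ell,dt}^{btmq}$, the target $c_\phi \mapsto mse_{\ell,dt}^{btmq}(c_\phi)$ is continuous and piecewise quadratic, with breakpoints at the ordered absolute standardized residuals $c_{(0)} = 0 \leq c_{(1)} \leq \dots \leq c_{(n_{dt})} = \max_{j \in s_{dt}} |\hat u_{\psi,dtj}|$.

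Next, on the sub-interval $[c_{(k-1)}, c_{(k)}]$ the set $J_k = \hat{\G}_{dt}$ is constant, equal to the indices of the $k-1$ residuals of smallest absolute value, so $\sum_j \phi(\hat u_{\psi,dtj})$ is affine in $c_\phi$ and $\sum_j \phi^2(\hat u_{\psi,dtj}) = \sum_{j \in J_k} \hat u_{\psi,dtj}^2 + (n_{dt} - k + 1) c_\phi^2$. Substituting these into $\hat B_{dt}^{btmq}$ and $\hat V_{2\ell,dt}^{btmq}$, the restriction of $mse_{\ell,dt}^{btmq}$ to this piece becomes an explicit quadratic $\gamma_k c_\phi^2 + \beta_k c_\phi + \alpha_k$ whose coefficients are computable in closed form from the data and from the fixed ingredients $\sigma_{\theta_dt}$, $\hat V_\beta$, $\hat B_{dt}$, $\widehat{\text{var}}_\ell(\bar e_{rdt})$, $n_{dt}$, $N_{dt}$. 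The leading coefficient splits as the strictly positive term $(1 - n_{dt}/N_{dt})^2 (\sigma_{\theta_dt}/n_{dt})^2 (n_{dt} - k + 1)$ from the variance contribution plus a non-negative squared term from the bias contribution, so $\gamma_k > 0$ for every $k \leq n_{dt}$ and each piece is strictly convex with unique vertex $c_\phi^{(k)} = -\beta_k/(2\gamma_k)$.

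Combining the pieces, strict convexity on each closed sub-interval yields the unique per-piece minimizer $\tilde c_\phi^{(k)} = \max\{c_{(k-1)}, \min\{c_{(k)}, c_\phi^{(k)}\}\}$, and continuity of $mse_{\ell,dt}^{btmq}$ on the compact interval $[0, c_{(n_{dt})}]$ yields a global minimizer. The explicit expression is $\hat c_{\phi,dt} = \tilde c_\phi^{(k^*)}$ with $k^* = \arg\min_{1 \leq k \leq n_{dt}} mse_{\ell,dt}^{btmq}(\tilde c_\phi^{(k)})$, which establishes existence and calculability. The main obstacle is genuine uniqueness within the interval: the function need not be globally convex, since the right derivative of $\sum_j \phi(\hat u_{\psi,dtj})$ jumps by $-\text{sgn}(\hat u_{(k)})$ at each breakpoint $c_{(k)}$ and the chain rule applied to the squared-bias term can therefore produce sign-indefinite jumps in $(mse_{\ell,dt}^{btmq})'$. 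I would handle this by an explicit left/right derivative analysis at each $c_{(k)}$, exploiting that the variance piece always has a downward jump in its derivative (by $-2(1 - n_{dt}/N_{dt})^2 (\sigma_{\theta_dt}/n_{dt})^2 |\hat u_{(k)}|$), to show that at most one piece can contain a stationary point in its interior; together with the closed-form per-piece optimizer, this pins down $\hat c_{\phi,dt}$ uniquely.
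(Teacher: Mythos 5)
Your identification of the structure is exactly the paper's: the $c_\phi$-dependence enters only through $\sum_{j}\phi(\hat{u}_{\psi,dtj})$ in the squared-bias term and $\sum_{j}\phi^2(\hat{u}_{\psi,dtj})$ in the variance term, making the objective continuous and piecewise quadratic with breakpoints at the ordered $|\hat{u}_{\psi,dtj}|$, strictly convex on each piece because the leading coefficient is a strictly positive variance contribution plus a non-negative square. Your existence argument (Weierstrass on the compact interval, constancy beyond the largest breakpoint) and your recipe for calculability (project each per-piece vertex onto its sub-interval and take the argmin over pieces) also coincide in substance with the paper's, which likewise solves the per-piece first-order condition in closed form.

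The gap is in uniqueness, and your proposed repair runs in the wrong direction. The paper gets uniqueness by asserting that strict convexity on each open piece, together with continuity, yields strict convexity of the objective on all of $\big[0,\max_{j\in s_{dt}}|\hat{u}_{\psi,dtj}|\big]$, whence the minimizer is unique. You correctly observe that this requires controlling the one-sided derivatives at the breakpoints, but the condition you would need is that the derivative jumps \emph{upward} at each breakpoint, i.e.\ $A_{dt}'(c^-)\leq A_{dt}'(c^+)$, so that $A_{dt}'$ is globally non-decreasing and can cross zero on at most one piece. A \emph{downward} jump of the derivative — which is exactly what you compute for the variance contribution, $-2\big(1-\tfrac{n_{dt}}{N_{dt}}\big)^2\big(\tfrac{\sigma_{\theta_dt}}{n_{dt}}\big)^2|\hat{u}|_{(k)}$ — is precisely the obstruction: if $A_{dt}'$ is strictly increasing on each piece but drops at a breakpoint, it can pass from negative to positive inside one piece, fall back below zero at the kink, and cross zero again in the next piece, producing two interior local minima. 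So the downward jump cannot be "exploited" to show that at most one piece contains a stationary point; it is the very mechanism that would allow several. To complete the argument along the paper's lines you must instead show that the total jump (variance part plus the sign-indefinite bias part $-2K_2 B(c)\,\mathrm{sgn}(\hat{u}_{\psi,dtj})$) is non-negative at every breakpoint, or otherwise establish monotonicity of $A_{dt}'$ across kinks; as written, your sketch asserts the opposite inequality and the uniqueness claim does not follow.
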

\begin{proof}
    The proof is reported in Section B of the Supplementary Material.
\end{proof}
Considering that the solutions are adaptive robustness parameters, calculable e.g. by using grid search methods, we call them area-time specific robustness parameters.
\section{Model-based simulations}\label{sec.sim}
A Monte Carlo simulation study was conducted to evaluate the performance of the TMQ and BTMQ predictors in estimating small area means and to compare them with other predictors proposed in the literature. Additionally, we show how the selection of area-time specific robustness parameters can be employed as a diagnostic tool for outlier detection.
Finally, we evaluate the performance of the MSE estimators proposed in Sections \ref{MSE.TMQ} and \ref{MSE.BTMQ} using boxplots. 


The simulation design is inspired by \cite{Chambers14},  but here we have included a time reference to all observations. 
Population data are generated for $D=40$ areas and $T=10$ time periods. From each population, the sample data have been selected by simple random sampling without replacement within each intersection between areas and time periods. Population and sample sizes are fixed at $N_{dt} =100$ and $n_{dt} =5$, respectively. Population values for $x_{dtj}$ are generated as independently and identically distributed from a log-normal distribution with a mean of 1.0 and a standard deviation of 0.5 on the log-scale. The response variable is generated as
\[y_{dtj}=100+5x_{dtj}+u_{1,d}+u_{2,t}+e_{dtj}, \ j=1,\dots, N_{dt}\]
where the area and individual effects are generated independently for three scenarios:\\
$[0,0]$ -- absence of outliers, $u_{1,d}\backsim N(0,3)$ and $e_{dtj}\backsim N(0,6)$;\\
$[e,0]$ -- only individual level outliers, $u_{1,d}\backsim N(0,3)$ and $e_{dtj}\backsim\delta N(0,6)+(1-\delta)N(20,150)$, where $\delta$    is an independently generated Bernoulli random variable with $P(\delta=1)=0.97$;
$[e,u]$ -- outliers affect both area and individual effects, $u_{1,d}\backsim N(0,3)$ for areas $1\leq d\leq 36$, $u_{1,d}\backsim N(9,20)$ for areas $37\leq d\leq 40$, and $e_{dtj}\backsim\delta N(0,6)+(1-\delta)N(20,150)$.\\
Regarding time dependency effects, three alternatives are considered:\\
-- Case 1. $\uu_2=\underset{1\leq t\leq T}{\mbox{col}}(u_{2,t})\backsim N_T(\bm{0}, \Sigma_u)$, where $\Sigma_u=\sigma_u^2\Omega_T(\rho)$ and the correlation matrix is
    \begin{align*}
        \Omega_T(\rho)=\frac{1}{1-\rho^2}
        \renewcommand\arraystretch{0.6}\arraycolsep=1pt
        \begin{pmatrix}
            1 & \rho & \cdots&\rho^{T-1}\\
            \rho&1&\ddots&\rho^{T-2}\\
            \vdots&\ddots&\ddots&\vdots\\
            \rho^{T-1}&\rho^{T-2}&\rho&1
        \end{pmatrix}\in\mathcal M_{T\times T}, \quad \rho\in(-1,1).
    \end{align*}
    $\ \ $Case 1.1: $\sigma_u=1$, \ $\rho=0.2$; \ and Case 1.2: $\sigma_u=1$, \ $\rho=0.8$.\\
-- Case 2. Each $u_{2,t}$ is independently generated according to a stationary $AR(3)$ model with coefficients $\phi_1=0.4$, $\phi_2=0.3$, $\phi_3=0.25$ and variance $\sigma=1$.\\
Each setting is independently simulated $S=500$ times. {We compare seven predictors for estimating the small area means:} the H\'ajek estimator with equal weights, the $\text{EBLUP}_1$ based on a $\text{LMM}_1$ with independent area-level random intercepts, the $\text{EBLUP}_2$ based on a $\text{LMM}_2$ with independent area-level and time-level random intercepts; the MQ and BMQ predictors based on the MQ3 models; and the TMQ and BTMQ predictors based on the TWMQ models. {See Section C of the Supplementary Material for further details.}

{The indicators of performance for the predictors are the average absolute relative bias (ARBIAS) and the average relative root-MSE (RRMSE)}:

\begin{eqnarray*}
	\text{ARBIAS} &=& \frac{1}{DT}\sum_{d=1}^D\sum_{t=1}^T\left|{\dfrac{1}{S}\underset{s=1}{\overset{S}{\sum}}\left(\hat {\overline Y}_{dt}^{(s)}- {\overline Y}_{dt}^{(s)} \right) }{{\overline Y_{dt}^*}^{-1}}\right|, \quad \overline{Y}_{dt}^*=\frac{1}{S}\underset{s=1}{\overset{S}{\sum}}\overline Y_{dt}^{(s)} \\
\text{RRMSE} &=& \frac{1}{DT}\sum_{d=1}^D\sum_{t=1}^T\left({\dfrac{1}{S}\underset{s=1}{\overset{S}{\sum}}\left(\hat {\overline Y}_{dt}^{(s)}- {\overline Y}_{dt}^{(s)} \right)^2 }\right)^{1/2}{{\overline Y_{dt}^*}^{-1}}.
\end{eqnarray*}

\subsection{Performance of predictors of small area means}\label{sim1}

Table \ref{ModelSim1} presents the performance measures for Case 1.1, Case 1.2 and Case 2, the simulation scenarios [0,0], [e,0], [e,u] and the predictors mentioned above. 

\begin{table}[t!]
    \centering
    \renewcommand{\arraystretch}{0.5}
    \begin{tabular}{|>{\small}l|rr|rr|rr|}
        \cline{2-7}
        \multicolumn{1}{c|}{}&  \multicolumn{2}{c}{ [0,0] \quad 1--40} &  \multicolumn{2}{|c}{ [e,0] \quad 1--40} & \multicolumn{2}{|c|}{[e,u] \quad 1--40} \\
        \multicolumn{1}{c|}{}&ARBIAS&RRMSE&ARBIAS&RRMSE&ARBIAS&RRMSE\\
        \cline{2-7}
        \multicolumn{7}{l}{Case 1.1  $\quad\quad\  \uu_2\backsim N_T(\mathbf{0}, \Sigma_u)$: $\quad \Sigma_u=\sigma^2_u\Omega_T(\rho)$,  $\quad\sigma_u=1$, $\quad \rho=0.2$} \\[2pt]
        \hline
        $\hat{\overline{Y}}_{dt}^{hajek}$&0.117&3.223&0.133&3.549&0.132&3.523\\
        $\hat{\overline{Y}}_{dt}^{eblup_1}$&0.039&0.833&0.047&0.992&0.067&0.999\\
        $\hat{\overline{Y}}_{dt}^{mq}$&0.041&0.956&0.416&1.105&0.400&1.093\\
        $\hat{\overline{Y}}_{dt}^{bmq}$ &0.036&0.794&0.410&0.964&0.406&0.973\\
        $\hat{\overline{Y}}_{dt}^{eblup_2}$&0.027&0.655&{\bf 0.042}&0.940&{\bf 0.064}&0.947\\
        $\hat{\overline{Y}}_{dt}^{tmq}$&0.023&0.701&0.415&0.864&0.398&0.884\\
        $\hat{\overline{Y}}_{dt}^{btmq}$ &{\bf 0.019}&{\bf 0.553}&0.409&{\bf 0.752}&0.396&{\bf 0.795}\\
        \hline
        \multicolumn{7}{l}{Case 1.2  $\quad\quad\  \uu_2\backsim N_T(\mathbf{0}, \Sigma_u)$: $\quad \Sigma_u=\sigma^2_u\Omega_T(\rho)$,  $\quad\sigma_u=1$, $\quad \rho=0.8$} \\[2pt]
        \hline
        $\hat{\overline{Y}}_{dt}^{hajek}$&0.117&3.223&0.132&3.548&0.132&3.523\\
        $\hat{\overline{Y}}_{dt}^{eblup_1}$&0.039&0.954&0.046&1.096&0.066&1.101\\
        $\hat{\overline{Y}}_{dt}^{mq}$&0.041&1.066&0.414&1.200&0.399&1.191\\
        $\hat{\overline{Y}}_{dt}^{bmq}$ &0.035&0.897&0.407&1.051&0.403&1.053\\
        $\hat{\overline{Y}}_{dt}^{eblup_2}$&0.027&0.682&{\bf 0.042}&0.995&{\bf 0.063}&1.001\\
        $\hat{\overline{Y}}_{dt}^{tmq}$&0.025&0.693&0.413&0.878&0.397&0.897\\
        $\hat{\overline{Y}}_{dt}^{btmq}$&{\bf 0.020}&{\bf 0.539}&0.407&{\bf 0.760}&0.396&{\bf 0.800}\\
        \hline
        \multicolumn{7}{l}{Case 2  $\quad\quad\quad u_{2,t}\backsim AR(3)$: $\ \ \phi_1=0.4$, $\ \  \phi_2=0.3$, $\ \ \phi_3=0.25$, $\quad  \sigma=1$} \\[2pt]
        \hline
        $\hat{\overline{Y}}_{dt}^{hajek}$&0.115&3.238&0.122&3.571&0.121&3.545\\
        $\hat{\overline{Y}}_{dt}^{eblup_1}$&0.021&0.838&0.031&1.001& 0.064&1.007\\
        $\hat{\overline{Y}}_{dt}^{mq}$&0.027&0.962&0.414&1.114&0.396&1.101\\
        $\hat{\overline{Y}}_{dt}^{bmq}$&0.022&0.800&0.408&0.971&0.402&0.979\\
        $\hat{\overline{Y}}_{dt}^{eblup_2}$&0.018&0.652&{\bf 0.030}&0.945&{\bf 0.062}&0.951\\
        $\hat{\overline{Y}}_{dt}^{tmq}$&0.022&0.698&0.416&0.874&0.394&0.892\\
        $\hat{\overline{Y}}_{dt}^{btmq}$ &{\bf 0.017}&{\bf 0.548}&0.409&{\bf 0.758}&0.394&{\bf 0.800}\\
        \hline
    \end{tabular}
    \caption{ARBIAS and RRMSE values (in \%) of predictors of small area population means. The best values of performance are in bold.}\label{ModelSim1}
\end{table}

It is noteworthy that if time effects are normal, as defined in Case 1, the improvement of the TWMQ models is, as expected, subject to the definition of the covariance matrix. These models do not aim to capture random effects over time, but rather well-founded relationships of time dependency. For more details, see Section C of Supplementary Material. Provided that the normality assumptions are met and there are no outliers, the values of $\rho$ and $\sigma^2$ are crucial. If the variance is large enough, the $\text{LMM}_2$ provides acceptable ARBIAS and RRMSE values. If the correlation is higher and/or the variance is lower, the latter is not achieved, and the best options are the TWMQ models. The same is true for the presence of outliers.  It should be mentioned, however, that the $\text{LMM}_2$ is not the model that generates the target variables as it does not take into account the correlation structure of the time-level random intercepts. As far as we are aware, the available correlation structures in \texttt{R} refer to the model errors $e_{dtj}$, and not to the random effects $u_{2,t}$. In addition, if the time effects follow an AR model, as defined in Case 2, the TWMQ models are better in terms of RRMSE, although not in terms of ARBIAS if there are area-level outliers. In fact, they are expected to be robust in the presence of atypical data, inheriting the well-known robustness properties of the MQ regression. In that case, the ARBIAS of the $\text{EBLUP}_2$ is smaller and the main source of contribution to its RRMSE comes from the variance.

Regarding the TMQ and BTMQ predictors, a proper selection of $c_{\phi}$ ensures that the latter is much better than the former, correcting for bias but also mitigating variability. The same applies to the comparison of predictors MQ and BMQ. However, the flexibility of the TWMQ models leads to less bias correction without severely affecting the variance.
Rather than merely accepting a default value as adequate, area-time specific values of $c_\phi$ are far preferable and very promising, being able to outperform the $\text{EBLUP}_2$ based on the $\text{LMM}_2$. Moreover, an unexpected advantage is that the set $\big\{\hat c_{\phi, dt}: \ d=1,\dots,D, \ t=1,\dots, T\big\}$ can be used as a diagnostic tool for outlier detection (see Section \ref{sim2}) 
and, not least important, computational effort and execution times are not affected either if we use optimum values of $c_\phi$. The BTMQ predictor is a plug-in type predictor, so it is easy to program and fast to calculate.

\subsection{Reliability of robustness parameters for outlier detection}\label{sim2}

We illustrate how the area-time-specific robustness parameters for bias correction can be used for outlier detection. The discussion focuses on Case 1.1, although the conclusions are similar for the remaining two cases. Let us define the average value, across simulations, of $\hat c_{\phi,dt}$, given by $\overline{\hat c}_{\phi,dt}=\frac{1}{S}\overset{S}{\underset{s=1}\sum} \hat c_{\phi,dt}^{(s)}$.


First, some basic descriptive measures are calculated.
For Scenario [0,0], $\overline{\hat c}_{\phi,dt}$ ranges from 0.380 to 0.624, with a median value of 0.491. A similar pattern is observed for Scenario [e,0], where $\overline{\hat c}_{\phi,dt}$ ranges from 0.365 to 0.594, and the median is 0.474. In Scenario [e,u], for the non-atypical areas ($1 \leq d \leq 36$), $\overline{\hat c}_{\phi,dt}$ ranges from 0.338 to 0.492, with a median of 0.433. However, for the atypical areas ($37 \leq d \leq 40$), $\overline{\hat c}_{\phi,dt}$ ranges from 1.736 to 2.075, with a median of 1.887. Therefore, $\overline{\hat c}_{\phi,dt}$ is unaffected by the presence of unit-level outliers, but a very different picture emerges with the presence of area-level outliers.
This can be observed in Figure \ref{selec.c.dt}, where the peaks corresponding to areas $37\leq d \leq 40$ are quite revealing. Compared to Scenario [0,0], the presence of individual-level outliers stops the improvement in bias earlier, to avoid overfitting. If outliers are also at the area level, 
it is possible to give more strength to the bias correction without triggering the variance.

\begin{figure}[h]
    \centering
    \begin{subfigure}{1\textwidth}
        \centering
        \includegraphics[width=0.38\linewidth]{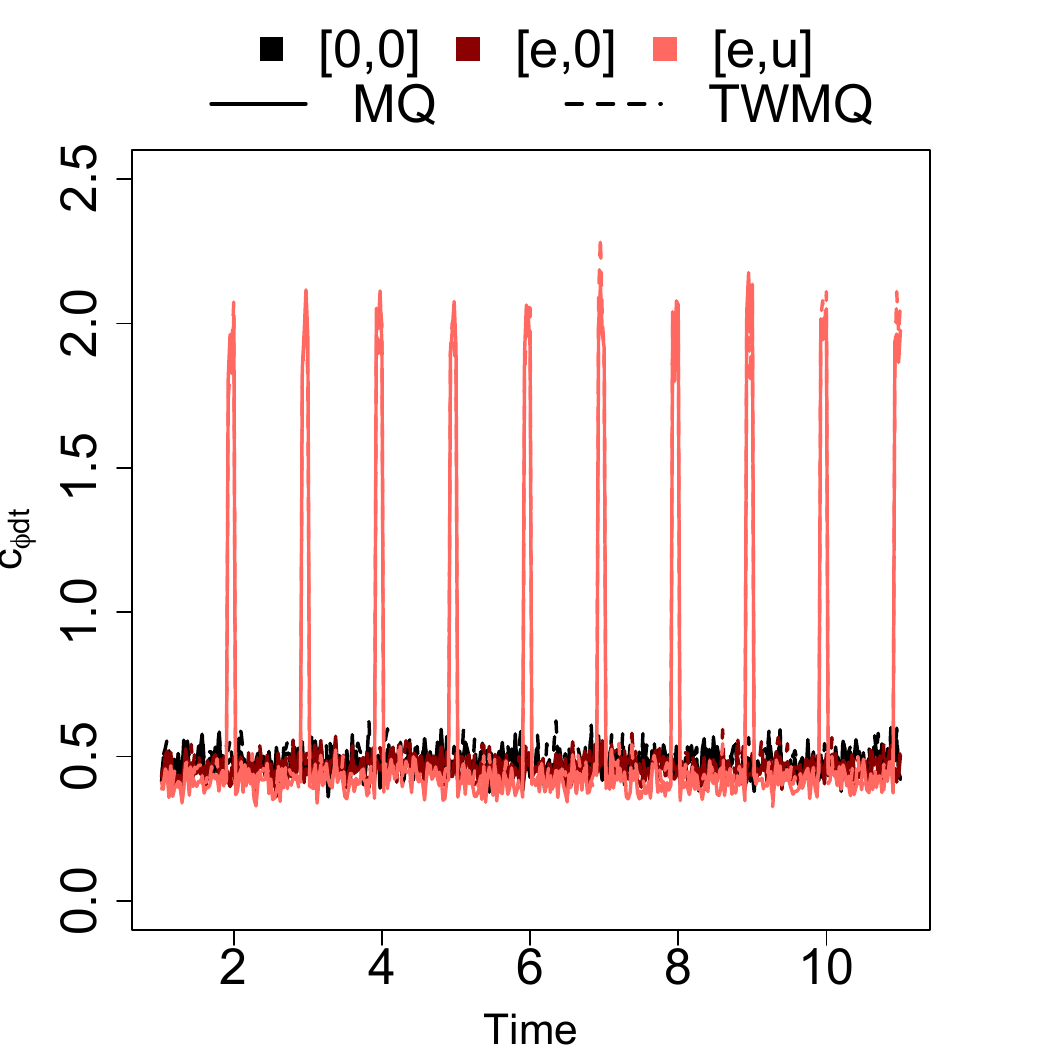}
    \end{subfigure}
    \caption{Scartterplot of $\overline{\hat c}_{\phi,dt}$ sorted by area and time period for Case 1.1.}
    \label{selec.c.dt}
\end{figure}

After reviewing the literature, the results are supported by statistical tests adapted to our problem. First,  \cite{Friedman1937} test is used for one-way repeated measures analysis of variance under different experimental conditions. The repeated measures are the values of $\overline{\hat c}_{\phi,dt}$ selected for an area $d$, $d=1,\dots, D$, over all time periods (which are the experimental conditions). It is assumed that each $\overline{\hat c}_{\phi,dt}$ is equally distributed except at most in terms of location, which may vary according to the experimental condition and area. We write
\[\overline{\hat c}_{\phi,dt}=c+\delta_{1,t}+\delta_{2,d}+e_{dt},\]
where $c$ is the global mean, independent of both area and time;
$\delta_{1,t}$ is the average effect of time period $t$, $t=1,\dots, T$; $\delta_{2,d}$ measures the average effect of the $d$-th area, $d=1,\dots, D$; and each $e_{dt}$ i.i.d. follows an unknown zero-mean distribution $F$. The objective is to test
\[H_0: \delta_{2,1}=\dots,=\delta_{2,D} \ \text{ vs }\  H_1: \exists\ d_1,d_2\in\{1,\dots,D\},\ d_1\neq d_2, \ \delta_{2,d_1},  \neq \delta_{2,d_2}.\]
In a second step, we apply an Honestly Significant Difference (HSD) test, or Tukey's multiple range test \citep{Tukey1949} in order to detect which groups of areas shift in scale. 


Our approach for outlier detection reports the following promising results. In Scenario [e,u], the mean of $\overline{\hat c}_{\phi,dt}$ differs between areas (p-value $< 2.2\cdot 10^{-16}$) but not between time periods (p-value $0.356$). As expected, the area-level outliers detected by Tukey's test are exactly those with index $37\leq d \leq 40$. The same test applied to time periods detects only a single group, formed by all of them. 
From the perspective of LMMs, the detection of atypical data using the methodology described in \cite{Zewotir2007} does not report conclusive results because it is performed at unit-level. 
In our research, the generation of unit-level outliers has been random, 
making the above-mentioned test useless. 

\subsection{Performance of MSE estimators}\label{simMSE}

Finally, we examine the performance of the different  estimators for the MSE of the TMQ and BTMQ predictors. We are particularly interested in those relating to the BTMQ predictor, as one of our MSE estimator comes from a first-order approximation.  Figure \ref{RBIAS.RRMSE} shows boxplots of RBIAS and RRMSE, in \%, for the RMSE estimators' performance in Case 1.1. For tabular results, see Section C of Supplementary Material. 

\begin{figure}[t!]
    \centering
    \begin{subfigure}{.3\textwidth}
        \includegraphics[width=1\linewidth]{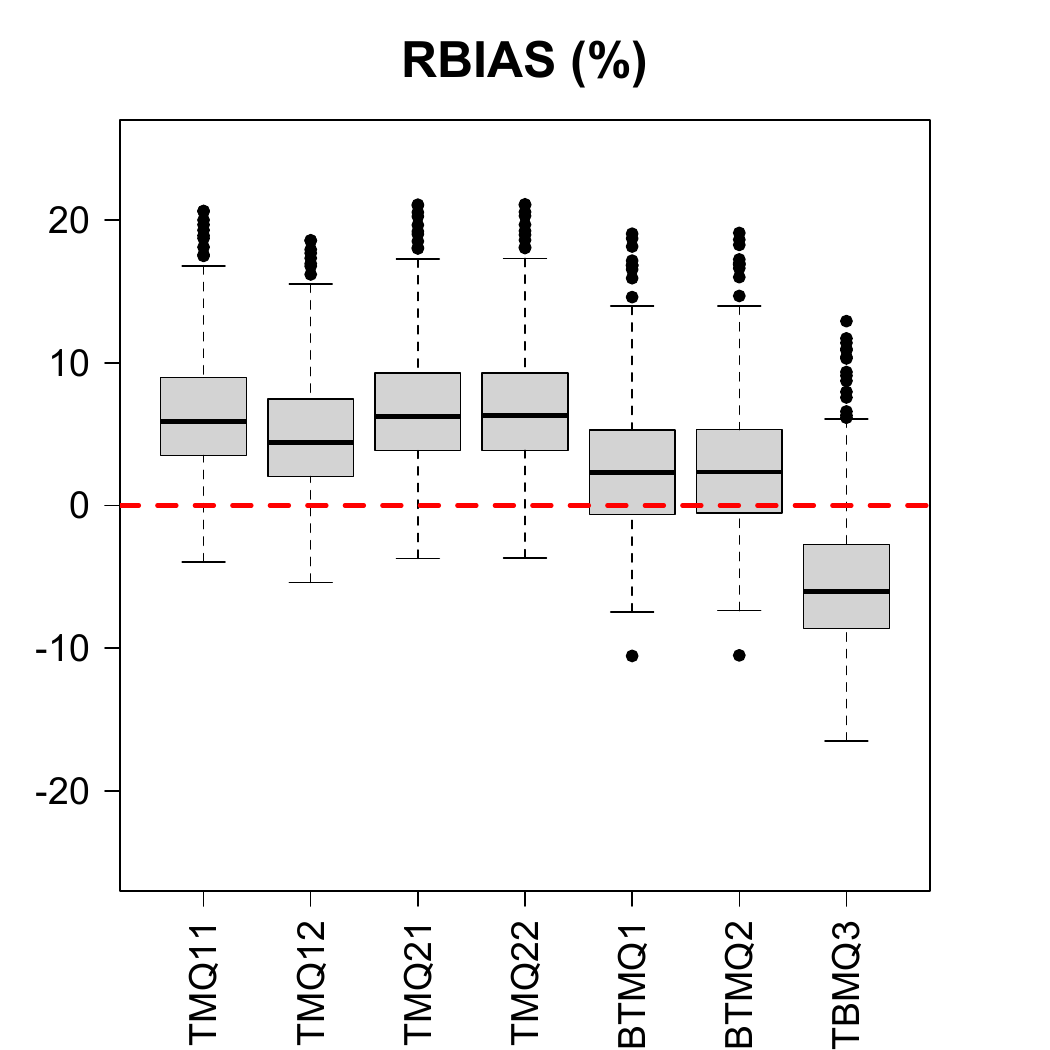}
    \end{subfigure}
    \begin{subfigure}{.3\textwidth}
        \includegraphics[width=1\linewidth]{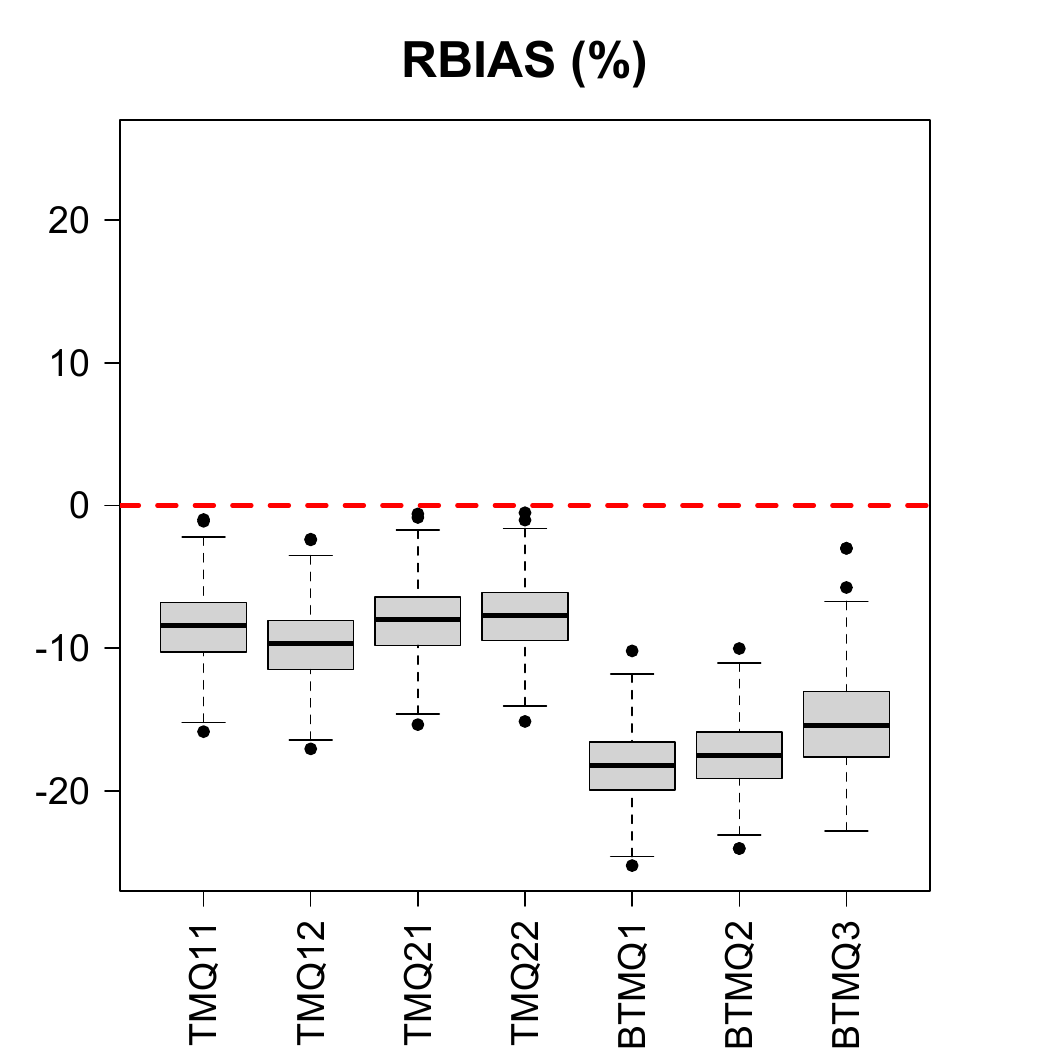}
    \end{subfigure}
    \begin{subfigure}{.3\textwidth}
        \includegraphics[width=1\linewidth]{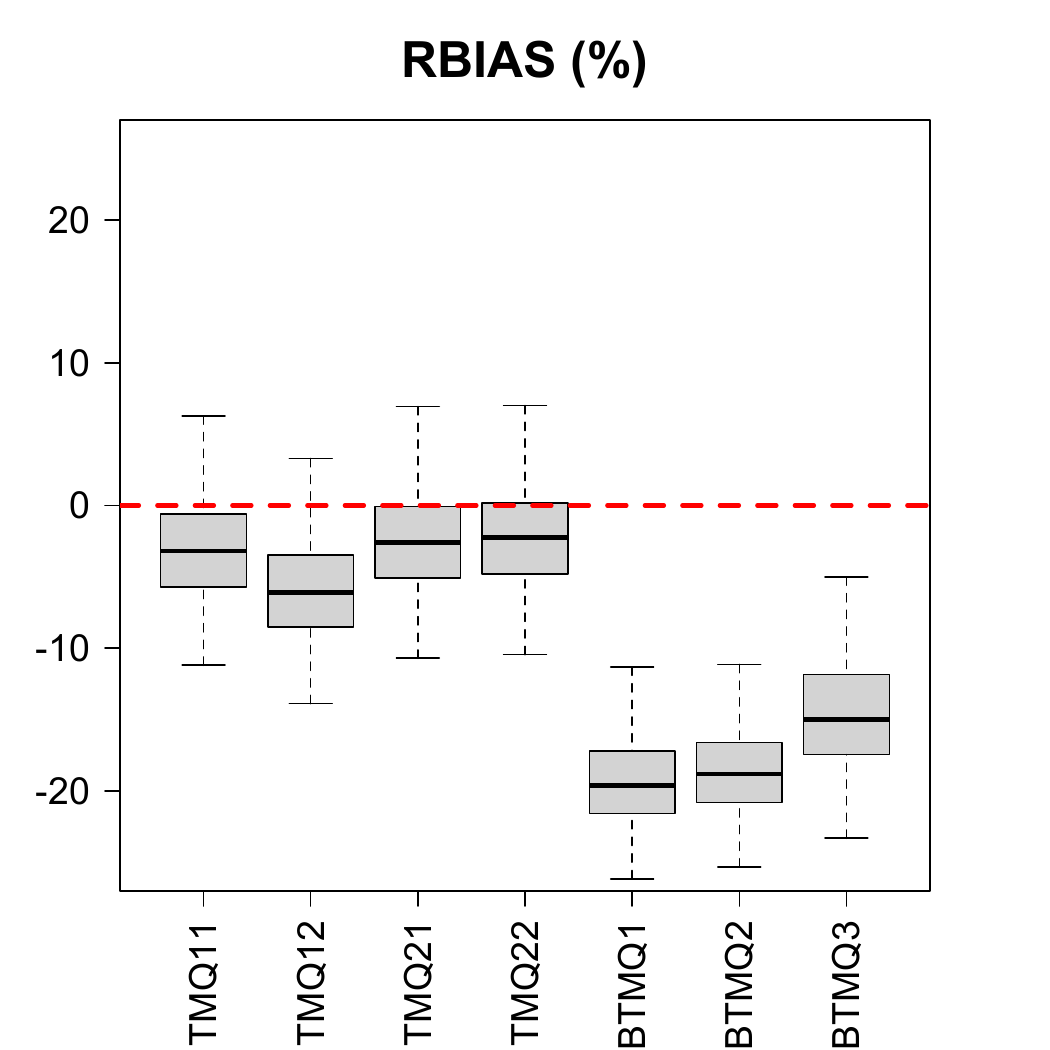}
    \end{subfigure}

    \centering
    \begin{subfigure}{.3\textwidth}
        \includegraphics[width=1\linewidth]{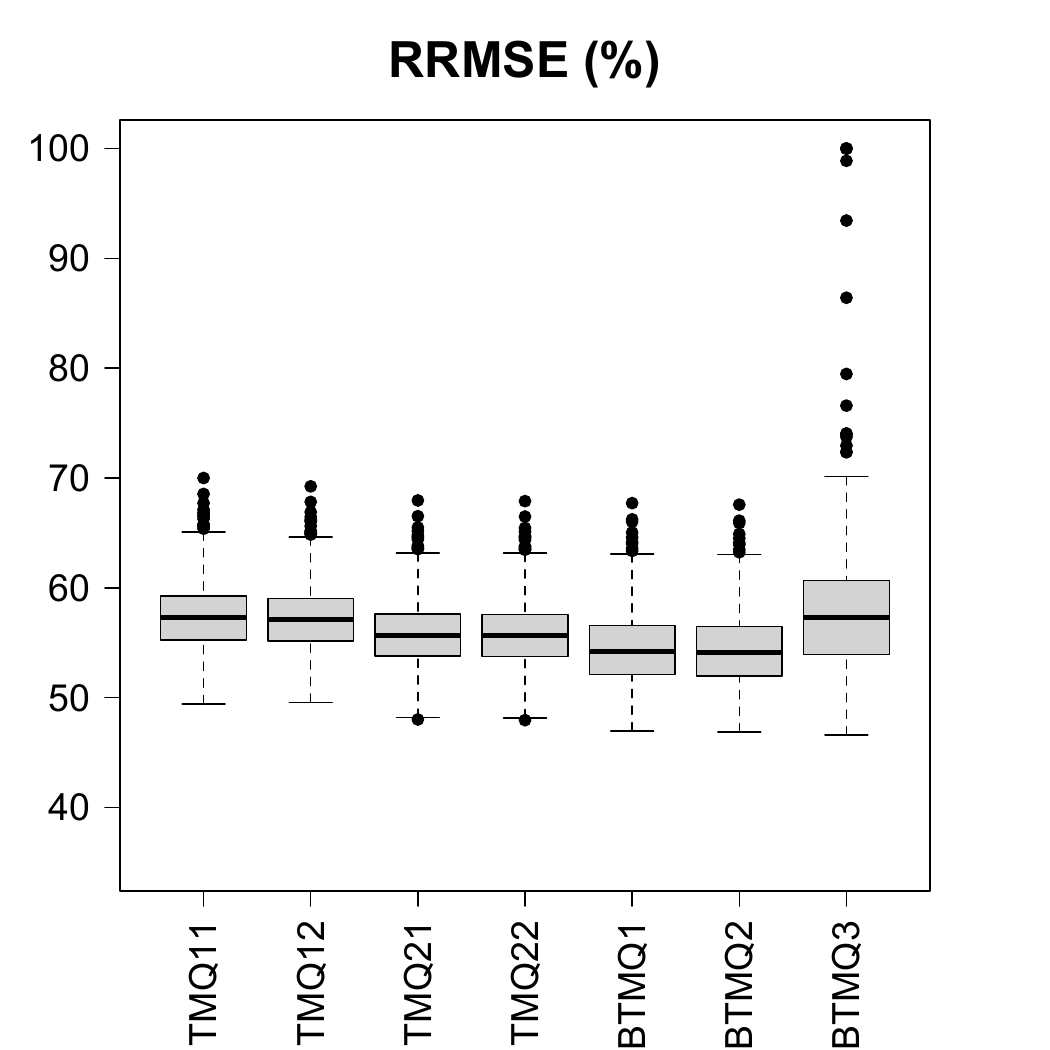}
        \caption{Scenario [0,0].}
    \end{subfigure}
    \begin{subfigure}{.3\textwidth}
        \includegraphics[width=1\linewidth]{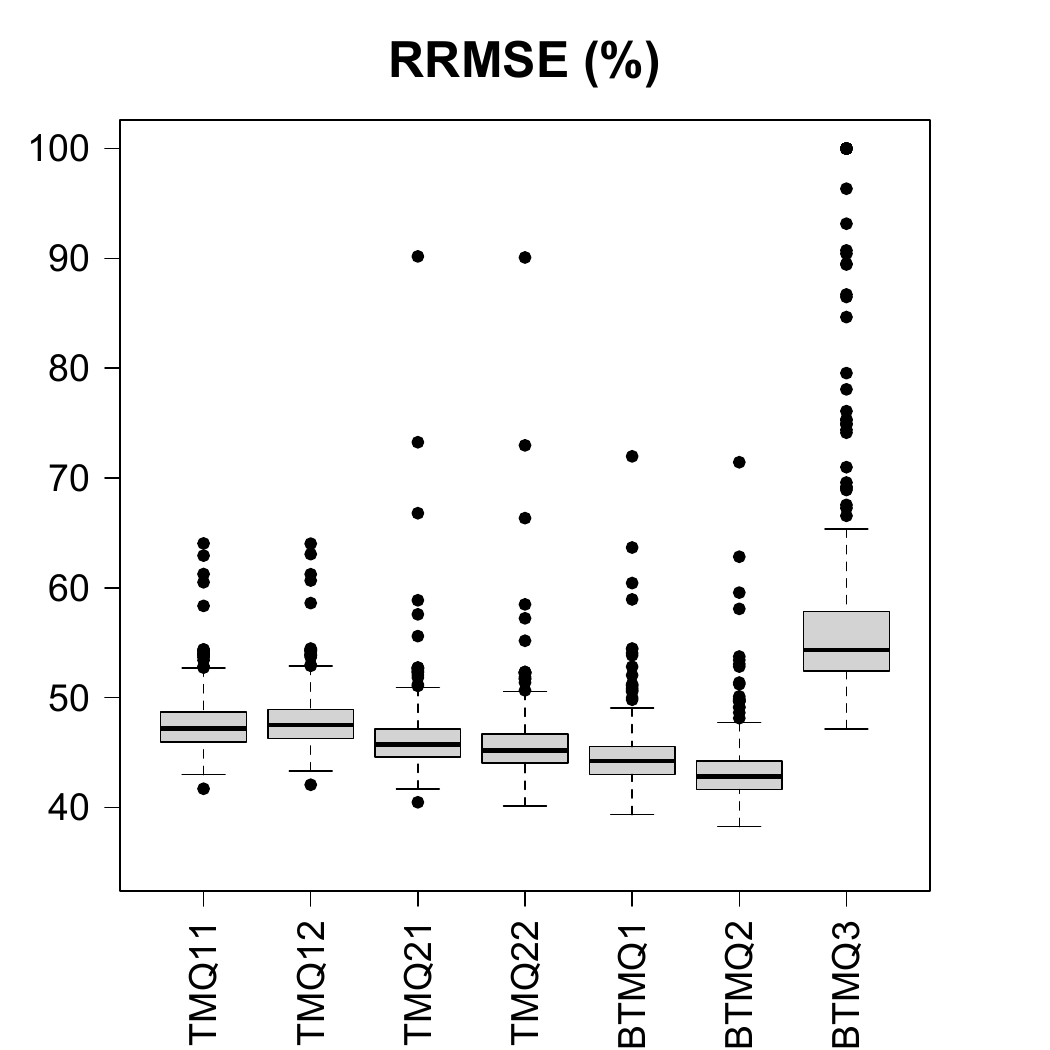}
        \caption{Scenario [e,0].}
    \end{subfigure}
    \begin{subfigure}{.3\textwidth}
        \includegraphics[width=1\linewidth]{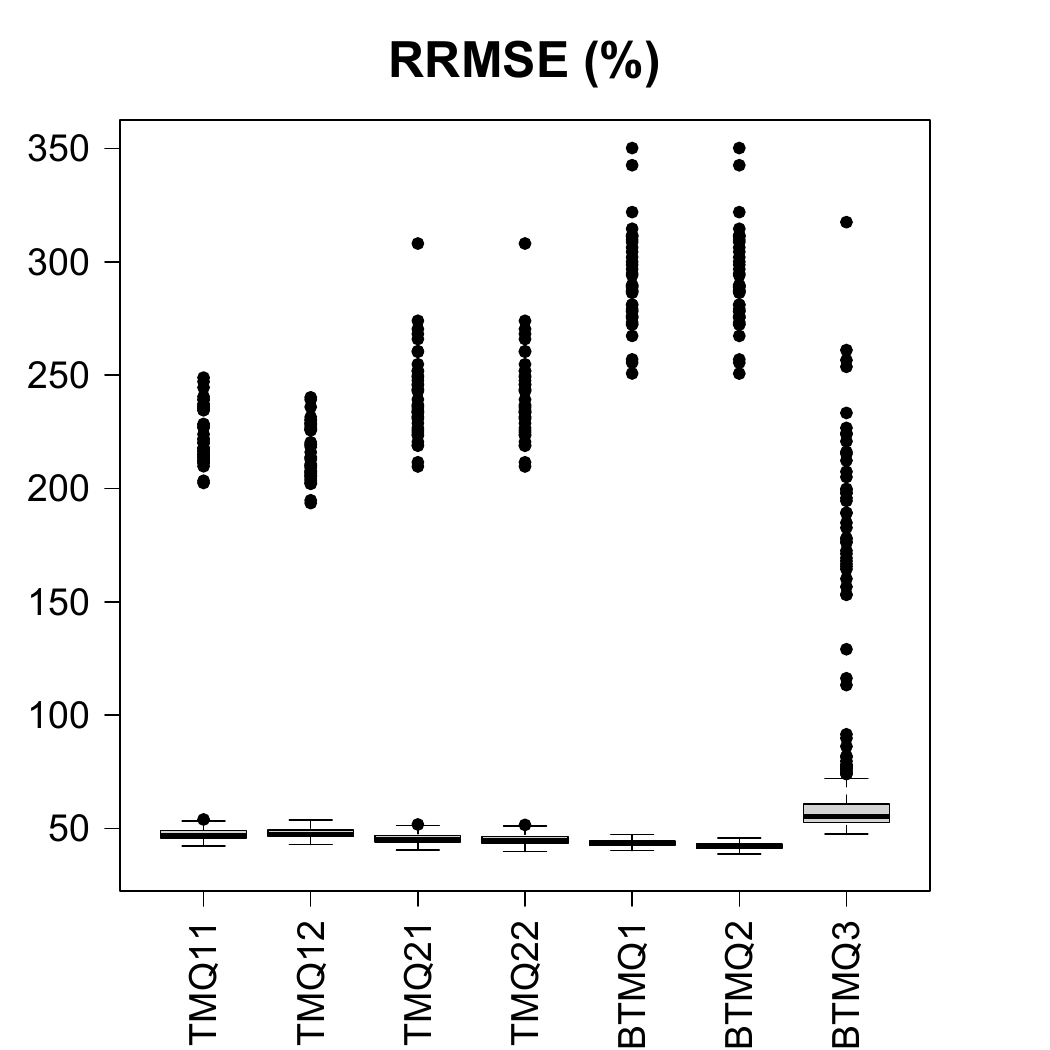}
        \caption{Scenario [e,u].}
    \end{subfigure}
    \caption{Boxplots of RBIAS and RRMSE in (\%) for the RMSE estimators of the proposed predictors in Case 1.1.}
    \label{RBIAS.RRMSE}
\end{figure}

Looking at Figure \ref{RBIAS.RRMSE}, the RBIAS of the RMSE estimators of the TMQ predictors is more positive (or less negative, as appropriate) than the corresponding one for the BTMQ predictors. As for the sign of the RBIAS, there are overestimates in Scenario [0,0] and underestimates in the other two scenarios. In addition, the RRMSE increases with the presence of area-level outliers.
In general terms, although $rmse_{3,dt}^{btmq}$ is calculated from a first-order approximation, we feel that the numerical instability problems involved in the estimation stage are responsible for its ``slightly worse'' performance. As discussed in Section A of Supplementary Material, the theoretical advantage of this estimator is largely overshadowed by the highly unstable estimation of one of its variance terms.
Taking into account the latter, in the application to real data we have used $rmse_{2,dt}^{btmq}$ to present error measures about the BTMQ predictor. It should be noted that the results for $rmse_{1,dt}^{btmq}$ are almost the same but slightly worse. The next steps will be to develop MSE estimators based on second-order unbiasedness to improve the results. For the TMQ predictor, we propose using $rmse_{2,dt}^{tmq}$.

{Other simulation results, reported in the Supplementary Material for brevity, show that the IRLS algorithm for the TWMQ model for the estimation of model parameters performs quite well; moreover, the proposed procedure also gives good results of the domain means of unit-level MQ coefficients. In addition, we also include more results for the MSE estimation and assess how the time component variability and the temporal correlation affect the TMQ and BTMQ predictors and MSE estimators. Finally, in the Supplementary Material we also show the asymptotic consistency of the estimators $\hat\theta_d$ of the domain population means of unit-level MQ coefficients $\theta_d$ for MQ3 models. }

\section{Application to real data}\label{sec.aplic}
In this section, we apply the proposed methodology to assess changes in the average level of income in 23 provinces (NUTS 3 level) of Empty Spain, which refers to those provinces that have lost inhabitants between 1950 and 2019 and that also have a population density below the national average. 
Survey data are from the 2013-2022 SLCS ($T = 10$ years) while the auxiliary variables come from the census in 2021 provided by the Spanish National Institute of Statistics. The SLCS is designed to obtain reliable direct estimators in NUTS 2 regions, but sample sizes are quite small in NUTS 3 territories. Indeed, they range from 36 
to 1762,
with a median value of 293. The response variable is the equivalized disposable income, per person and unit of consumption, measured in thousands of euros.

Since we are dealing with unit-level data, only the following auxiliary variables are available: {\it sex}, a categorical variable with two categories, (male, reference category, and female); and {\it age4}, a categorical variable with four categories: 0--25 (reference category, {\it age4-1}), 26--45 ({\it age4-2}), 46--64 ({\it age4-3}) and over 65 ({\it age4-4}). 

Figure \ref{beta.coef} shows boxplots of the estimation of the regression coefficients by year obtained by fitting the TWMQ models. In terms of model specification, it shows that the assumption of identical regression coefficients over time is not reasonable. It therefore supports the suitability of the TWMQ models.
To have more confidence in the fitted models as the true generating ones, their validation is addressed in Section D of the Supplementary Material.

\begin{figure}[H]
    \centering
    \hspace{-12mm}
    \includegraphics[width=0.23\linewidth]{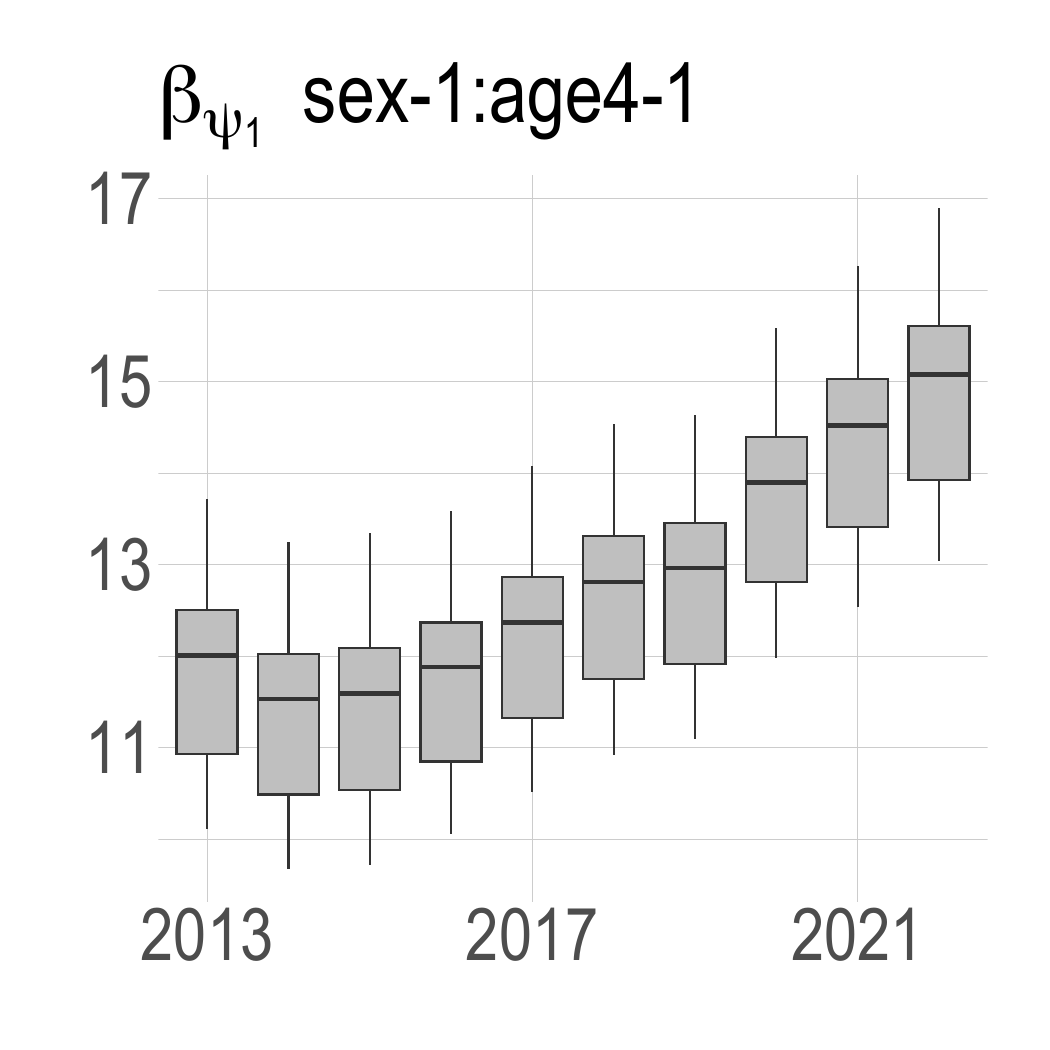}
    \hspace{-5.2mm}
    \includegraphics[width=0.23\linewidth]{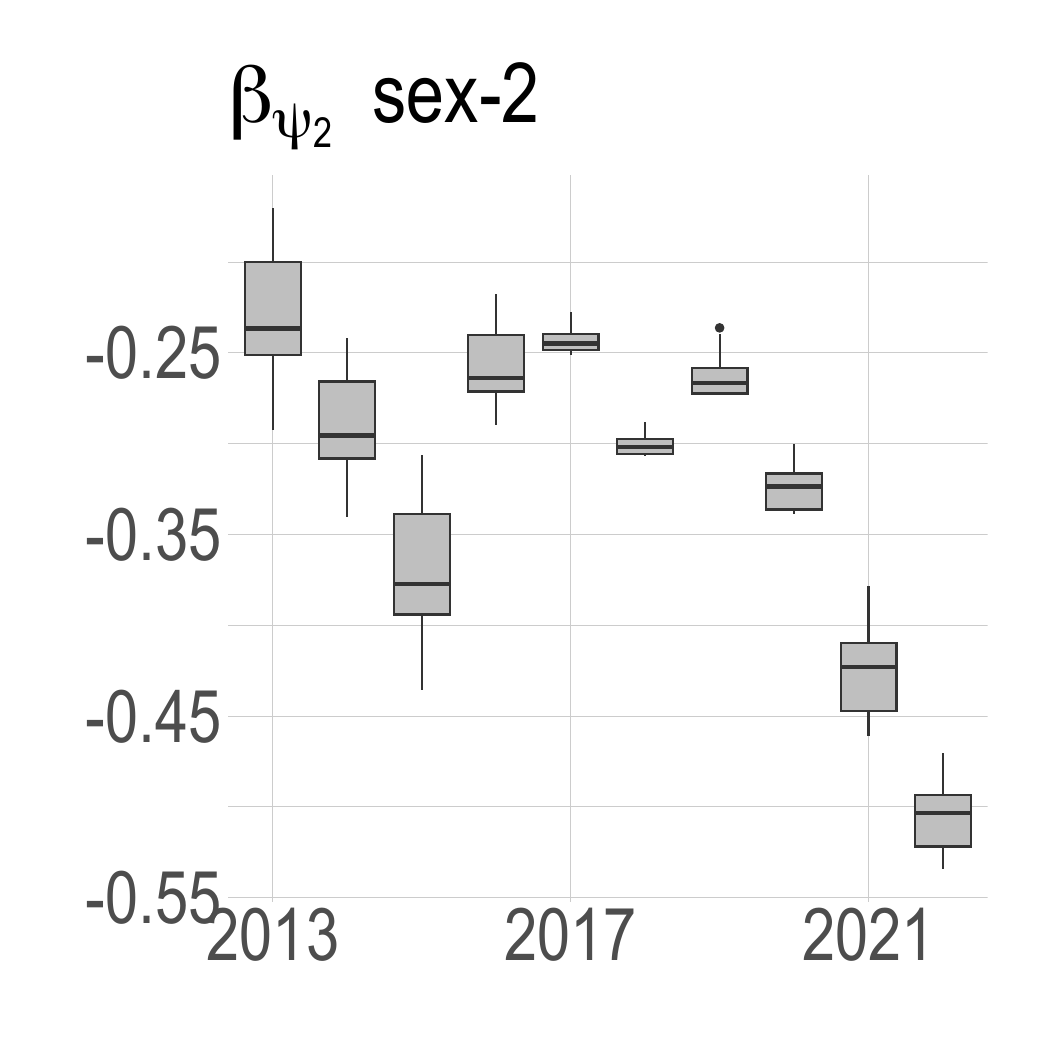}
    \hspace{-5.2mm}
    \includegraphics[width=0.23\linewidth]{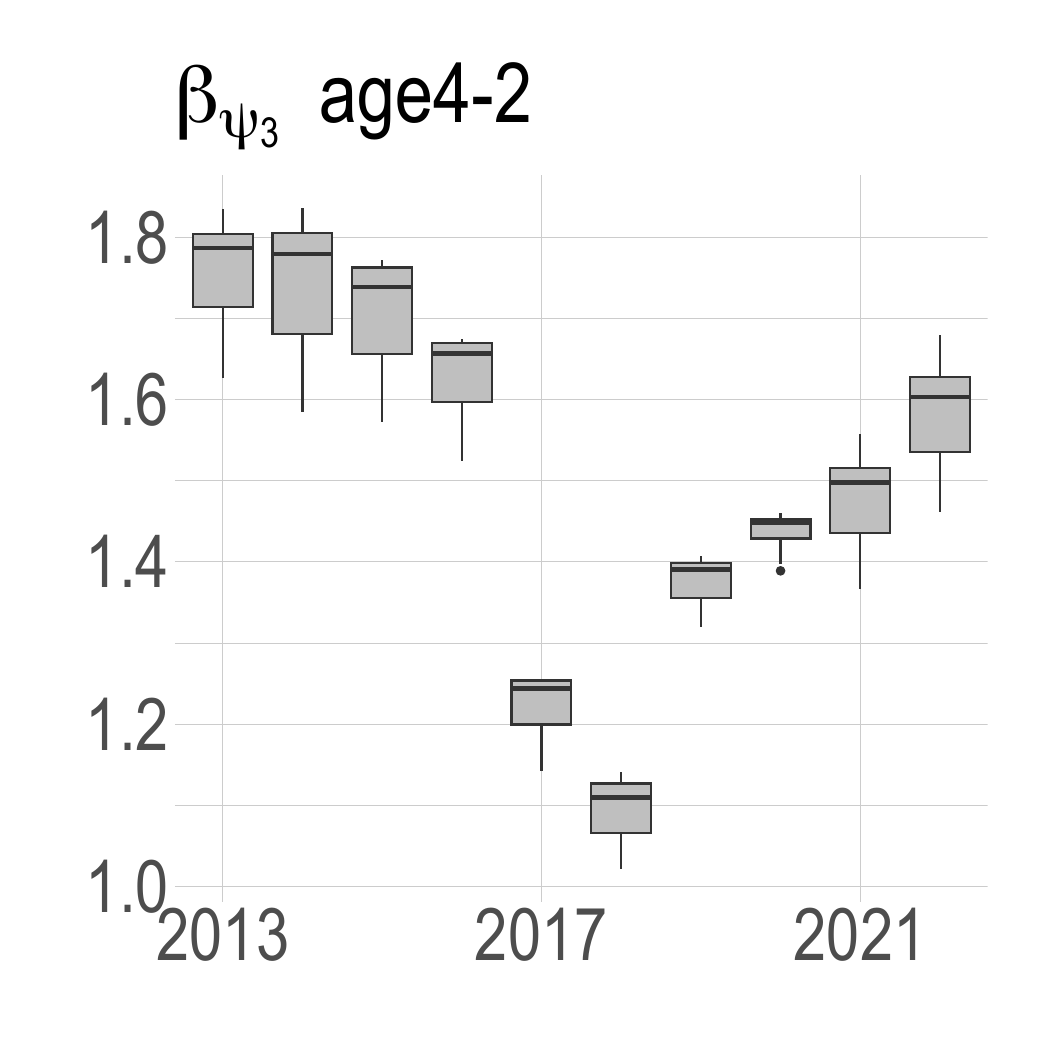}
    \hspace{-5.2mm}
    \includegraphics[width=0.23\linewidth]{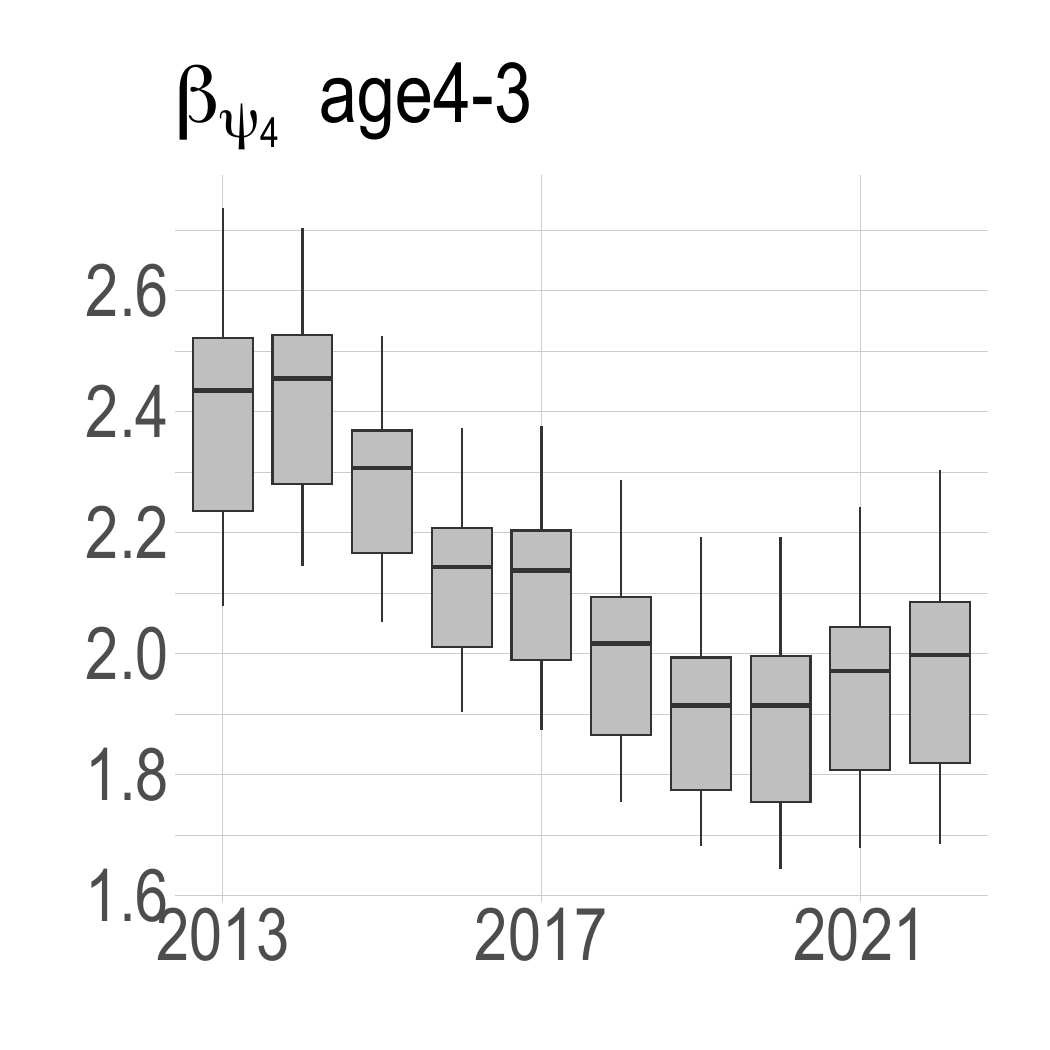}
    \hspace{-5.2mm}
    \includegraphics[width=0.23\linewidth]{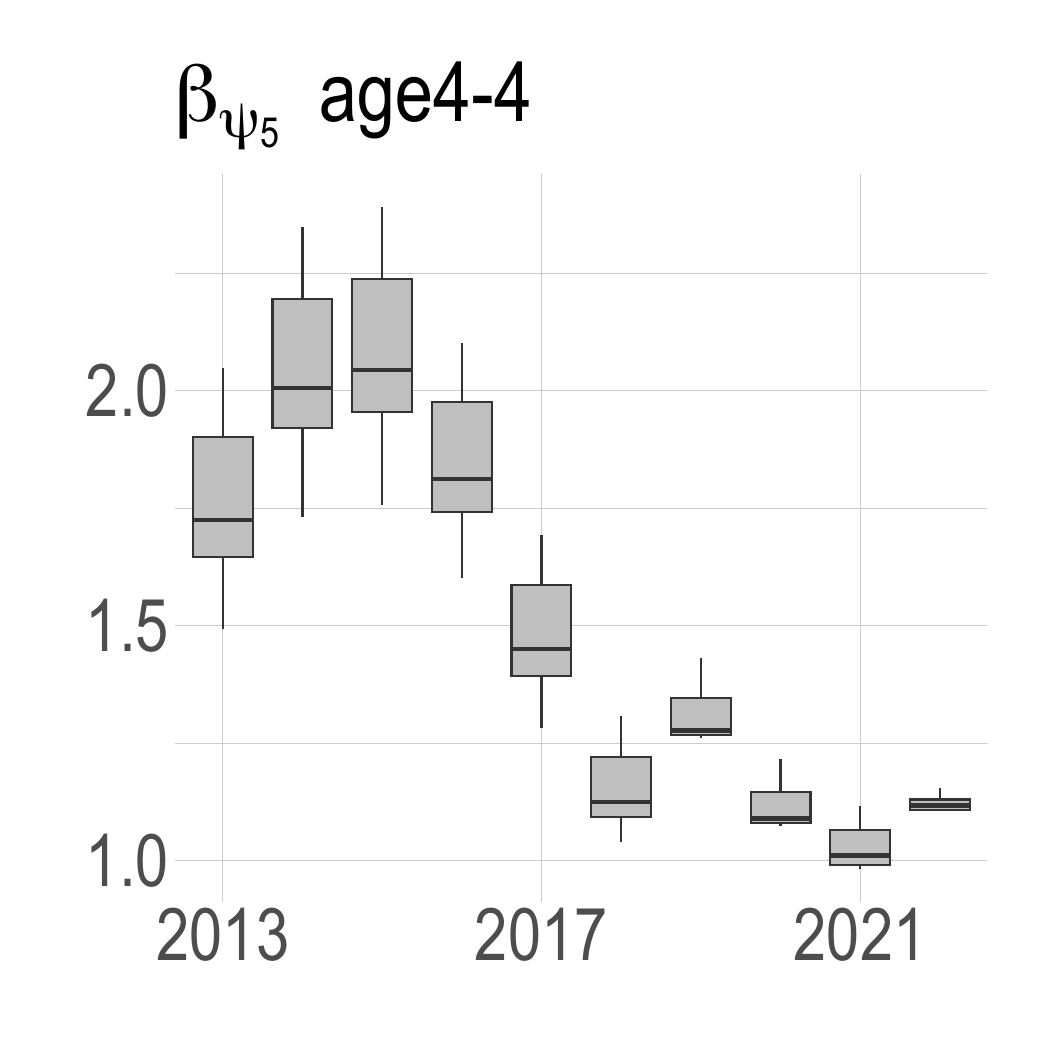}
    \hspace{-14mm}
    \caption{Boxplot of the estimates of TWMQ regression parameters by year.}
    \label{beta.coef}
\end{figure}


\subsection{Predictions and error measures}\label{pred.error}

This section provides H\'ajek direct estimates and model-based predictions of small area population means by province and year, as well as error measures.
{As pointed out in  Section \ref{sec.tmq}, the TMQ predictor may introduce nonnegligible prediction biases, but the BTMQ predictor can unbalance the bias-variance trade-off of the MSE. To set the value of the robustness parameter $c_\phi$, we use the selection criterion proposed in Section \ref{sec.c}.}

Figure \ref{pred.Hajek} plots H\'ajek estimates and model-based predictions for the TMQ (left) and BTMQ (right) predictors. 
{The BTMQ estimator seems to smoothen the estimates more, as expected, employing a bias correction term.}

\begin{figure}[ht]
    \centering
    \begin{subfigure}{.7\textwidth}
        \includegraphics[width=0.47\textwidth]{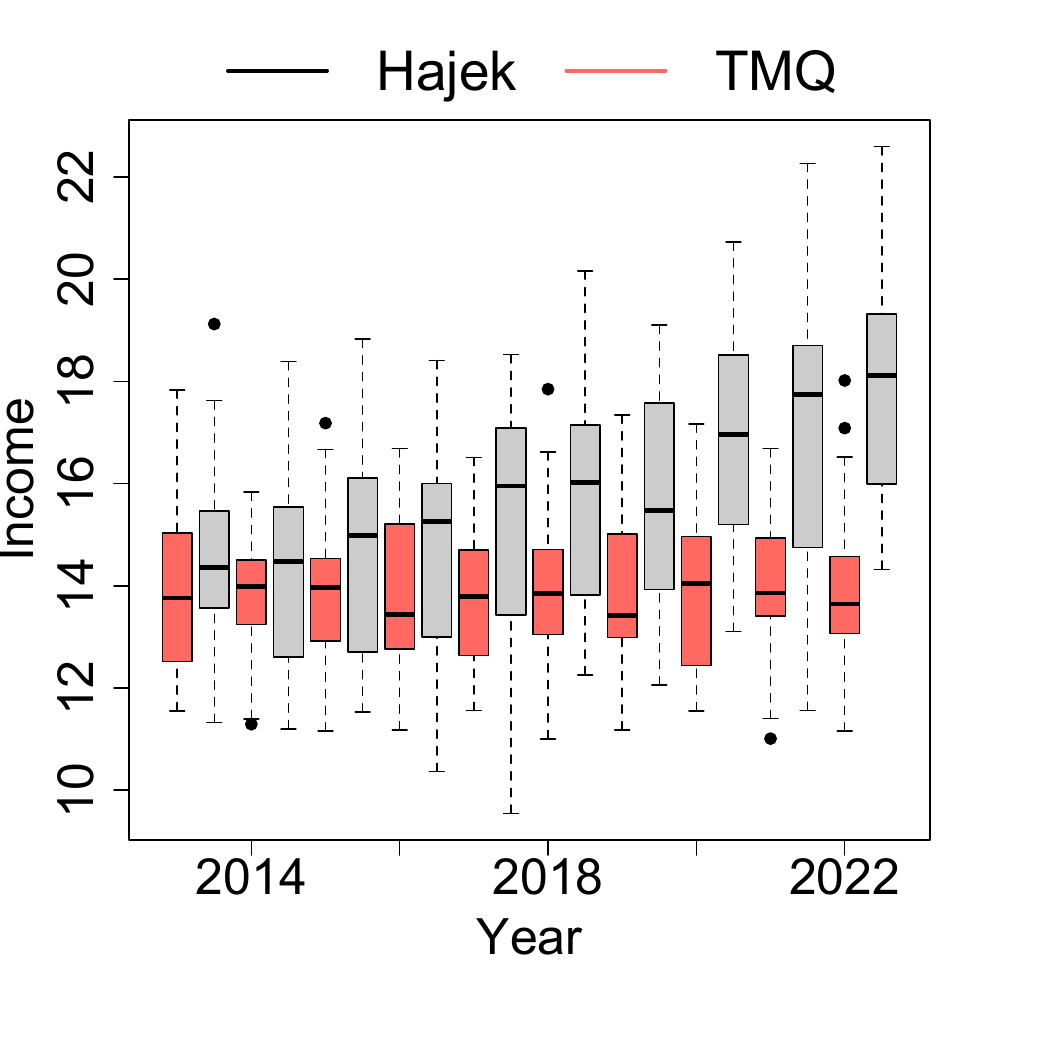}
        \includegraphics[width=0.47\textwidth]{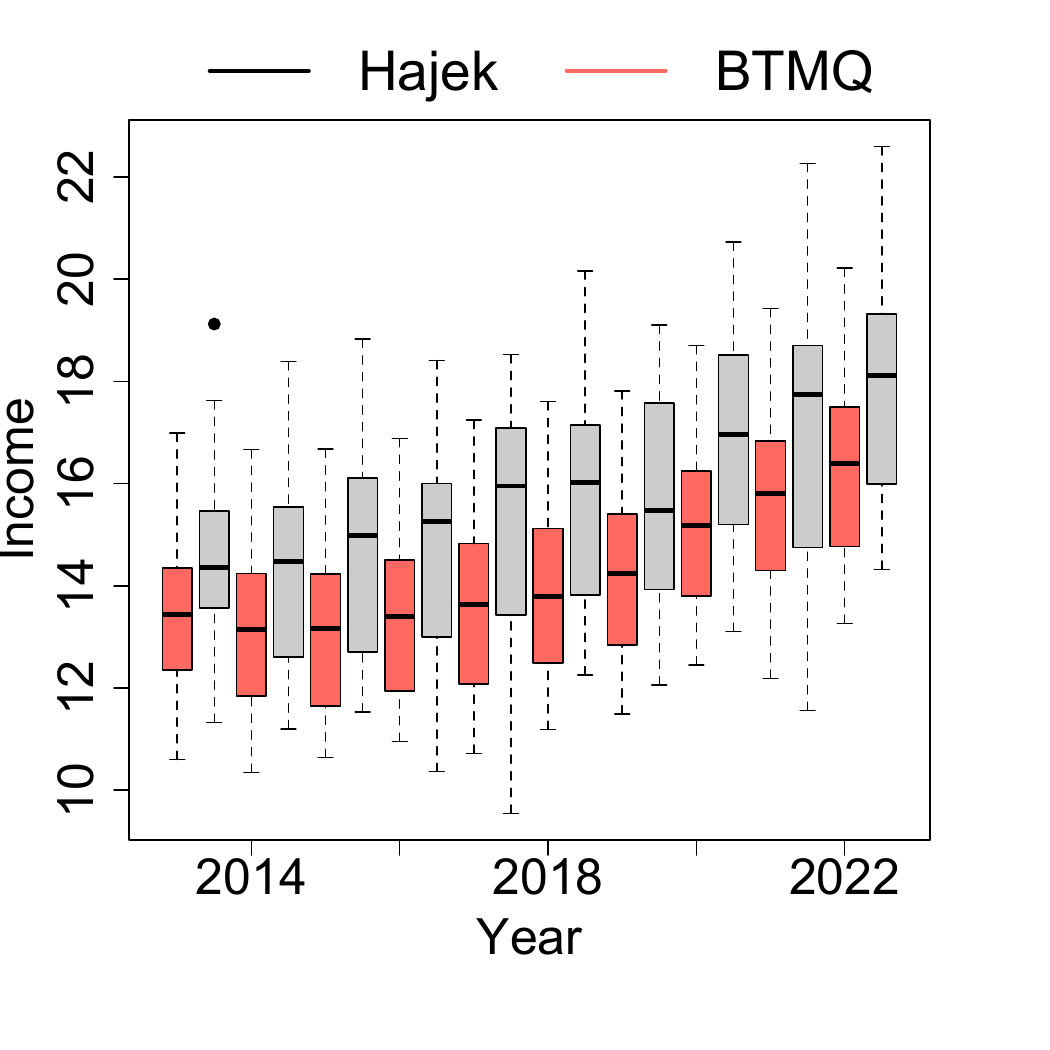}
    \end{subfigure}
    \caption{Boxplots of model-based predictions and direct estimates by year.}
    \label{pred.Hajek}
\end{figure}

{Back to the variability of the estimates, we focus on the MSE of the BTMQ predictor.}
Table \ref{var} contains the deciles of the sample sizes $n_{dt}$, of the standard deviations of the H\'ajek estimator (see \cite{morales2021}, Chap. 3) and of the BTMQ predictor. The reduction in variability is evident, especially when the sample sizes are small.

\begin{table}[H]
    \centering
    \small
    \renewcommand{\arraystretch}{0.5}
    \begin{tabular}{|l|ccccccccccc|}
        \cline{2-12}
        \multicolumn{1}{l|}{}
        &$q_0$&$q_{0.1}$&$q_{0.2}$&$q_{0.3}$&$q_{0.4}$&$q_{0.5}$&$q_{0.6}$&$q_{0.7}$&$q_{0.8}$&$q_{0.9}$&$q_1$\\
        \hline
        $n_{dt}$&36& 118& 146 &182 &239 &294& 346& 450& 568& 900 &1762\\
        H\'ajek&0.245  & 0.361   &0.407 &  0.455 &  0.511  & 0.561 &  0.648  & 0.727 &  0.859 &  1.118  &  2.393\\
        BTMQ& 0.032 &0.040 &0.088 &0.126 &0.160& 0.184& 0.213& 0.251& 0.307& 0.397 &1.023\\
        \hline
    \end{tabular}
    \caption{Sample sizes and standard deviations of the H\'ajek estimator and BTMQ predictor.}
    \label{var}
\end{table}

%
%
%
The proposed estimation procedure offers the opportunity to analytically read the evolution and differences between the provinces of Empty Spain over time. 
Figure \ref{map.Income} maps the equivalized disposable income obtained with the BTMQ predictor. 
\begin{figure}[h]
    \centering
    \hspace{-13mm}
    \includegraphics[width=0.39\linewidth]{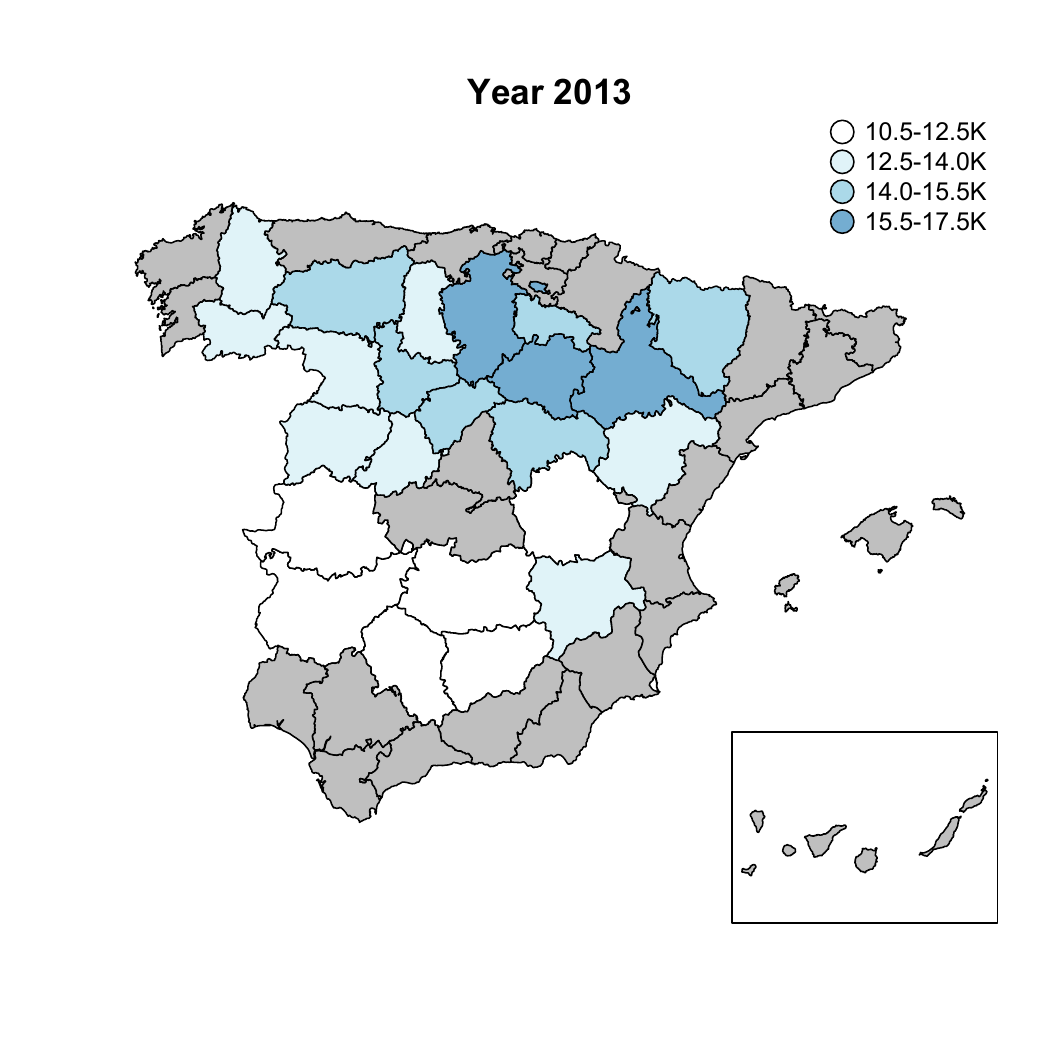}
    \hspace{-11.5mm}
    \includegraphics[width=0.39\linewidth]{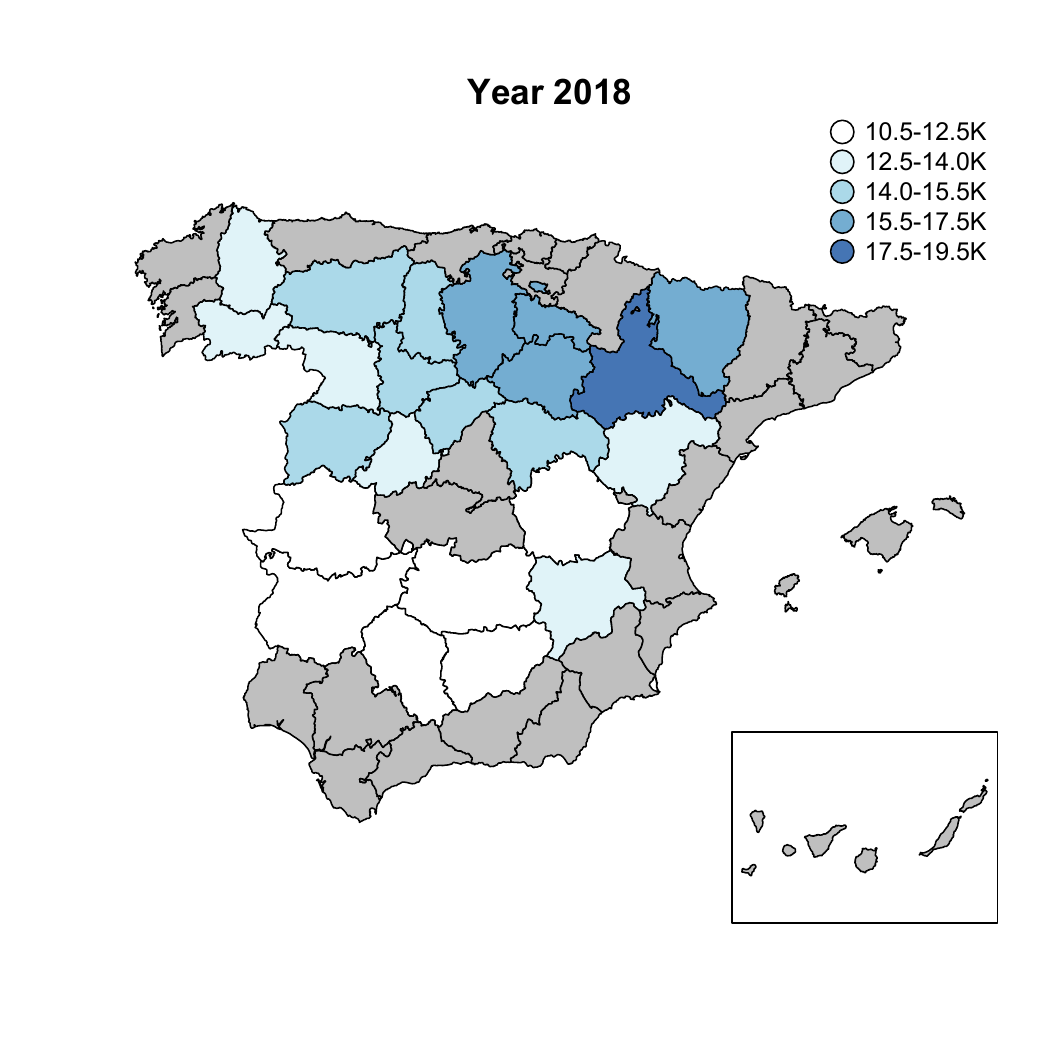}
    \hspace{-11.5mm}
    \includegraphics[width=0.39\linewidth]{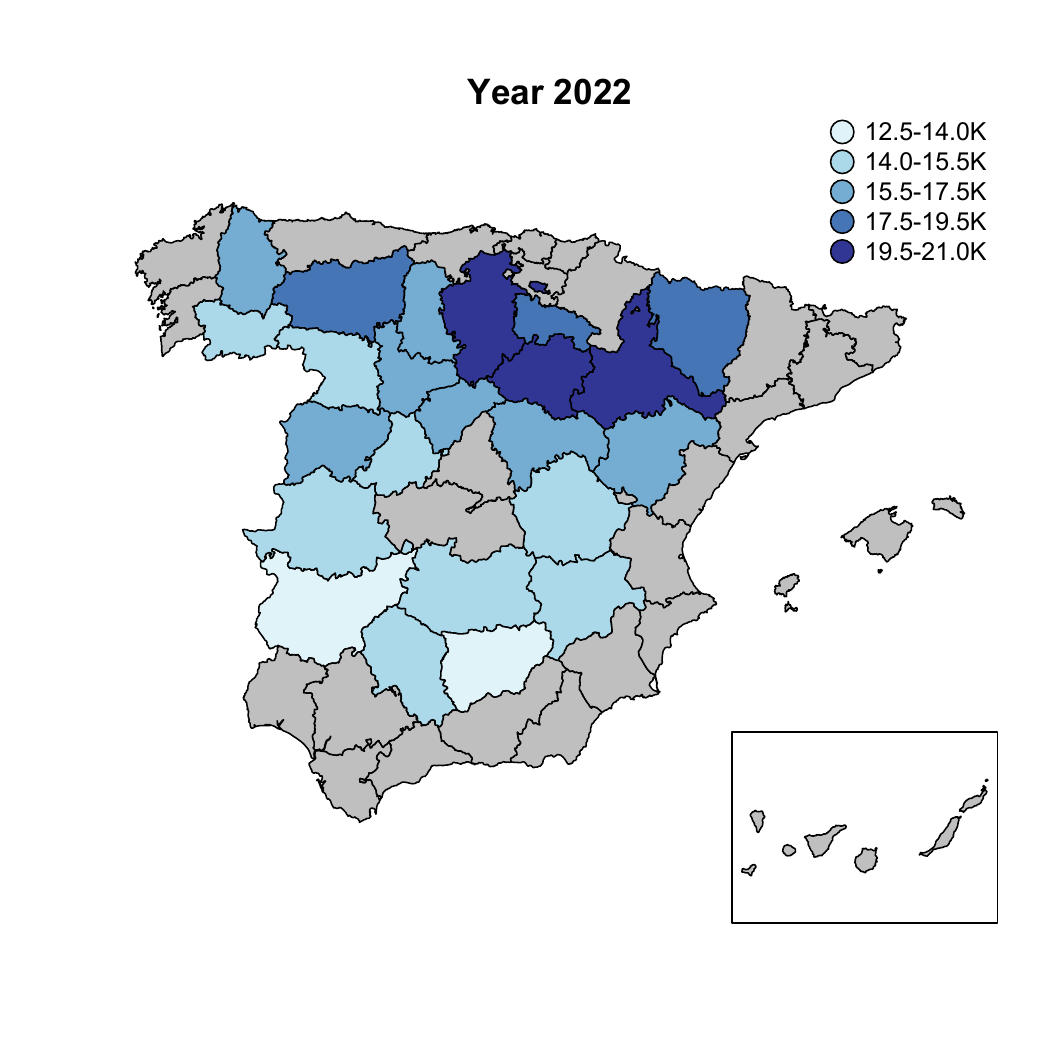}
    \hspace{-10mm}
    \caption{Estimates of Equivalized disposable income for Empty Spain in 2013 (left), 2018 (center) and 2022 (right). Results obtained using the BTMQ predictor.}
    \label{map.Income}
\end{figure}

Figure \ref{map.Income} points out that there are clear differences between the northern provinces, historically richer and more developed, and those in the centre-south, where agriculture and construction predominate and the industrial sector is less promoted. 
Finally, it is worth noting the increasing, or at least non-decreasing, trend during the study period.
Figure \ref{map.RRMSE} shows maps of the coefficient of variation using the $rmse_{2,dt}^{btmq}$ estimator proposed in Section \ref{MSE.BTMQ}. It follows that the relative margins of error are accurate enough for a SAE problem, with coefficients of variation lower than 9\% in most domains.

\begin{figure}[h]
    \centering
    \hspace{-13mm}
    \includegraphics[width=0.39\linewidth]{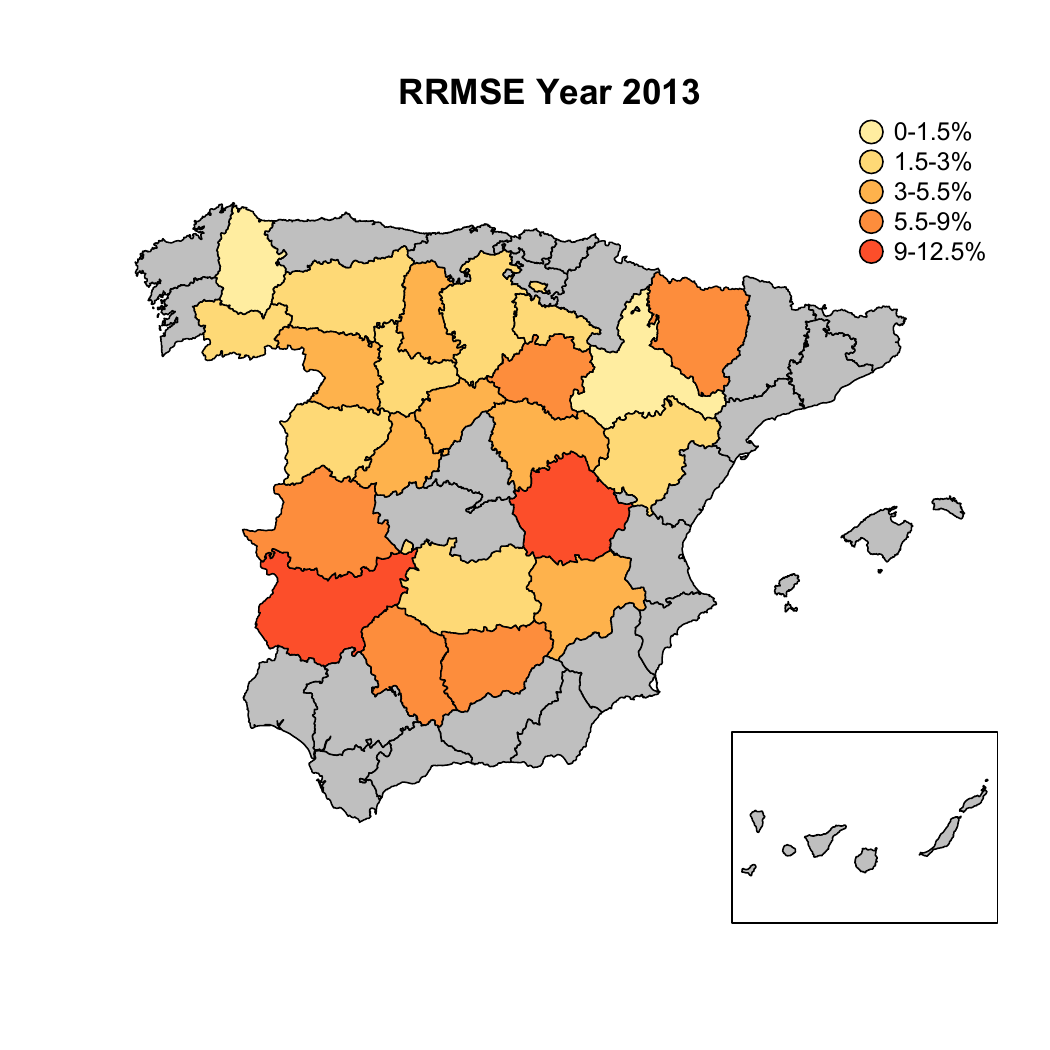}
    \hspace{-11.5mm}
    \includegraphics[width=0.39\linewidth]{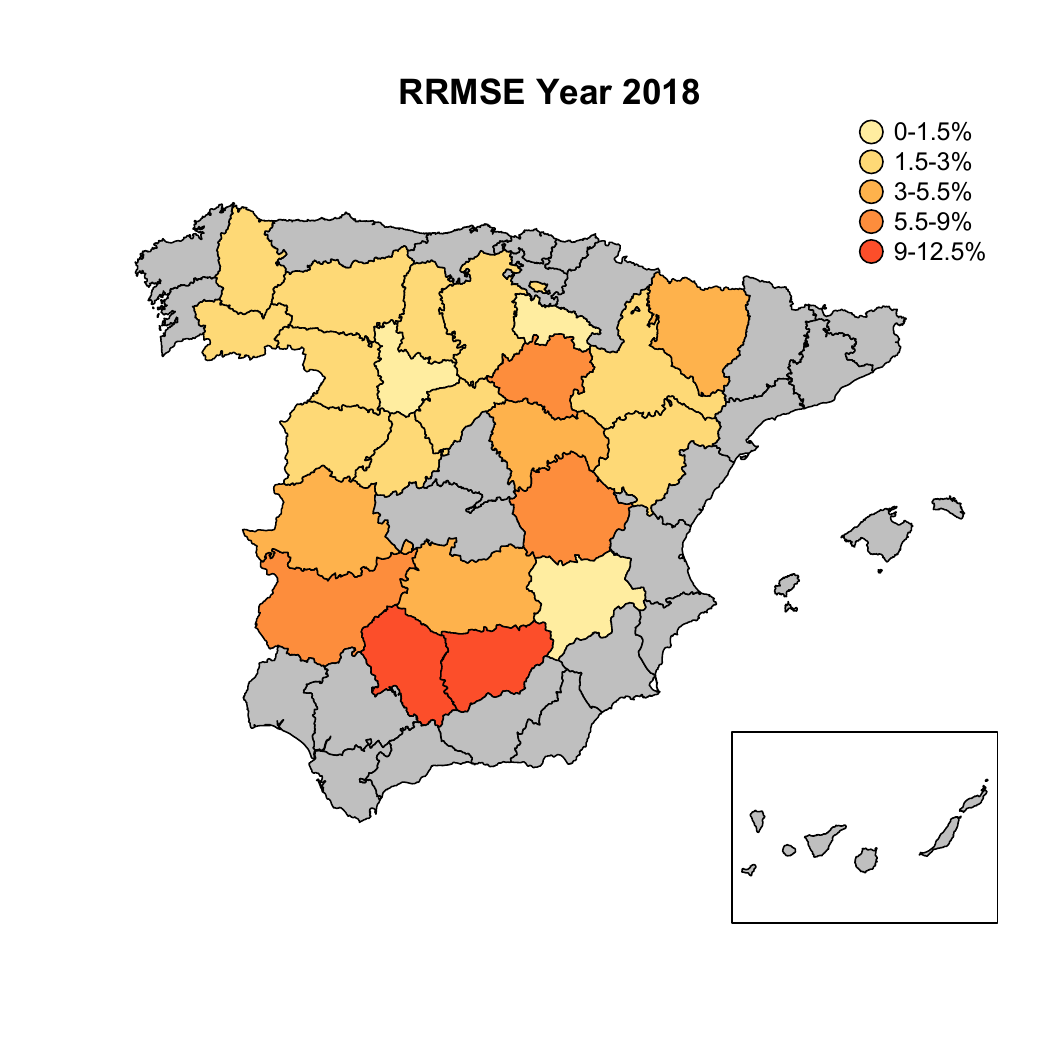}
    \hspace{-11.5mm}
    \includegraphics[width=0.39\linewidth]{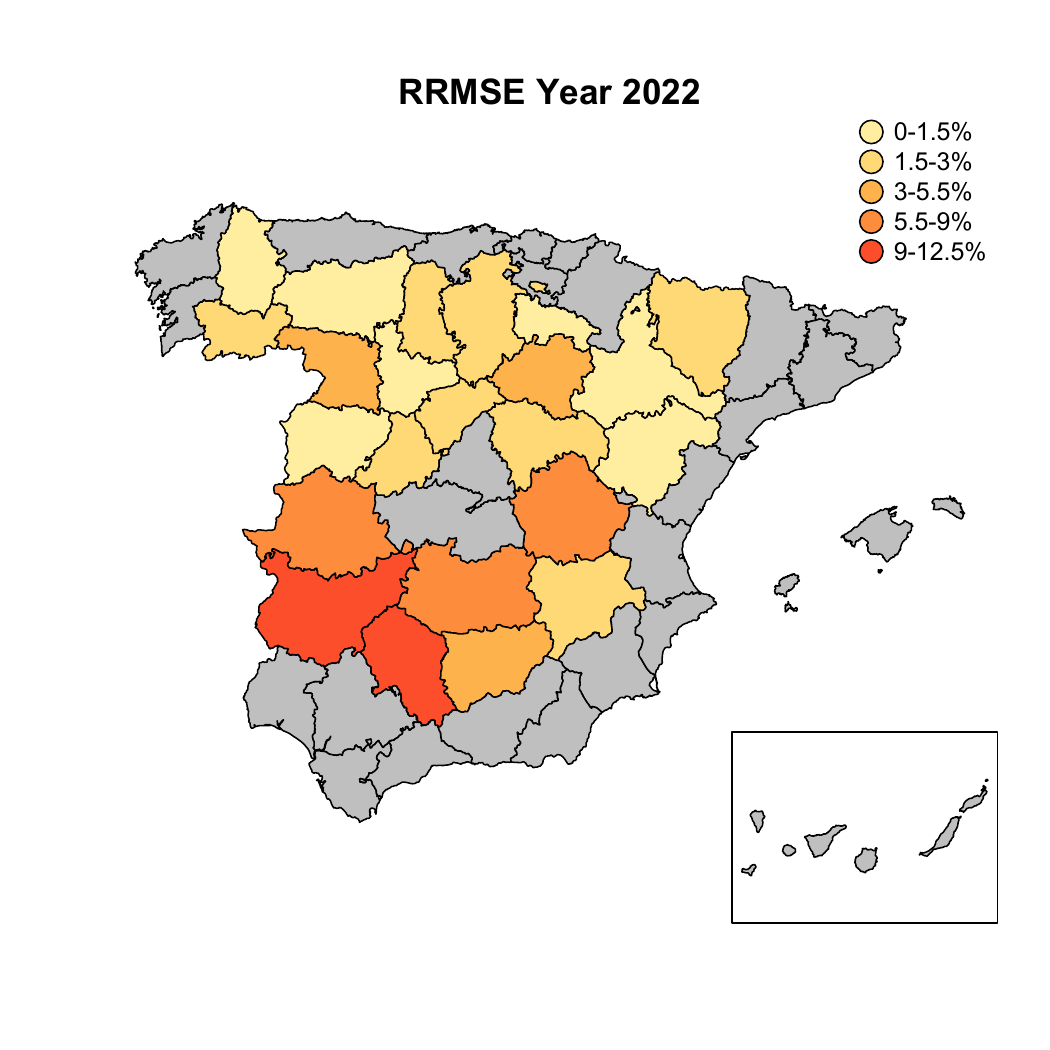}
    \hspace{-10mm}
    \caption{Coefficients of variation of the equivalized disposable income for  Empty Spain in 2013 (left), 2018 (center) and 2022 (right). Results for the BTMQ predictor.}
    \label{map.RRMSE}
\end{figure}

\subsection{Detection of outliers}\label{sec.out}
{Finally, we use the robustness parameters to detect the outlying subdomains (cf. Section \ref{sec.sim}).}
Outlier detection methods based on Linear Mixed Models (LMMs), like the one proposed by \cite{Zewotir2007}, are impractical for real-world data applications. This is due to the necessity of inverting matrices of size $n\times n$, where $n=89971$ for the provinces of Empty Spain in the SLCS2013-2022. Moreover, if we fit a LMM with random effects in provinces and years, Cook's distances do not detect deviations. This is probably due to the strong time dependencies --and also the presence of unit-level outliers-- which cloud the provincial variability of the random effects. Against this background, we propose using the set of $\big\{\hat c_{\phi, dt}: \ d=1,\dots,D, \ t=1,\dots, T\big\}$ and the same methodology as in Section \ref{sec.sim}.



    \label{table.c}


It is found that Burgos, Huesca, La Rioja, Guadalajara and Zaragoza are atypicals over time, with higher values than the average. For the rest of Empty Spain's provinces, no significant differences are detected. In socio-economic terms, our findings are reasonable. These provinces are traditionally prosperous and Guadalajara is very close to Madrid, serving as a ``commuter province'' for many workers from the capital. In short, all these regions deviate from the patterns that characterise Empty Spain.

\section{Conclusions}\label{sec.conc}

In this paper, we extended the MQ regression in SAE to time-dependent target variables and proposed predictors of small area linear quantities. We also derived a first-order approximation of the MSE and calculated several analytical estimators. To achieve this, the MQ models have been adapted to temporal data by including weights in the fitting process and defining temporal distance criteria. Subsequently, plug-in predictors have been calculated and robust bias-corrected predictors proposed. 

Furthermore, we introduced a data-driven criterion for selecting optimal robustness parameters for the influence function in the bias correction part of the predictors. Research on optimality criteria for robustness parameter selection was based on the premise that bias correction of MQ and TMQ predictors does not increase variability. While current literature often uses fixed values, we strongly recommend tailoring these parameters to the specific conditions of each subdomain, thus providing a customized solution that enhances the efficiency of MQ models. From a methodological point of view, the existence and uniqueness of area-time-specific solutions, along with the availability of a bounded search interval, support our finding. However, the study of the asymptotic distribution of the robustness parameters and the subsequent implementation of more appropriate hypothesis tests for atypical values detection is left for further research. 

Model-based simulations, have showed as the proposed optimal bias-corrected MQ predictors, derived from both temporal and non-temporal models, tend to outperform EBLUPs based on LMMs in terms of bias and MSE. However, it is worth noting that the proposed estimators are suitable only for estimating small area linear quantities, such as population means. Extending these estimators to assess non-linear small area quantities remains a subject for future research \citep{marchettinikos2016}. Additionally, the current approach is valid only for continuous outcome variables. Future work will extend the generalized MQ regression models \citep{Cha06, schirripaBJ2021} to time-dependent data to derive small area predictors for discrete response variables.  

Regarding the performance of the proposed MSE estimators, model-based simulations indicate that all proposed estimators perform well, especially in scenarios without outliers. However, in the presence of outliers, the performance of the MSE estimator derived from the first-order unbiased approximation tends to degrade slightly. Future work will focus on developing an MSE estimator based on second-order unbiasedness to address this issue.

In SAE, it is common for geographically closer areas to share similar characteristics. Therefore, developing estimators that incorporate both temporal and spatial correlations could lead to more accurate results. This topic will be addressed in future research.

\bibliographystyle{chicago} 
\bibliography{bibTMQ}


\clearpage
\setcounter{page}{1}

\setcounter{section}{0}

\renewcommand*{\thesection}{\Alph{section}}

\setcounter{equation}{0}
\renewcommand{\theequation}{\thesection.\arabic{equation}}

\small

\begin{center}
	{\LARGE\bf Supplementary Material}\\
	\noindent{\Large\bf \newline
		Temporal M-quantile models and robust bias-corrected small area predictors}\blfootnote{\footnotesize{Supported by the projects PID2022-136878NB-I00 (Spain), PROMETEO-2021-063 (Valencia, Spain), PRIN-Quantification in the Context of Dataset Shift - QuaDaSh (Grant P2022TB5JF, Italy), MAPPE, Bando a Cascata Programme “Growing Resilient, Inclusive and Sustainable (GRINS)” PE0000018.}}
	\vspace{.4cm}
	
	{Mar\'{\i}a Bugallo$^{1}$, Domingo Morales$^{1}$, Nicola Salvati$^{2}$, Francesco Schirripa Spagnolo$^{2}$\\
		\vspace{.4cm}
		{\small $^{1}$Center of Operations Research, Miguel Hern\'{a}ndez University of Elche, Spain\\
			$^{2}$Department of Economics and Management, University of Pisa, Italy}\\}
	\vspace{.4cm}
	\vspace{.4cm}
	
\end{center}
\section{First-order approximation of the MSE of BTMQ predictors}\label{sect.app1}
In this section we derive a first-order approximation of the mean squared error (MSE) of the robust bias-corrected M-quantile (BTMQ) predictor \eqref{btmq} developed from the time weighted M-quantile (TWMQ) linear models \eqref{TMQ3} and propose its analytical estimator. 

Let $d=1,\dots, D$, $t=1,\dots, T$.
Let the BTMQ predictor of $\overline{Y}_{dt}$ be
\begin{eqnarray*}
	\hat{\overline{Y}}_{dt}^{btmq}
	&=&\frac{1}{N_{dt}}
	\bigg\{\sum_{j\in s_{dt}}y_{dtj}+\sum_{j\in r_{dt}}\xx_{dtj}^\prime\hat\bbeta_\psi\big(\hat\theta_d,\ww_t\big)\bigg\}
	+\frac{1}{n_{dt}}\Big(1-\frac{n_{dt}}{N_{dt}}\Big)\hat
	B_{dt}^{btmq},
\end{eqnarray*}
where $\hat
B_{dt}^{btmq}$ has been defined in \eqref{btmq}. We now define
\begin{eqnarray*}
	\overline{Y}_{dt}^{btmq}&=&\frac{1}{N_{dt}}
	\bigg\{\sum_{j\in s_{dt}}y_{dtj}+\sum_{j\in r_{dt}}\xx_{dtj}^\prime\bbeta_\psi(\theta_d,\ww_t)\bigg\}
	+\frac{1}{n_{dt}}\Big(1-\frac{n_{dt}}{N_{dt}}\Big){B}_{dt}^{btmq},\,
	\\
	\tilde{\overline{Y}}_{dt}^{btmq}&=&\frac{1}{N_{dt}}
	\bigg\{\sum_{j\in s_{dt}}y_{dtj}+\sum_{j\in r_{dt}}\xx_{dtj}^\prime\hat\bbeta_\psi(\theta_d,\ww_t)\bigg\}
	+\frac{1}{n_{dt}}\Big(1-\frac{n_{dt}}{N_{dt}}\Big)\tilde{B}_{dt}^{btmq},
	\\
	{B}_{dt}^{btmq}&=&\sum_{j\in s_{dt}}\sigma_{\theta_dt}\phi(u_{\psi,dtj}),\quad
	\tilde{B}_{dt}^{btmq}=\sum_{j\in s_{dt}}\sigma_{\theta_dt}\phi(\tilde{u}_{\psi,dtj}),
\end{eqnarray*}
where $u_{\psi,dtj}$ and $\tilde{u}_{\psi,dtj}$, $j=1,\dots, n_{dt}$, have been obtained in
\eqref{TMQ3resid1} and \eqref{TMQ3resid}, respectively.
The approximation of $MSE(\hat{\overline{Y}}_{dt}^{btmq})$ that we propose below accounts for the randomness of the unit-level M-quantile (MQ) coefficients $q_{dtj}$, $j=1,\ldots, N_{dt}$, coming from the TWMQ models, but assumes that $\theta_d$ and $\hat\theta_d$ are known. In fact, their estimates are derived from the three-level MQ linear (MQ3) models \eqref{MQ3}.
Not least, the standard deviations $\sigma_{\theta_dt}$ are assumed to be known, but their contribution to the final error is of minor importance (Huber, 1981).

Section \ref{Assect.ass} introduces the probabilistic framework and necessary assumptions to obtain a first-order asymptotic approximation of the MSE of the BTMQ predictor.

\subsection{Assumptions}\label{Assect.ass}
This section presents a set of assumptions necessary to obtain a first-order approximation of $MSE(\hat{\overline{Y}}_{dt}^{btmq})$.
For the influence function $\phi(u)$, we assume that
\begin{itemize}\setlength\itemsep{-0.2em}
	\item[($\Phi$1)]
	$\phi$ is differentiable at $u=0$, with $\phi(0)=0$ and $\dot{\phi}(0)=1$. If $|u|\geq c_\phi$,  $\phi(u)=c_\phi\ \text{sgn}(u)$.
\end{itemize}
This assumption is quite common for influence functions in the field of robust statistics.
The non-atypical data subsets are $\G_{dt}=\big\{j\in s_{dt}: \big|u_{\psi,dtj}\big| <c_{\phi}\big\}$, $\tilde{\G}_{dt}=\Big\{j\in s_{dt}: \big|\tilde{u}_{\psi,dtj}\big| <c_{\phi}\Big\}$ and $\hat{\G}_{dt}=\Big\{j\in s_{dt}: \big|\hat{u}_{\psi,dtj}\big| <c_{\phi}\Big\}$,
and the intersection subsets are $\tilde{\mathcal H}_{dt}=\mathcal G_{dt}\cap \tilde{\mathcal G}_{dt}$, $\hat{\mathcal H}_{dt}=\tilde{\mathcal G}_{dt}\cap \hat{\mathcal G}_{dt}$
and ${\mathcal G}_{dt}\cap\tilde{\mathcal G}_{dt}\cap \hat{\mathcal G}_{dt}=\tilde{\mathcal H}_{dt}\cap \hat{\mathcal H}_{dt}$. From assumption ($\Phi$1), we can write
\begin{eqnarray*}
	{B}_{dt}^{btmq}&=&
	\sum_{j\in \G_{dt}}\sigma_{\theta_dt}\phi(u_{\psi,dij})
	+c_\phi \sum_{j\in s_{dt}-\G_{dt}}\mbox{sgn}(e_{\psi,dtj}),
	\\
	\tilde{B}_{dt}^{btmq}&=&
	\sum_{j\in \tilde{\G}_{dt}}\sigma_{\theta_dt}\phi(\tilde{u}_{\psi,dtj})
	+c_\phi \sum_{j\in s_{dt}-\tilde{\G}_{dt}}\mbox{sgn}\big(\tilde{e}_{\psi,dtj}\big),
	\\
	\hat{B}_{dt}^{btmq}&=&
	\sum_{j\in \hat{\G}_{dt}}\sigma_{\theta_dt}\phi(\hat{u}_{\psi,dtj})
	+c_\phi \sum_{j\in s_{dt}-\hat{\G}_{dt}}\mbox{sgn}\big(\hat{e}_{\psi,dtj}\big).
\end{eqnarray*}

Related to the variables $\text{sgn}(e_{\psi,dtj})$, we define the probabilities
$$
\pi_{dtj}=P\big(\text{sgn}(e_{\psi,dtj})=-1\big),\quad
1-\pi_{dtj}=P\big(\text{sgn}(e_{\psi,dtj})=1\big), \quad j=1,\dots, N_{dt},
$$
so that
$$E[\text{sgn}(e_{\psi,dtj})]=1-2\pi_{dtj}, \quad \mbox{var}(\text{sgn}(e_{\psi,dtj}))=4\pi_{dtj}-4\pi_{dtj}^2.$$
Related to the variables $\text{sgn}(\tilde{e}_{\psi,dtj})-\text{sgn}(e_{\psi,dtj})$, we define the probabilities
$$
\tilde\pi_{a,dtj}=P\big(\text{sgn}(\tilde{e}_{\psi,dtj})-\text{sgn}(e_{\psi,dtj})=a\big),\quad a=-2,0,2, \quad j=1,\dots, N_{dt},
$$
so that $E[\text{sgn}(\tilde{e}_{\psi,dtj})-\text{sgn}(e_{\psi,dtj})]=2(\tilde\pi_{2,dtj}-\tilde\pi_{-2,dtj})$ and
$$\mbox{var}(\text{sgn}(\tilde{e}_{\psi,dtj})-\text{sgn}(e_{\psi,dtj}))=4(\tilde\pi_{2,dtj}+\tilde\pi_{-2,dtj})-4(\tilde\pi_{2,dtj}-\tilde\pi_{-2,dtj})^2.$$
Related to the variables $\text{sgn}(\hat{e}_{\psi,dtj})-\text{sgn}(\tilde{e}_{\psi,dtj})$, we define the probabilities
$$
\hat\pi_{a,dtj}=P\big(\text{sgn}(\hat{e}_{\psi,dtj})-\text{sgn}(\tilde{e}_{\psi,dtj})=a\big),\quad a=-2,0,2, \quad j=1,\dots, N_{dt},
$$
so that
$E[\text{sgn}(\hat{e}_{\psi,dtj})-\text{sgn}(\tilde{e}_{\psi,dtj})]=2(\hat\pi_{2,dtj}-\hat\pi_{-2,dtj})$ and
$$\mbox{var}(\text{sgn}(\hat{e}_{\psi,dtj})-\text{sgn}(\tilde{e}_{\psi,dtj}))=4(\hat\pi_{2,dtj}+\hat\pi_{-2,dtj})-4(\hat\pi_{2,dtj}-\hat\pi_{-2,dtj})^2.$$

The asymptotic theory will be developed under the following assumptions.\\
For the sample sizes, we assume
\begin{itemize}\setlength\itemsep{-0.2em}
	\item[(N1)]
	There exist $0<\pi_{dt}<1$ such that $\sum_{d=1}^D\sum_{t=1}^T\pi_{dt}=1$ and $\frac{n_{dt}}{n}\to\pi_{dt}$ as $n\to\infty$.
	\item[(N2)]
	There exist $0<f_{dt}<1$ such that $\frac{n_{dt}}{N_{dt}}\to f_{dt}$ as $n\to\infty$.
\end{itemize}
The asymptotic assumption (N1) avoid the possibility of domains with zero sample size.
Assumption (N2)  states that sample sizes and population sizes converge to the sample fractions reasonably far from the extremes values 0 and 1.

For the unit-level MQ coefficients, we assume
\begin{itemize}\setlength\itemsep{-0.2em}
	\item[(Q1)] For $i\in \T_t$, $j\in s_{di}$, $q_{dij}$ are independent variables with common variance $\xi_{dt}^2=\mbox{var}(q_{dij})$.
\end{itemize}
The unit-level MQ coefficients $q_{dij}$, defined in (\ref{qdtjwt}), play a similar role to that of random intercepts in linear mixed models (LMM).
In such models, it is common to assume that the random effects have constant variance.
Because of the parallelism that can be established between methodologies based on MQ models and LMMs, we include assumption (Q1).

For the estimator of the vector of regression parameters $\hat\bbeta_\psi(q,\ww_t)$, $0<q<1$, we adapt the conditions proposed in Bianchi and Salvati (2015) and write
\begin{itemize}
	\item[(A1)] $\bbeta_\psi(q,\ww_t)\in \Theta\subset \mathbb R^p$ is a twice-differentiable continuous function in its first component, $q$, where $\Theta$ is a compact subset of $\mathbb R^p$.
	\item[(A2)]
	$\psi$ is continuous, bounded and with bounded derivative, except at a finite number of points.
	\item[(A3)]
	For $i\in \T_t$, $j\in s_{di}$, $E[|\xx_{dij}|^4]<\infty$ and $E[|e_{dij}|^4]<\infty$.
	\item[(A4)]
	For $i\in \T_t$, $j\in s_{di}$, $E[\xx_{dij}\xx_{dij}'\dot{\psi}(e_{\psi, dij}(q,\ww_t),\sigma_{qt})]$ is uniformly non singular, where $\dot{\psi}_{qt}$ is the partial derivative of $\psi_{qt}$ with respect to the first argument.
	\item[(A5)]
	The preliminary estimator $\hat\bbeta_\psi^{(0)}(q,\ww_t)$ of $\bbeta_\psi(q,\ww_t)$
	is such that $$\sqrt{n}(\hat\bbeta_\psi^{(0)}(q,\ww_t)-\bbeta_\psi(q,\ww_t))=O_p(1).$$
	\item[(A6)]
	$\exists \delta_1>0$: $\forall e\in (-\delta_1, \delta_1)\ \exists \dot F_{qt}(\mbox{med}_{\psi,n}(q,\ww_t)+e)$ and is continuous and positive at $e=0$. $\dot F_{qt}$ is the first order derivative of $F_{qt}$, which is the cumulative distribution function (c.d.f). of $e_{\psi,dij}(q,\ww_t)$, $i\in \T_t$, $j\in s_{di}$.
	\item[(A7)]
	$\exists \delta_2>0$: $\forall e\in (-\delta_2, \delta_2)\ \exists  \dot F_{qt}(\mbox{med}_{\psi,n}(q,\ww_t)\pm0.6745\sigma_{qt}+e)$ and is continuous at $e=0$.
	\item[(A8)]
	$\exists \delta_3>0$: $\forall e\in (\sigma_{qt}-\delta_3, \sigma_{qt}+\delta_3)$ $\exists\dot F_{qt}(\mbox{med}_{\psi,n}(q,\ww_t)+e)+\dot F_{qt}(\mbox{med}_{\psi,n}(q,\ww_t)-e)>0$.
\end{itemize}
Assumption (A3) is a technical moment condition required for the application of the Uniform Law of Large Numbers and the asymptotic representation.
Assumption (A4) is an identifiability condition.
Assumptions (A5)-(A8) are needed for the Bahadur representation of the median absolute deviation (MAD) estimator (see Welsh, 1986).
In the case of the Huber influence function (\ref{psiHuber}), assumptions (A1) and (A2) are satisfied.
To guarantee assumption (A4), one may require that for any $\bbeta_\psi(q,\ww_t)\in \Theta$ and $c>0$,
\[P(\sigma_{qt}^{-1}|y_{dij}-\xx_{dij}^\prime\bbeta_\psi(q,\ww_t)|\leq c |\,\xx_{dij})>\varepsilon>0,  \ i\in \T_t, \ j\in s_{di}.\]
In practice, this is verified if most of the residuals belong to the strictly convex region of $\psi$.
Under assumptions (A1)-(A8), it holds that $\hat{\bbeta}_\psi(\theta_d,\ww_t)-\bbeta_\psi(\theta_d,\ww_t)=O_p(n^{-1/2})$. We include
\begin{itemize}
	\item[(A9)]
	$\exists\delta_4>0$: $\forall\theta\in (\theta_d-\delta_4, \theta_d+\delta_4)$,
	$\hat{\bbeta}_\psi(\theta,\ww_t)-\bbeta_\psi(\theta_d,\ww_t)=O_p(n^{-1/2})$.
\end{itemize}
Assumption (A9) is needed to maintain the asymptotic plausibility of assumption (A8) in a neighborhood of $\theta_d$, as in practice
$\theta_d$ is substituted by $\hat\theta_d$, obtained from models (\ref{MQ3}).

For the MSE of the BTMQ predictor, we assume three groups of assumptions.
The first group concerns the model errors (\ref{TMQ3resid1}) of the TWMQ models (\ref{TMQ3}).
The assumptions are
\begin{itemize}\setlength\itemsep{-0.2em}
	\item[(B1)]
	$\frac{1}{N_{dt}}\sum_{j\in U_{dt}}E\big[e_{\psi,dtj}^2\big]=O(1)$.
	\item[(B2)]
	$\frac{1}{N_{dt}}\sum_{j\in U_{dt}}\pi_{dtj}=O(1)$ and $\frac{1}{N_{dt}}\sum_{j\in U_{dt}}\pi_{dtj}^2=O(1)$.
	\item[(B3)] $ \frac{1}{n_{dt}}\sum_{j\in s_{dt}}E\big[e_{\psi,dtj}\big]
	=\frac{1}{N_{dt}-n_{dt}} \sum_{j\in r_{dt}}E\big[e_{\psi,dtj}\big]+o(1)$.
	\item[(B4)]
	$\exists\delta_5>0$: $\forall\theta\in (\theta_d-\delta_5, \theta_d+\delta_5)$,
	$\frac{1}{N_{dt}}\sum_{j\in U_{dt}}\Big( \xx_{dtj}^\prime\frac{\partial^2\bbeta_\psi(q,\ww_t)}{\partial q^2}\Big|_{q=\theta}\Big)^2=O(1)$.
\end{itemize}
Assumption (B1) states that the second order moment of the model errors is bounded in average.
Assumption (B2) is fulfilled as $\pi_{dtj}$'s are the probabilities that the model errors are negative.
Assumption (B3) states that the sample average of the expected model errors behaves similarly in the sample and the non-sample subsets.
Assumption (B4) is a technical condition required for the application of the Uniform Law of Large Numbers and the asymptotic representation.

The second group concerns the pseudo-residuals (\ref{TMQ3resid}) of the TWMQ models (\ref{TMQ3}).
The assumptions are
\begin{itemize}\setlength\itemsep{-0.2em}
	\item[(C1)]
	$\frac{1}{N_{dt}}\sum_{j\in U_{dt}}E\big[(\tilde{e}_{\psi,dtj}-e_{\psi,dtj})^4\big]=o(1)$.
	\item[(C2)]
	$\frac{1}{N_{dt}}\sum_{j\in U_{dt}}\tilde{\pi}_{a,dtj}=o(1)$ and $\frac{1}{N_{dt}}\sum_{j\in U_{dt}}\tilde{\pi}^2_{a,dtj}=o(1)$, $a=-2,2$.
	\item[(D1)]
	$\frac{1}{N_{dt}}\sum_{j\in U_{dt}}E\big[(\hat{e}_{\psi,dtj}-\tilde{e}_{\psi,dtj})^4\big]=o(1)$.
	\item[(D2)]
	$\frac{1}{N_{dt}}\sum_{j\in U_{dt}}\hat{\pi}_{a,dtj}^{b}=o(1)$ and $\frac{1}{N_{dt}}\sum_{j\in U_{dt}}\hat{\pi}^2_{a,dtj}=o(1)$, $a=-2,2$.
\end{itemize}
Assumptions (C1) and (D1) state that the fourth order moment of the differences between pseudo-residuals and pseudo-residuals and model errors is bounded in average.
Assumptions (C2) and (D2) are fulfilled as $\tilde{\pi}_{a,dtj}$'s and $\hat{\pi}_{a,dtj}$'s  are the probabilities that the pseudo-residuals are negative. The third group concerns the independence of the model errors and pseudo-residuals of the TWMQ models \eqref{TMQ3}.
The assumptions are
\begin{itemize}\setlength\itemsep{-0.2em}
	\item[(E1)] For $j\in U_{dt}$, $e_{\psi,dtj}$ are independent random variables.
	\item[(E2)] For $j\in U_{dt}$, $\tilde e_{\psi,dtj}$ are independent random variables.
	\item[(E3)] For $j\in U_{dt}$, $\hat e_{\psi,dtj}$ are independent random variables.
\end{itemize}
\subsection{Part I: Dealing with the differences {\boldmath $\overline{Y}_{dt}^{btmq}-\overline{Y}_{dt}$}}\label{appex.P1}
A Taylor series expansion of $\bbeta_\psi(q_{dtj},\ww_t)$ around $\theta_d$ yields to
\begin{equation}\label{T0}
	\bbeta_\psi(q_{dtj},\ww_t)=\bbeta_\psi(\theta_d,\ww_t)
	+\frac{\partial\bbeta_\psi(q,\ww_t)}{\partial q}\Big|_{q=\theta_d}(q_{dtj}-\theta_d)+\bm r_{\psi,dtj}(\theta_d), \ \ j=1,\ldots,N_{dt},
\end{equation}
where $\bm r_{\psi,dtj}(\theta_d)\triangleq\bm r_{dtj}=(r_{dtj1},\dots, r_{dtjp})^\prime$ and $r_{dtjk}=O_p\big((q_{dtj}-\theta_d)^2\big)$, $k=1,\ldots,p$,
with
$$
\Vert \bm r_{dtj}\Vert_{2}=\frac12(q_{dtj}-\theta_d)^2\Big\Vert\frac{\partial^2\bbeta_\psi(q,\ww_t)}{\partial q^2}\Big|_{q=\theta_{dtj}^*}\Big\Vert_{2}\leq
\Big\Vert\frac{\partial^2\bbeta_\psi(q,\ww_t)}{\partial q^2}\Big|_{q=\theta_{dtj}^*}\Big\Vert_{2},\ \
|\theta_{dtj}^*-\theta_d|<|q_{dtj}-\theta_d|.
$$
We introduce the notation
$$
\bm\kappa_\psi(\theta_d,\ww_t)=\big(\kappa_{\psi 1}(\theta_d,\ww_t),\ldots,\kappa_{\psi p}(\theta_d,\ww_t)\big)^\prime;\quad
\kappa_{\psi k}(\theta_d,\ww_t)=\frac{\partial\beta_{\psi k}(q,\ww_t)}{\partial q}\Big|_{q=\theta_d},\,\, k=1,\ldots,p.
$$
From the Taylor series expansion (\ref{T0}), the model errors defined in \eqref{TMQ3resid1} can be written as
\begin{eqnarray*}\label{edtjTaylor}
	e_{\psi,dtj}&=&y_{dtj}-\xx_{dtj}^\prime\bbeta_\psi(\theta_d,\ww_t)
	=\xx_{dtj}^\prime(\bbeta_\psi(q_{dtj},\ww_t)-\bbeta_\psi(\theta_d,\ww_t))
	\nonumber
	\\
	&=&\xx_{dtj}^\prime\bm\kappa_\psi(\theta_{d},\ww_t)(q_{dtj}-\theta_d)+\xx_{dtj}^\prime \bm r_{dtj}, \quad j=1,\ldots,N_{dt}.
\end{eqnarray*}
From assumption (Q1),  $\mbox{var}(q_{dtj}-\theta_d)=\mbox{var}(q_{dtj})=\xi_{dt}^2$, $j=1,\dots, N_{dt}$, and
\begin{equation}\label{varedtjTaylor}
	\mbox{var}(e_{\psi,dtj})=\big(\xx_{dtj}^\prime\bm\kappa_\psi(\theta_{d},\ww_t)\big)^2\xi_{dt}^2+\xx_{dtj}^\prime\mbox{var}(\bm r_{dtj})\xx_{dtj}
	+\big(\xx_{dtj}^\prime\bm\kappa_\psi(\theta_{d},\ww_t)\big)^2\text{cov}\big(q_{dtj},q_{dtj}^2\big).
\end{equation}
Assumption (B4) implies that
\begin{eqnarray*}\label{HPI.1}
	\frac{1}{n_{dt}}\sum_{j\in s_{dt}}\xx_{dtj}^\prime\mbox{var}(\bm r_{dtj})\xx_{dtj}&=&
	\frac{1}{4n_{dt}}\sum_{j\in s_{dt}}\Big\{ \xx_{dtj}^\prime\bm\kappa_\psi(\theta_{dtj}^*,\ww_t)
	\bm\kappa_\psi^\prime(\theta_{dtj}^*,\ww_t) \xx_{dtj}\Big\}\,\text{var}(q_{dtj}^2)
	\nonumber
	\\
	&\leq&\frac{1}{4n_{dt}}\sum_{j\in s_{dt}}\big(\xx_{dtj}^\prime\bm\kappa_\psi(\theta_{dtj}^*,\ww_t)\big)^2=O(1),
\end{eqnarray*}
and
\begin{eqnarray*}
	\Big|\frac{1}{n_{dt}}\sum_{j\in s_{dt}}\big(\xx_{dtj}^\prime\bm\kappa_\psi(\theta_{dtj}^*,\ww_t)\big)^2E[q_{dtj}^3]\Big|&\leq&
	\frac{1}{n_{dt}}\sum_{j\in s_{dt}}\big(\xx_{dtj}^\prime\bm\kappa_\psi(\theta_{dtj}^*,\ww_t)\big)^2\big|E[q_{dtj}]\big|^3=O(1),
	\\
	\Big|\frac{1}{n_{dt}}\sum_{j\in s_{dt}}\big(\xx_{dtj}^\prime\bm\kappa_\psi(\theta_{dtj}^*,\ww_t)\big)^2E[q_{dtj}]\Big|&\leq&
	\frac{1}{n_{dt}}\sum_{j\in s_{dt}}\big(\xx_{dtj}^\prime\bm\kappa_\psi(\theta_{dtj}^*,\ww_t)\big)^2\big|E[q_{dtj}]\big|=O(1).
\end{eqnarray*}
Collecting these results, we obtain that
\begin{equation}\label{meanvaredtj}
	\frac{1}{n_{dt}}\sum_{j\in s_{dt}}\mbox{var}(e_{\psi,dtj})=\frac{1}{n_{dt}}\sum_{j\in s_{dt}}\big(\xx_{dtj}^\prime\bm\kappa_\psi(\theta_{d},\ww_t)\big)^2\xi_{dt}^2+O(1).
\end{equation}
The corresponding standardized model errors, defined in \eqref{TMQ3resid1}, can be written as
\begin{eqnarray*}
	u_{\psi,dtj}=\sigma_{\theta_dt}^{-1}\,\xx_{dtj}^\prime\bm\kappa_\psi(\theta_d,\ww_t)(q_{dtj}-\theta_d)
	+\sigma_{\theta_dt}^{-1}\,\xx_{dtj}^\prime \bm r_{dtj}, \quad j=1,\dots, N_{dt}.
\end{eqnarray*}
As $\phi(0)=0$ and $\dot{\phi}(0)=1$, a Taylor series expansion of $\phi(u_{\psi,dtj})$ around $u=0$ yields to
\begin{equation}\label{T1}
	\phi(u_{\psi,dtj})=\phi(0)+\dot{\phi}(0)u_{\psi,dtj}+R_{dtj}=u_{\psi,dtj}+R_{dtj}, \quad j=1,\dots, N_{dt},
\end{equation}
where $R_{dtj}=\frac12\,u_{\psi,dtj}^{*2}$, $0<|u_{\psi,dtj}^{*}|<|u_{\psi,dtj}|$ and
$\mbox{var}(R_{dtj})\leq\mbox{var}(u_{\psi,dtj})$.

From assumption (B1), we have
\begin{equation}\label{HPI.2}
	\frac{1}{n_{dt}}\sum_{j\in\G_{dt}}E[R_{dtj}]=O(1),\quad
	\frac{1}{n_{dt}}\sum_{j\in\G_{dt}}\mbox{var}(R_{dtj})=O(1).
\end{equation}
From the Taylor series expansion (\ref{T1}), we have
\begin{eqnarray*}
	B_{dt}^{btmq}&=&
	\sum_{j\in \G_{dt}}\sigma_{\theta_dt}\phi(u_{\psi,dtj})+c_\phi \sum_{j\in s_{dt}-\G_{dt}}\mbox{sgn}\big(e_{\psi,dtj}\big)\\
	&=&\sum_{j\in \G_{dt}}e_{\psi,dtj}+\sigma_{\theta_dt}\sum_{j\in \G_{dt}} R_{dtj}+c_\phi \sum_{j\in s_{dt}-\G_{dt}}\mbox{sgn}\big(e_{\psi,dtj}\big).
\end{eqnarray*}
The prediction difference $\overline{Y}_{dt}^{(1)}=\overline{Y}_{dt}^{btmq}-\overline{Y}_{dt}$ is
\begin{eqnarray}\label{barYdt1}
	\nonumber
	\overline{Y}_{dt}^{(1)}&=&\frac{1}{N_{dt}}\bigg(\sum_{j\in s_{dt}}y_{dtj}+\sum_{j\in r_{dt}} \xx_{dtj}^\prime\bbeta_{\psi}(\theta_d,\ww_t)\bigg)
	+\Big(1-\frac{n_{dt}}{N_{dt}}\Big)\frac{1}{n_{dt}}B_{dt}^{btmq}\\
	&-&\frac{1}{N_{dt}}\underset{j\in U_{dt}}{\sum}\xx_{dtj}^\prime\bbeta_{\psi}(q_{dtj},\ww_t)
	\nonumber
	\\
	&=&\Big(1-\frac{n_{dt}}{N_{dt}}\Big)\frac{1}{n_{dt}}B_{dt}^{btmq}
	+\frac{1}{N_{dt}}\bigg(\sum_{j\in r_{dt}}\xx_{dtj}^\prime(\bbeta_{\psi}(\theta_d,\ww_t)-\bbeta_{\psi}(q_{dtj},\ww_t))\bigg)
	\nonumber
	\\
	&=&\sum_{j\in U_{dt}}\left(\Big(1-\frac{n_{dt}}{N_{dt}}\Big)\frac{1}{n_{dt}}I_{\G_{dt}}(j)-\frac{1}{N_{dt}}I_{r_{dt}}(j)\right) e_{\psi,dtj}
	\nonumber
	\\
	&+&\Big(1-\frac{n_{dt}}{N_{dt}}\Big)\frac{c_\phi}{n_{dt}}\sum_{j\in s_{dt}-\G_{dt}}\,\text{sgn}(e_{\psi,dtj})
	+\Big(1-\frac{n_{dt}}{N_{dt}}\Big)\frac{\sigma_{\theta_dt}}{n_{dt}}\sum_{j\in\G_{dt}}R_{dtj}.
\end{eqnarray}
From assumptions (E1) and (B2), we obtain that
\begin{eqnarray}\label{HPI.3}
	\nonumber
	\frac{1}{n_{dt}}\sum_{j\in s_{dt}-\G_{dt}}E[\text{sgn}(e_{\psi,dtj})]&=&\frac{1}{n_{dt}}\Big( \text{card}(s_{dt}-\G_{dt})-2\sum_{j\in s_{dt}-\G_{dt}}\pi_{dtj}\Big) =O(1),
	\\
	\frac{1}{n_{dt}}\sum_{j\in s_{dt}-\G_{dt}}\text{var}(\text{sgn}(e_{\psi,dtj}))&=&\frac{4}{n_{dt}}\sum_{j\in s_{dt}-\G_{dt}}\pi_{dtj}(1-\pi_{dtj})=O(1).
\end{eqnarray}
From (\ref{meanvaredtj}), (\ref{barYdt1}), (\ref{HPI.2}), (\ref{HPI.3}) and assumptions (B4) and (E1), the variance of $\overline{Y}_{dt}^{(1)}$ is
\begin{eqnarray*}
	V_{dt}^{(1)}&=&\text{var}\big(\overline{Y}_{dt}^{(1)}\big)=\sum_{j\in U_{dt}}
	\Big(\Big(1-\frac{n_{dt}}{N_{dt}}\Big)\frac{1}{n_{dt}}I_{\G_{dt}}(j)-\frac{1}{N_{dt}}I_{r_{dt}}(j)\Big)^2 \text{var}(e_{\psi,dtj})
	\\
	&+&\Big(1-\frac{n_{dt}}{N_{dt}}\Big)^2\frac{c_\phi^2}{n_{dt}^2}\sum_{j\in s_{dt}-\G_{dt}}\text{var}(\text{sgn}(e_{\psi,dtj}))
	+\Big(1-\frac{n_{dt}}{N_{dt}}\Big)^2\frac{\sigma_{\theta_dt}^2}{n_{dt}^2}\sum_{j\in\G_{dt}}\mbox{var}(R_{dtj})
	\\
	&=&\sum_{j\in U_{dt}}\Big(\Big(1-\frac{n_{dt}}{N_{dt}}\Big)^2\frac{1}{n_{dt}^2}I_{\G_{dt}}(j)+ \frac{1}{N_{dt}^2}I_{r_{dt}}(j)\Big)
	\big(\xx_{dtj}^\prime\bm\kappa_\psi(\theta_{d},\ww_t)\big)^2\xi_{dt}^2+o(n^{-1}).
\end{eqnarray*}
From (\ref{barYdt1}), (\ref{HPI.2}), (\ref{HPI.3}) and assumptions (B2) and (B3), we have that
\begin{eqnarray*}
	E_{dt}^{(1)}&=&E\big[\overline{Y}_{dt}^{(1)}\big]=\sum_{j\in U_{dt}}\Big(
	\Big(1-\frac{n_{dt}}{N_{dt}}\Big)\frac{1}{n_{dt}}I_{\G_{dt}}(j)- \frac{1}{N_{dt}}I_{r_{dt}}(j)\Big) E\big[e_{\psi,dtj}\big]
	\\
	&+&
	\Big(1-\frac{n_{dt}}{N_{dt}}\Big)\frac{c_\phi}{n_{dt}}\sum_{j\in s_{dt}-\G_{dt}}E\big[\text{sgn}(e_{\psi,dtj})\big]
	+\Big(1-\frac{n_{dt}}{N_{dt}}\Big)\frac{\sigma_{\theta_dt}}{n_{dt}}\sum_{j\in\G_{dt}}E[R_{dtj}]=\\&=&
	\Big(1-\frac{n_{dt}}{N_{dt}}\Big)\frac{c_\phi}{n_{dt}}\sum_{j\in s_{dt}-\G_{dt}}E\big[\text{sgn}(e_{\psi,dtj})\big]
	+\Big(1-\frac{n_{dt}}{N_{dt}}\Big)\frac{\sigma_{\theta_dt}}{n_{dt}}\sum_{j\in\G_{dt}}E[R_{dtj}]+o(1).
\end{eqnarray*}
From (\ref{HPI.2}) and (\ref{HPI.3}), it holds that $E_{dt}^{(1)}=O(1)$.
\subsection{Part II: Dealing with the differences {\boldmath $\tilde{\overline{Y}}_{dt}^{btmq}-\overline{Y}_{dt}^{btmq}$}}\label{appex.P2}
As $\phi(0)=0$ and $\dot{\phi}(0)=1$, a Taylor series expansion of $\phi(\tilde{u}_{\psi,dtj})$ around $u=0$ yields to
\begin{equation}\label{PIIT1}
	\phi(\tilde{u}_{\psi,dtj})=\phi(0)+\dot{\phi}(0)\tilde{u}_{\psi,dtj}+\tilde{R}_{dtj}=\tilde{u}_{\psi,dtj}+\tilde{R}_{dtj}, \quad j=1,\dots, N_{dt},
\end{equation}
where $\tilde{R}_{dtj}=\frac12\,\tilde{u}_{\psi,dtj}^{*2}$, $0<|\tilde{u}_{\psi,dtj}^{*}|<|\tilde{u}_{\psi,dtj}|$ and
$\mbox{var}(\tilde{R}_{dtj})\leq\mbox{var}(\tilde{u}_{\psi,dtj})$.

From assumption (C1), we have
\begin{equation}\label{HPII.2}
	\frac{1}{n_{dt}}\sum_{j\in\tilde\G_{dt}}E[\tilde{R}_{dtj}-{R}_{dtj}]=o(1),\quad
	\frac{1}{n_{dt}}\sum_{j\in\tilde\G_{dt}}\mbox{var}\big((\tilde{R}_{dtj}-{R}_{dtj})^2\big)=o(1).
\end{equation}
From the Taylor series expansion (\ref{PIIT1}), we have
\begin{eqnarray*}
	\tilde{B}_{dt}^{btmq}&=&
	\sum_{j\in \tilde \G_{dt}}\sigma_{\theta_dt}
	\phi\big(\tilde{u}_{\psi,dtj}\big)+c_\phi \sum_{j\in s_{dt}-\tilde\G_{dt}}\mbox{sgn}\big(\tilde{e}_{\psi,dtj}\big),
	\\
	&=&
	\sum_{j\in\tilde\G_{dt}}\tilde{e}_{\psi,dtj}+\sigma_{\theta_dt}\sum_{j\in\tilde\G_{dt}}\tilde{R}_{dtj}+c_\phi\sum_{j\in s_{dt}-\tilde\G_{dt}}\text{sgn}(\tilde{e}_{\psi,dtj}).
\end{eqnarray*}
As $\tilde{e}_{\psi,dtj}-e_{\psi,dtj}
=\xx_{dtj}^\prime\big(\bbeta_\psi(\theta_d,\ww_t)-\hat\bbeta_\psi(\theta_d,\ww_t)\big)=\tilde{e}_{\psi,dtj}(\theta_d)$,
$B_{dt}^{(2)}=\tilde{B}_{dt}^{btmq}-B_{dt}^{btmq}$ is
\begin{eqnarray*}
	B_{dt}^{(2)}
	&=&\sum_{j\in\tilde\G_{dt}}\tilde{e}_{\psi,dtj}+\sigma_{\theta_dt}\sum_{j\in\tilde\G_{dt}}\tilde{R}_{dtj}
	+c_\phi\sum_{j\in s_{dt}-\tilde\G_{dt}}\text{sgn}(\tilde{e}_{\psi,dtj})
	\\
	&-&\sum_{j\in\G_{dt}}e_{\psi,dtj}-\sigma_{\theta_dt}\sum_{j\in\G_{dt}}R_{dtj}-c_\phi\sum_{j\in s_{dt}-\G_{dt}}\text{sgn}(e_{\psi,dtj})
	\\
	&=&\sum_{j\in\tilde{\mathcal H}_{dt}}\tilde{e}_{\psi,dtj}(\theta_d)
	+\sum_{j\in\tilde\G_{dt}-\tilde{\mathcal H}_{dt}}\tilde{e}_{\psi,dtj}-\sum_{j\in\G_{dt}-\tilde{\mathcal H}_{dt}}e_{\psi,dtj}
	\\
	&+&\sigma_{\theta_dt}\Big( \sum_{j\in\tilde{\mathcal H}_{dt}}(\tilde R_{dtj}-R_{dtj})
	+\sum_{j\in\tilde\G_{dt}-\tilde{\mathcal H}_{dt}}\tilde R_{dtj}-\sum_{j\in\G_{dt}-\tilde{\mathcal H}_{dt}}R_{dtj}\Big)
	\\
	&+&c_\phi\Big( \sum_{j\in s_{dt}-\tilde{\mathcal H}_{dt}}(\text{sgn}(\tilde{e}_{\psi,dtj})-\text{sgn}({e}_{\psi,dtj}))
	\\
	&+&\sum_{j\in s_{dt}-(\tilde\G_{dt}-\tilde{\mathcal H}_{dt})}\text{sgn}(\tilde{e}_{\psi,dtj})-\sum_{j\in s_{dt}-(\G_{dt}-\tilde{\mathcal H}_{dt})}\text{sgn}({e}_{\psi,dtj})\Big).
\end{eqnarray*}
The prediction difference  $\overline{Y}_{dt}^{(2)}=\tilde{\overline{Y}}_{dt}^{btmq}-\overline{Y}_{dt}^{btmq}$ is
\begin{eqnarray*}
	\overline{Y}_{dt}^{(2)}&=&\frac{1}{N_{dt}}\bigg(\sum_{j\in s_{dt}}y_{dtj}+\sum_{j\in r_{dt}} \xx_{dtj}^\prime\hat\bbeta_{\psi}(\theta_d,\ww_t)\bigg)
	+\Big(1-\frac{n_{dt}}{N_{dt}}\Big)\frac{1}{n_{dt}}\tilde{B}_{dt}^{btmq}
	\\
	&-&\frac{1}{N_{dt}}\bigg(\sum_{j\in s_{dt}}y_{dtj}+\sum_{j\in r_{dt}} \xx_{dtj}^\prime\bbeta_{\psi}(\theta_d,\ww_t)\bigg)
	-\Big(1-\frac{n_{dt}}{N_{dt}}\Big)\frac{1}{n_{dt}}B_{dt}^{btmq}
	\\
	&=&\Big(1-\frac{n_{dt}}{N_{dt}}\Big)\frac{1}{n_{dt}}B_{dt}^{(2)}
	-\frac{1}{N_{dt}}\sum_{j\in r_{dt}}\tilde{e}_{\psi,dtj}(\theta_d).
\end{eqnarray*}
By substituting $B_{dt}^{(2)}$, we obtain
\begin{eqnarray*}
	\overline{Y}_{dt}^{(2)}&=&\Big(1-\frac{n_{dt}}{N_{dt}}\Big)\frac{1}{n_{dt}}
	\Bigg\{
	\sum_{j\in\tilde{\mathcal H}_{dt}}\tilde{e}_{\psi,dtj}(\theta_d)
	+\sum_{j\in\tilde\G_{dt}-\tilde{\mathcal H}_{dt}}\tilde{e}_{\psi,dtj}-\sum_{j\in\G_{dt}-\tilde{\mathcal H}_{dt}}e_{\psi,dtj}
	\\
	&+&\sigma_{\theta_dt}\Big( \sum_{j\in\tilde{\mathcal H}_{dt}}(\tilde R_{dtj}-R_{dtj})
	+\sum_{j\in\tilde\G_{dt}-\tilde{\mathcal H}_{dt}}\tilde R_{dtj}-\sum_{j\in\G_{dt}-\tilde{\mathcal H}_{dt}}R_{dtj}\Big)
	\\
	&+&c_\phi\Big( \sum_{j\in s_{dt}-\tilde{\mathcal H}_{dt}}(\text{sgn}(\tilde{e}_{\psi,dtj})-\text{sgn}({e}_{\psi,dtj}))
	+\sum_{j\in s_{dt}-(\tilde\G_{dt}-\tilde{\mathcal H}_{dt})}\text{sgn}(\tilde{e}_{\psi,dtj})
	\\
	&-&\sum_{j\in s_{dt}-(\G_{dt}-\tilde{\mathcal H}_{dt})}\text{sgn}({e}_{\psi,dtj})\Big)
	\Big\}
	-\frac{1}{N_{dt}}\sum_{j\in r_{dt}}\tilde{e}_{\psi,dtj}(\theta_d).
\end{eqnarray*}
We can $\overline{Y}_{dt}^{(2)}$ in the form
\begin{eqnarray*}
	&&\overline{Y}_{dt}^{(2)}=\sum_{j\in U_{dt}}
	\Big(\Big(1-\frac{n_{dt}}{N_{dt}}\Big)\frac{1}{n_{dt}}I_{\tilde{\mathcal H}_{dt}}(j)-\frac{1}{N_{dt}}I_{r_{dt}}(j)\Big)\tilde{e}_{\psi,dtj}(\theta_d)
	\\
	&+&
	\Big(1-\frac{n_{dt}}{N_{dt}}\Big)\frac{1}{n_{dt}}\bigg\{
	\sigma_{\theta_dt}\sum_{j\in\tilde{\mathcal H}_{dt}}\big\{\tilde{R}_{dtj}-R_{dtj}\big\}
	+c_\phi\sum_{j\in s_{dt}-\tilde{\mathcal H}_{dt}}\big(\text{sgn}(\tilde{e}_{\psi,dtj})-\text{sgn}(e_{\psi,dtj})\big)\bigg\}
	\\
	&+&\Big(1-\frac{n_{dt}}{N_{dt}}\Big)\frac{1}{n_{dt}}\bigg\{
	\sum_{j\in\tilde\G_{dt}-\tilde{\mathcal H}_{dt}}\tilde{e}_{\psi,dtj}-\sum_{j\in\G_{dt}-\tilde{\mathcal H}_{dt}}e_{\psi,dtj}
	+\sum_{j\in\tilde\G_{dt}-\tilde{\mathcal H}_{dt}}\tilde R_{dtj}-\sum_{j\in\G_{dt}-\tilde{\mathcal H}_{dt}}R_{dtj}\bigg\}
	\\
	&+&\Big(1-\frac{n_{dt}}{N_{dt}}\Big)\frac{c_\phi}{n_{dt}}\bigg\{
	\sum_{j\in s_{dt}-(\tilde\G_{dt}-\tilde{\mathcal H}_{dt})}\text{sgn}(\tilde{e}_{\psi,dtj})
	-\sum_{j\in s_{dt}-(\G_{dt}-\tilde{\mathcal H}_{dt})}\text{sgn}({e}_{\psi,dtj})\bigg\}.
\end{eqnarray*}
Assumption (E2) implies that
\begin{align*}
	\sum_{j\in s_{dt}-(\G_{dt}-\tilde{\mathcal H}_{dt})}E[\text{sgn}(\tilde{e}_{\psi,dtj})]&=2\sum_{j\in s_{dt}-\G_{dt}}(\hat\pi_{2,dtj}-\hat\pi_{-2,dtj}),
	\\
	\sum_{j\in s_{dt}-(\G_{dt}-\tilde{\mathcal H}_{dt})}\text{var}(\text{sgn}(\tilde{e}_{\psi,dtj}))&=
	4\sum_{j\in s_{dt}-\G_{dt}}(\hat\pi_{2,dtj}+\hat\pi_{-2,dtj})-4\sum_{j\in s_{dt}-\G_{dt}}(\hat\pi_{2,dtj}-\hat\pi_{-2,dtj})^2.
\end{align*}
From assumption (C2), we obtain that
\begin{equation}\label{HPII.3}
	\frac{1}{n_{dt}}\sum_{j\in s_{dt}-(\G_{dt}-\tilde{\mathcal H}_{dt})}E[\text{sgn}(\tilde e_{\psi,dtj})]=o(1),\quad
	\frac{1}{n_{dt}}\sum_{j\in s_{dt}-(\G_{dt}-\tilde{\mathcal H}_{dt})}\text{var}(\text{sgn}(\tilde e_{\psi,dtj}))=o(1).
\end{equation}
From (\ref{HPII.2}), (\ref{HPII.3}) and assumption (E2), the variance of $\overline{Y}_{dt}^{(2)}$ is
\begin{eqnarray*}
	V_{dt}^{(2)}&=&\text{var}\big(\overline{Y}_{dt}^{(2)}\big)
	=\sum_{j\in U_{dt}}
	\Big(\Big(1-\frac{n_{dt}}{N_{dt}}\Big)\frac{1}{n_{dt}}I_{\tilde{\mathcal H}_{dt}}(j)-\frac{1}{N_{dt}}I_{r_{dt}}(j)\Big)^2
	\text{var}\big(\tilde{e}_{\psi,dtj}(\theta_d)\big)
	\\
	&+&\Big(1-\frac{n_{dt}}{N_{dt}}\Big)^2\frac{\sigma_{\theta_dt}^2}{n_{dt}^2}
	\sum_{j\in\tilde{\mathcal H}_{dt}}\text{var}\big(\tilde{R}_{dtj}-R_{dtj}\big)
	\\
	&+&\Big(1-\frac{n_{dt}}{N_{dt}}\Big)^2\frac{c_\phi^2}{n_{dt}^2}
	\sum_{j\in s_{dt}-\tilde{\mathcal H}_{dt}}\text{var}\big(\text{sgn}(\tilde{e}_{\psi,dtj})-\text{sgn}(e_{\psi,dtj})\big)
	+ o(n^{-1})
	\\
	&=&
	\sum_{j\in U_{dt}}
	\Big(\Big(1-\frac{n_{dt}}{N_{dt}}\Big)^2\frac{1}{n_{dt}^2}I_{\tilde{\mathcal H}_{dt}}(j)+\frac{1}{N_{dt}^2}I_{r_{dt}}(j)\Big)
	\text{var}\big(\tilde{e}_{\psi,dtj}(\theta_d)\big)+o(n^{-1}),
\end{eqnarray*}
where $\text{var}\big(\tilde{e}_{\psi,dtj}(\theta_d)\big)=
\xx_{dtj}^\prime\text{var}\big(\hat\bbeta_{\psi}(\theta_d,\ww_t)\big)\xx_{dtj}$, $j=1,\ldots,N_{dt}$.

From assumption (A9), it holds that
\begin{equation}\label{tildeEedtj}
	\frac{1}{N_{dt}}\sum_{j\in U_{dt}}E\big[\tilde{e}_{\psi,dtj}(\theta_d)\big]=
	\overline{\xx}_{dt}^\prime E\big[\bbeta_{\psi}(\theta_d,\ww_t)-\hat\bbeta_{\psi}(\theta_d,\ww_t)\big]=O(n^{-1/2}),
\end{equation}
where we have defined the population mean of the vector of explanatory variables as
$$\bar{\xx}_{dt}'=\frac{1}{N_{dt}}\sum_{j\in U_{dt}}x_{dtj}'.$$

From (\ref{tildeEedtj}), (\ref{HPII.2}) and (\ref{HPII.3}), the expected prediction difference is
\begin{align*}
	E_{dt}^{(2)}&=E\big[\overline{Y}_{dt}^{(2)}\big]=\sum_{j\in U_{dt}}\Big(
	\Big(1-\frac{n_{dt}}{N_{dt}}\Big)\frac{1}{n_{dt}}I_{s_{dt}}(j)- \frac{1}{N_{dt}}I_{r_{dt}}(j)\Big) E\big[\tilde{e}_{\psi,dtj}(\theta_d)\big]
	\\&
	+\Big(1-\frac{n_{dt}}{N_{dt}}\Big)\frac{1}{n_{dt}}\bigg\{
	\sigma_{\theta_dt}\sum_{j\in\tilde{\mathcal H}_{dt}}E\big[\tilde{R}_{dtj}-R_{dtj}\big]
	\\&+c_\phi\sum_{j\in s_{dt}-\tilde{\mathcal H}_{dt}}E\big[\text{sgn}(\tilde{e}_{\psi,dtj})-\text{sgn}(e_{\psi,dtj})\big]\bigg\}=
	o(1).
\end{align*}
\subsection{Part III: Dealing with the differences {\boldmath $\hat{\overline{Y}}_{dt}^{btmq}-\tilde{\overline{Y}}_{dt}^{btmq}$}}\label{appex.P3}
As $\phi(0)=0$ and $\dot{\phi}(0)=1$, a Taylor series expansion of $\phi(\hat{u}_{\psi,dtj})$ around $u=0$ yields to
\begin{equation}\label{PIIIT1}
	\phi(\hat{u}_{\psi,dtj})=\phi(0)+\dot{\phi}(0)\hat{u}_{\psi,dtj}+\hat{R}_{\psi,dtj}=\hat{u}_{\psi,dtj}+\hat{R}_{\psi,dtj}, \quad j=1,\dots, N_{dt},
\end{equation}
where $\hat{R}_{dtj}=\frac12\,\hat{u}_{\psi,dtj}^{*2}$, $0<|\hat{u}_{\psi,dtj}^{*}|<|\hat{u}_{\psi,dtj}|$ and
$\mbox{var}(\hat{R}_{dtj})\leq\mbox{var}(\hat{u}_{\psi,dtj})$.

From assumption (D1), we have
\begin{equation}\label{HPIII.2}
	\frac{1}{n_{dt}}\sum_{j\in\hat\G_{dt}}E[\hat{R}_{dtj}-\tilde{R}_{dtj}]=o(1),\quad
	\frac{1}{n_{dt}}\sum_{j\in\hat\G_{dt}}\mbox{var}\big((\hat{R}_{dtj}-\tilde{R}_{dtj})^2\big)=o(1).
\end{equation}
From the Taylor series expansion (\ref{PIIIT1}), we have
\begin{eqnarray*}
	\hat{B}_{dt}^{btmq}&=&
	\sum_{j\in \hat \G_{dt}}\sigma_{\theta_dt}
	\phi\big(\hat{u}_{\psi,dtj}\big)+c_\phi \sum_{j\in s_{dt}-\hat\G_{dt}}\mbox{sgn}\big(\hat{e}_{\psi,dtj}\big),
	\\
	&=&
	\sum_{j\in\hat\G_{dt}}\hat{e}_{\psi,dtj}+\sigma_{\theta_dt}\sum_{j\in\hat\G_{dt}}\hat{R}_{dtj}+c_\phi\sum_{j\in s_{dt}-\hat\G_{dt}}\text{sgn}(\hat{e}_{\psi,dtj}).
\end{eqnarray*}
As $\hat{e}_{\psi,dtj}-\tilde{e}_{\psi,dtj}
=\xx_{dtj}^\prime\big(\bbeta_\psi(\hat\theta_d,\ww_t)-\hat\bbeta_\psi(\hat\theta_d,\ww_t)\big)=\hat {e}_{\psi,dtj}(\theta_d)$, $B_{dt}^{(3)}=\hat{B}_{dt}^{btmq}-\tilde{B}_{dt}^{btmq}$ is
\begin{eqnarray*}
	B_{dt}^{(3)}
	&=&\sum_{j\in\hat\G_{dt}}\hat{e}_{\psi,dtj}+\sigma_{\theta_dt}\sum_{j\in\hat\G_{dt}}\hat{R}_{dtj}
	+c_\phi\sum_{j\in s_{dt}-\hat\G_{dt}}\text{sgn}(\hat{e}_{\psi,dtj})
	\\
	&-&\sum_{j\in\tilde\G_{dt}}\tilde{e}_{\psi,dtj}+\sigma_{\theta_dt}\sum_{j\in\tilde\G_{dt}}\tilde{R}_{dtj}
	+c_\phi\sum_{j\in s_{dt}-\tilde\G_{dt}}\text{sgn}(\tilde{e}_{\psi,dtj})
	\\
	&=&\sum_{j\in\hat{\mathcal H}_{dt}}\hat {e}_{\psi,dtj}(\theta_d)
	+\sum_{j\in\hat\G_{dt}-\hat{\mathcal H}_{dt}}\hat{e}_{\psi,dtj}-\sum_{j\in\tilde\G_{dt}-\tilde{\mathcal H}_{dt}}\tilde{e}_{\psi,dtj}
	\\
	&+&\sigma_{\theta_dt}\Big( \sum_{j\in\hat{\mathcal H}_{dt}}(\hat R_{dtj}-\tilde{R}_{dtj})
	+\sum_{j\in\hat\G_{dt}-\hat{\mathcal H}_{dt}}\hat R_{dtj}-\sum_{j\in\tilde\G_{dt}-\hat{\mathcal H}_{dt}}\tilde R_{dtj}\Big)
	\\
	&+&c_\phi\Big( \sum_{j\in s_{dt}-\hat{\mathcal H}_{dt}}(\text{sgn}(\hat{e}_{\psi,dtj})-\text{sgn}(\tilde{e}_{\psi,dtj}))
	\\
	&+&\sum_{j\in s_{dt}-(\hat\G_{dt}-\hat{\mathcal H}_{dt})}\text{sgn}(\hat{e}_{\psi,dtj})-\sum_{j\in s_{dt}-(\tilde\G_{dt}-\hat{\mathcal H}_{dt})}\text{sgn}(\tilde{e}_{\psi,dtj})\Big).
\end{eqnarray*}
The prediction difference $\overline{Y}_{dt}^{(3)}=\hat{\overline{Y}}_{dt}^{btmq}-\tilde{\overline{Y}}_{dt}^{btmq}$ is
\begin{eqnarray*}
	\overline{Y}_{dt}^{(3)}&=&\frac{1}{N_{dt}}\bigg(\sum_{j\in s_{dt}}y_{dtj}+\sum_{j\in r_{dt}} \xx_{dtj}^\prime\hat\bbeta_{\psi}(\hat\theta_d,\ww_t)\bigg)
	+\Big(1-\frac{n_{dt}}{N_{dt}}\Big)\frac{1}{n_{dt}}\hat{B}_{dt}^{btmq}
	\\
	&-&\frac{1}{N_{dt}}\bigg(\sum_{j\in s_{dt}}y_{dtj}+\sum_{j\in r_{dt}} \xx_{dtj}^\prime\hat\bbeta_{\psi}(\theta_d,\ww_t)\bigg)
	+\Big(1-\frac{n_{dt}}{N_{dt}}\Big)\frac{1}{n_{dt}}\tilde{B}_{dt}^{btmq}
	\\
	&=&\Big(1-\frac{n_{dt}}{N_{dt}}\Big)\frac{1}{n_{dt}}B_{dt}^{(3)}
	-\frac{1}{N_{dt}}\sum_{j\in r_{dt}} \hat {e}_{\psi,dtj}(\theta_d).
\end{eqnarray*}
By substituting $B_{dt}^{(3)}$, we obtain
\begin{eqnarray*}
	\overline{Y}_{dt}^{(3)}&=&\Big(1-\frac{n_{dt}}{N_{dt}}\Big)\frac{1}{n_{dt}}
	\Bigg\{
	\sum_{j\in\tilde{\mathcal H}_{dt}}\hat {e}_{\psi,dtj}(\theta_d)
	+\sum_{j\in\hat\G_{dt}-\hat{\mathcal H}_{dt}}\hat{e}_{\psi,dtj}-\sum_{j\in\tilde\G_{dt}-\hat{\mathcal H}_{dt}}\tilde{e}_{\psi,dtj}\\
	&+&\sigma_{\theta_dt}\Big( \sum_{j\in\hat{\mathcal H}_{dt}}(\hat R_{dtj}-\tilde R_{dtj})
	+\sum_{j\in\hat\G_{dt}-\hat{\mathcal H}_{dt}}\hat R_{dtj}-\sum_{j\in\tilde\G_{dt}-\hat{\mathcal H}_{dt}}\tilde R_{dtj}\Big)
	\\
	&+&c_\phi\Big( \sum_{j\in s_{dt}-\hat{\mathcal H}_{dt}}(\text{sgn}(\hat{e}_{\psi,dtj})-\text{sgn}(\tilde{e}_{\psi,dtj}))
	+\sum_{j\in s_{dt}-(\hat\G_{dt}-\hat{\mathcal H}_{dt})}\text{sgn}(\hat{e}_{\psi,dtj})
	\\
	&-&\sum_{j\in s_{dt}-(\tilde\G_{dt}-\hat{\mathcal H}_{dt})}\text{sgn}(\tilde{e}_{\psi,dtj})\Big)
	\Big\}
	-
	\frac{1}{N_{dt}}\sum_{j\in r_{dt}}\hat {e}_{\psi,dtj}(\theta_d).
\end{eqnarray*}
We can write $Y_{dt}^{(3)}$ in the form
\begin{eqnarray*}
	&&\overline{Y}_{dt}^{(3)}=\sum_{j\in U_{dt}}
	\Big(\Big(1-\frac{n_{dt}}{N_{dt}}\Big)\frac{1}{n_{dt}}I_{\hat{\mathcal H}_{dt}}(j)-\frac{1}{N_{dt}}I_{r_{dt}}(j)\Big)\hat{e}_{\psi,dtj}(\theta_d)
	\\
	&+&
	\Big(1-\frac{n_{dt}}{N_{dt}}\Big)\frac{1}{n_{dt}}\bigg\{
	\sigma_{\theta_dt}\sum_{j\in\hat{\mathcal H}_{dt}}\big\{\hat{R}_{dtj}-\tilde R_{dtj}\big\}
	+c_\phi\sum_{j\in s_{dt}-\hat{\mathcal H}_{dt}}\big(\text{sgn}(\hat{e}_{\psi,dtj})-\text{sgn}(\tilde e_{\psi,dtj})\big)\bigg\}
	\\
	&+&\Big(1-\frac{n_{dt}}{N_{dt}}\Big)\frac{1}{n_{dt}}\bigg\{
	\sum_{j\in\hat\G_{dt}-\hat{\mathcal H}_{dt}}\hat{e}_{\psi,dtj}-\sum_{j\in\tilde\G_{dt}-\hat{\mathcal H}_{dt}}\tilde e_{\psi,dtj}
	+\sum_{j\in\hat\G_{dt}-\hat{\mathcal H}_{dt}}\hat R_{dtj}-\sum_{j\in\G_{dt}-\hat{\mathcal H}_{dt}}R_{dtj}\bigg\}
	\\
	&+&\Big(1-\frac{n_{dt}}{N_{dt}}\Big)\frac{1}{n_{dt}}\bigg\{
	c_\phi\sum_{j\in s_{dt}-(\hat\G_{dt}-\hat{\mathcal H}_{dt})}\text{sgn}(\hat{e}_{\psi,dtj})
	-c_\phi\sum_{j\in s_{dt}-(\tilde\G_{dt}-\hat{\mathcal H}_{dt})}\text{sgn}(\tilde{e}_{\psi,dtj})\bigg\}.
\end{eqnarray*}
Assumption (E3) implies that
\begin{align*}
	\sum_{j\in s_{dt}-(\G_{dt}-\hat{\mathcal H}_{dt})}E[\text{sgn}(\hat{e}_{\psi,dtj})]&=2\sum_{j\in s_{dt}-\G_{dt}}(\hat\pi_{2,dtj}-\hat\pi_{-2,dtj}),
	\\
	\sum_{j\in s_{dt}-(\G_{dt}-\hat{\mathcal H}_{dt})}\text{var}(\text{sgn}(\hat{e}_{\psi,dtj}))&=
	4\sum_{j\in s_{dt}-\G_{dt}}(\hat\pi_{2,dtj}+\hat\pi_{-2,dtj})-4\sum_{j\in s_{dt}-\G_{dt}}(\hat\pi_{2,dtj}-\hat\pi_{-2,dtj})^2.
\end{align*}
From assumption (D2), we obtain that
\begin{equation}\label{HPIII.3}
	\frac{1}{n_{dt}}\sum_{j\in s_{dt}-(\G_{dt}-\hat{\mathcal H}_{dt})}E[\text{sgn}(\hat{e}_{\psi,dtj})]=o(1),\quad
	\frac{1}{n_{dt}}\sum_{j\in s_{dt}-(\G_{dt}-\hat{\mathcal H}_{dt})}\text{var}(\text{sgn}(\hat{e}_{\psi,dtj}))=o(1).
\end{equation}
From (\ref{HPIII.2}), (\ref{HPIII.3}) and assumption (E3), the variance of $\overline{Y}_{dt}^{(3)}$ is
\begin{eqnarray*}
	V_{dt}^{(3)}&=&\text{var}\big(\overline{Y}_{dt}^{(3)}\big)
	=\sum_{j\in U_{dt}}
	\Big(\Big(1-\frac{n_{dt}}{N_{dt}}\Big)\frac{1}{n_{dt}}I_{\hat{\mathcal H}_{dt}}(j)-\frac{1}{N_{dt}}I_{r_{dt}}(j)\Big)^2
	\text{var}\big(\hat{e}_{\psi,dtj}(\theta_d)\big)
	\\
	&+&\Big(1-\frac{n_{dt}}{N_{dt}}\Big)^2\frac{1}{n_{dt}^2}
	\sigma_{\theta_dt}^2\sum_{j\in\hat{\mathcal H}_{dt}}\text{var}\big(\hat{R}_{dtj}-\tilde R_{dtj}\big)
	\\
	&+&\Big(1-\frac{n_{dt}}{N_{dt}}\Big)^2\frac{1}{n_{dt}^2}
	c_\phi^2\sum_{j\in s_{dt}-\hat{\mathcal H}_{dt}}\text{var}\big(\text{sgn}(\hat{e}_{\psi,dtj})-\text{sgn}(\tilde e_{\psi,dtj})\big)
	+ o(n^{-1})
	\\
	&=&
	\sum_{j\in U_{dt}}
	\Big(\Big(1-\frac{n_{dt}}{N_{dt}}\Big)^2\frac{1}{n_{dt}^2}I_{\hat{\mathcal H}_{dt}}(j)+\frac{1}{N_{dt}^2}I_{r_{dt}}(j)\Big)
	\text{var}\big(\hat{e}_{\psi,dtj}(\theta_d)\big)+o(n^{-1}),
\end{eqnarray*}
where $\text{var}\big(\hat{e}_{\psi,dtj}(\theta_d)\big)=
\xx_{dtj}^\prime\text{var}\big(\hat\bbeta_{\psi}(\hat\theta_d,\ww_t)\big)\xx_{dtj}$, $j=1,\ldots,N_{dt}$.

From assumption (A9), it holds that
\begin{equation}\label{hatEedtj}
	\frac{1}{N_{dt}}\sum_{j\in U_{dt}}E\big[\hat{e}_{\psi,dtj}(\theta_d)\big]=
	\overline{\xx}_{dt}^\prime E\big[\bbeta_{\psi}(\theta_d,\ww_t)-\hat\bbeta_{\psi}(\hat\theta_d,\ww_t)\big]=O(n^{-1/2}).
\end{equation}

From (\ref{hatEedtj}), (\ref{HPIII.2}) and (\ref{HPIII.3}), the expected prediction difference is
\begin{align*}
	E_{dt}^{(3)}&=E\big[\overline{Y}_{dt}^{(3)}\big]=\sum_{j\in U_{dt}}\Big(
	\Big(1-\frac{n_{dt}}{N_{dt}}\Big)\frac{1}{n_{dt}}I_{s_{dt}}(j)- \frac{1}{N_{dt}}I_{r_{dt}}(j)\Big) E\big[\hat{e}_{\psi,dtj}(\theta_d)\big]
	\\&
	+\Big(1-\frac{n_{dt}}{N_{dt}}\Big)\frac{1}{n_{dt}}\bigg\{
	\sigma_{\theta_dt}\sum_{j\in\hat{\mathcal H}_{dt}}E\big[\hat{R}_{dtj}-\tilde R_{dtj}\big]
	\\&+c_\phi\sum_{j\in s_{dt}-\hat{\mathcal H}_{dt}}E\big[\text{sgn}(\hat{e}_{\psi,dtj})-\text{sgn}(\tilde e_{\psi,dtj})\big]\bigg\}=o(1).
\end{align*}

\subsection{Final expression of the MSE}\label{Assec.final}
First of all, it holds that
\[MSE\big(\hat{\overline{Y}}_{dt}^{btmq}\big)=
E\big[\big(\hat{\overline{Y}}_{dt}^{btmq}-\overline{Y}_{dt}\big)^2\big]=\mbox{var}\big(\hat{\overline{Y}}_{dt}^{btmq}-\overline{Y}_{dt}\big)+(E\big[\hat{\overline{Y}}_{dt}^{btmq}-\overline{Y}_{dt}\big])^2.\]
Based on the decomposition
$$
\hat{\overline{Y}}_{dt}^{btmq}-\overline{Y}_{dt}=
\big(\hat{\overline{Y}}_{dt}^{btmq}-\tilde{\overline{Y}}_{dt}^{btmq}\big)
+\big(\tilde{\overline{Y}}_{dt}^{btmq}-\overline{Y}_{dt}^{btmq}\big)
+\big(\overline{Y}_{dt}^{btmq}-\overline{Y}_{dt}\big)=\overline{Y}_{dt}^{(3)}+\overline{Y}_{dt}^{(2)}+\overline{Y}_{dt}^{(1)},
$$
we can write
\begin{align*}
	\mbox{var}\big(\hat{\overline{Y}}_{dt}^{btmq}-\overline{Y}_{dt}\big)&=
	V_{dt}^{(1)}+V_{dt}^{(2)}+V_{dt}^{(3)}+2\mbox{cov}\big(\overline{Y}_{dt}^{(3)},\overline{Y}_{dt}^{(2)}\big)
	+2\mbox{cov}\big(\overline{Y}_{dt}^{(3)},\overline{Y}_{dt}^{(1)}\big)+2\mbox{cov}\big(\overline{Y}_{dt}^{(2)},\overline{Y}_{dt}^{(1)}\big),
	\\
	E\big[\hat{\overline{Y}}_{dt}^{btmq}-\overline{Y}_{dt}\big]&=
	E_{dt}^{(1)}+E_{dt}^{(2)}+E_{dt}^{(3)}=E_{dt}^{(1)}+o(1).
\end{align*}
The covariances are $\mbox{cov}\big(\overline{Y}_{dt}^{(3)},\overline{Y}_{dt}^{(2)}\big)=E\big[\overline{Y}_{dt}^{(3)}\overline{Y}_{dt}^{(2)}\big]+o(1)$, $\mbox{cov}\big(\overline{Y}_{dt}^{(3)},\overline{Y}_{dt}^{(1)}\big)=
E\big[\overline{Y}_{dt}^{(3)}\overline{Y}_{dt}^{(1)}]+o(1)$ and
$\mbox{cov}\big(\overline{Y}_{dt}^{(2)},\overline{Y}_{dt}^{(1)}\big)=
E\big[\overline{Y}_{dt}^{(2)}\overline{Y}_{dt}^{(1)}]+o(1)$.
Under regularity assumptions, the expectations of the previous cross-products should be $o(1)$.
Therefore, an approximation of $MSE\big(\hat{\overline{Y}}_{dt}^{btmq}\big)$ is
$$
MSE\big(\hat{\overline{Y}}_{dt}^{btmq}\big)=V_{dt}^{(1)}+V_{dt}^{(2)}+V_{dt}^{(3)}+E_{dt}^{(1)2}+o(1).
$$
The following theorem summarizes the final MSE approximation of the manuscript.
\setcounter{theorem}{0}
\begin{theorem}
	\normalfont Under assumptions ($\Phi$1), (N1)-(N2), (Q1), (A1)-(A9), (B1)-(B4), (C1)-(C2), (D1)-(D2), (E1)-(E3), listed in
	Section \ref{Assect.ass}, a first-order approximation of $MSE(\hat{\overline{Y}}_{dt}^{btmq})$ is
	\begin{eqnarray*}
		\nonumber
		MSE\big(\hat{\overline{Y}}_{dt}^{btmq}\big)&=&\sum_{j\in U_{dt}}\Big(\Big(1-\frac{n_{dt}}{N_{dt}}\Big)^2\frac{1}{n_{dt}^2}I_{\G_{dt}}(j)+ \frac{1}{N_{dt}^2}I_{r_{dt}}(j)\Big)
		\Big(\xx_{dtj}^\prime\bm\kappa_\psi(\theta_{d},\ww_t)\Big)^2\xi_{dt}^2
		\\
		\nonumber
		&+&\sum_{j\in U_{dt}}\Big(\Big(1-\frac{n_{dt}}{N_{dt}}\Big)^2\frac{1}{n_{dt}^2}I_{\tilde{\mathcal H}_{dt}}(j)+\frac{1}{N_{dt}^2}I_{r_{dt}}(j)\Big)
		\xx_{dtj}^\prime\text{var}\big(\hat\bbeta_{\psi}(\theta_d,\ww_t)\big)\xx_{dtj}
		\\
		\nonumber
		&+&\sum_{j\in U_{dt}}    \Big(\Big(1-\frac{n_{dt}}{N_{dt}}\Big)^2\frac{1}{n_{dt}^2}I_{\hat{\mathcal H}_{dt}}(j)+\frac{1}{N_{dt}^2}I_{r_{dt}}(j)\Big)
		\xx_{dtj}^\prime\text{var}\big(\hat\bbeta_{\psi}(\hat\theta_d,\ww_t)\big)\xx_{dtj}
		\\
		\nonumber
		&+&\Big(1-\frac{n_{dt}}{N_{dt}}\Big)^2\bigg( \frac{c_\phi}{n_{dt}}\sum_{j\in s_{dt}-\G_{dt}}E\big[\text{sgn}(e_{\psi,dtj})\big]
		+\frac{\sigma_{\theta_dt}}{n_{dt}}\sum_{j\in\G_{dt}}E[R_{dtj}]\bigg) ^2+o(1).
	\end{eqnarray*}
\end{theorem}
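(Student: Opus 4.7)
The plan is to build on the three-term decomposition of the prediction error introduced in \eqref{pred.errbtmq}, namely $\hat{\overline{Y}}_{dt}^{btmq}-\overline{Y}_{dt}=\overline{Y}_{dt}^{(3)}+\overline{Y}_{dt}^{(2)}+\overline{Y}_{dt}^{(1)}$, so that the MSE splits as $\text{var}(\hat{\overline{Y}}_{dt}^{btmq}-\overline{Y}_{dt})+(E[\hat{\overline{Y}}_{dt}^{btmq}-\overline{Y}_{dt}])^2$ and reduces to evaluating, for each $k\in\{1,2,3\}$, the variance $V_{dt}^{(k)}$ and the expectation $E_{dt}^{(k)}$ of $\overline{Y}_{dt}^{(k)}$, plus the three pairwise covariances. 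The intuition behind the three pieces is: $\overline{Y}_{dt}^{(1)}$ is the ``oracle'' error when $\bbeta_\psi$ and $\theta_d$ are known and only the randomness in $q_{dtj}$ and in the Huber trimming remains; $\overline{Y}_{dt}^{(2)}$ captures the effect of replacing $\bbeta_\psi(\theta_d,\ww_t)$ with $\hat\bbeta_\psi(\theta_d,\ww_t)$; and $\overline{Y}_{dt}^{(3)}$ further accounts for replacing $\theta_d$ with the MQ3-estimated $\hat\theta_d$.

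For each piece I would carry out the same three-step recipe. First, use assumption ($\Phi$1) to split the sums $\sum_{j\in s_{dt}}\sigma_{\theta_dt}\phi(\cdot)$ into a smooth part over the trimming set ($\G_{dt}$, $\tilde\G_{dt}$, or $\hat\G_{dt}$) and a clipped part on its complement where $\phi\equiv c_\phi\,\text{sgn}$; second, linearize $\phi$ by first-order Taylor expansion around zero, producing a linear-in-$u$ main term and a remainder $R_{dtj}$, $\tilde R_{dtj}$, $\hat R_{dtj}$ that is $O(u^2)$; third, Taylor-expand $\bbeta_\psi(q_{dtj},\ww_t)$ around $q=\theta_d$ as in \eqref{T0}, so that $e_{\psi,dtj}=\xx_{dtj}'\kappa_\psi(\theta_d,\ww_t)(q_{dtj}-\theta_d)+\xx_{dtj}'r_{dtj}$. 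Assumption (Q1) then delivers $\text{var}(e_{\psi,dtj})=(\xx_{dtj}'\kappa_\psi(\theta_d,\ww_t))^2\xi_{dt}^2+O(1)$ after the quadratic remainders are absorbed using (B4), and assumptions (E1)--(E3) turn each $V_{dt}^{(k)}$ into a sum of individual variances that collapse, respectively, over $\G_{dt}$, $\tilde{\mathcal H}_{dt}$, $\hat{\mathcal H}_{dt}$, plus the $r_{dt}$ contribution coming from subtracting $\overline{Y}_{dt}$. The expectation $E_{dt}^{(1)}$ is the only non-vanishing mean term: it combines the $\text{sgn}(e_{\psi,dtj})$ contribution on $s_{dt}\setminus\G_{dt}$ and the $R_{dtj}$ contribution on $\G_{dt}$, while assumption (B3) cancels the sample-minus-non-sample average of $E[e_{\psi,dtj}]$.

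The technical heart of the argument is showing $E_{dt}^{(2)}=o(1)$, $E_{dt}^{(3)}=o(1)$ and that all three cross-covariances $\text{cov}(\overline{Y}_{dt}^{(k)},\overline{Y}_{dt}^{(\ell)})$ are $o(1)$. For the expectations, assumption (A9) gives $\hat\bbeta_\psi(\theta,\ww_t)-\bbeta_\psi(\theta_d,\ww_t)=O_p(n^{-1/2})$ uniformly in a neighbourhood of $\theta_d$, so the pseudo-residual averages are $O(n^{-1/2})$; assumptions (C1)--(C2) and (D1)--(D2) then force the $R$-differences and the sign-differences to contribute $o(1)$ in mean. For the covariances, Cauchy--Schwarz together with the same fourth-moment bounds (C1) and (D1) reduces them to products of quantities that are each $o(1)$.

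The main obstacle I anticipate is the careful bookkeeping around the three random trimming sets $\G_{dt}$, $\tilde\G_{dt}$, $\hat\G_{dt}$ and their intersections $\tilde{\mathcal H}_{dt}=\G_{dt}\cap\tilde\G_{dt}$ and $\hat{\mathcal H}_{dt}=\tilde\G_{dt}\cap\hat\G_{dt}$. Every Taylor step produces set-difference terms such as $\tilde\G_{dt}\setminus\tilde{\mathcal H}_{dt}$ or $s_{dt}\setminus\hat{\mathcal H}_{dt}$, corresponding to units whose standardized residual sits near the Huber boundary $\pm c_\phi$ and whose classification flips when $\bbeta_\psi$ or $\theta_d$ is perturbed. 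Controlling these ``boundary'' contributions is exactly the role of the probabilities $\tilde\pi_{a,dtj}$ and $\hat\pi_{a,dtj}$ and of assumptions (C2) and (D2), which ensure that the expected proportion and variance of such flips vanish. Once those boundary terms are absorbed into $o(1)$, the three leading variance terms collapse to the indicator sums over $\G_{dt}$, $\tilde{\mathcal H}_{dt}$, $\hat{\mathcal H}_{dt}$ (plus $r_{dt}$) with weights $(1-n_{dt}/N_{dt})^2/n_{dt}^2$ and $1/N_{dt}^2$ respectively, and the statement of Theorem~\ref{first.order.MSE} follows.
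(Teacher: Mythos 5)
Your proposal follows essentially the same route as the paper's proof in Section~A of the Supplementary Material: the identical three-term decomposition $\overline{Y}_{dt}^{(3)}+\overline{Y}_{dt}^{(2)}+\overline{Y}_{dt}^{(1)}$, the same two Taylor expansions (of $\bbeta_\psi(q_{dtj},\ww_t)$ around $\theta_d$ and of $\phi$ around $0$), the same use of ($\Phi$1), (Q1), (B3)--(B4), (E1)--(E3) for the leading variance and bias terms, and the same control of the boundary-flip sets via $\tilde\pi_{a,dtj}$, $\hat\pi_{a,dtj}$ and (C2), (D2). The only minor difference is that you sketch a Cauchy--Schwarz argument for the cross-covariances where the paper simply asserts they are $o(1)$ under regularity conditions, so if anything your treatment of that step is slightly more explicit.
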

\subsection{Estimation of the final expression of the MSE}\label{Assec.final.aprrox}
This section uses the simplified notation $MSE\big(\hat{\overline{Y}}_{dt}^{btmq}\big)=S_1+S_2+S_3+S_4+o(1)$.
The first summand of $MSE\big(\hat{\overline{Y}}_{dt}^{btmq}\big)$ is
\begin{eqnarray*}
	\nonumber
	S_1&=&S_{11}+S_{12}
	=\sum_{j\in U_{dt}}\Big(\Big(1-\frac{n_{dt}}{N_{dt}}\Big)^2\frac{1}{n_{dt}^2}I_{\G_{dt}}(j)+ \frac{1}{N_{dt}^2}I_{r_{dt}}(j)\Big)
	\Big(\xx_{dtj}^\prime\bm\kappa_\psi(\theta_{d},\ww_t)\Big)^2\xi_{dt}^2
	\\
	\nonumber
	&=&\Big(1-\frac{n_{dt}}{N_{dt}}\Big)^2\frac{1}{n_{dt}^2}\sum_{j\in \G_{dt}}\Big(\xx_{dtj}^\prime\bm\kappa_\psi(\theta_d,\ww_t)\Big)^2\xi_{dt}^2
	+\frac{1}{N_{dt}^2} \sum_{j\in r_{dt}}\Big(\xx_{dtj}^\prime\bm\kappa_\psi(\theta_d,\ww_t)\Big)^2\xi_{dt}^2.
\end{eqnarray*}

Provided $q_{dtj}-\theta_d\neq 0$, $j=1,\dots,N_{dt}$, we get from the definition of the TWMQ models \eqref{TMQ3} and the Taylor series expansion \eqref{T0} that
\begin{align*}
	\bm\kappa_{\psi}(\theta_d,\ww_t)&=(q_{dtj}-\theta_d)^{-1}(\bbeta_\psi(q_{dtj},\ww_t)-\bbeta_\psi(\theta_d,\ww_t)-\bm r_{dtj}),\\
	\xx_{dtj}'\bm\kappa_{\psi}(\theta_d,\ww_t)&=(q_{dtj}-\theta_d)^{-1}(y_{dtj}-\xx_{dtj}'\bbeta_\psi(\theta_d,\ww_t)-{\xx}_{dtj}'\bm r_{dtj})\\&=(q_{dtj}-\theta_d)^{-1}(e_{\psi,dtj}-\xx_{dtj}'\bm r_{dtj}).
\end{align*}
If $\xx_{dtj}'\bm r_{dtj}\approx 0$, the first term of $S_1$ can be estimated by
\begin{align*}
	\hat S_{11,0}=\Big(1-\frac{n_{dt}}{N_{dt}}\Big)^2\frac{\hat\xi_{dt}^2}{n_{dt}^2}\sum_{j\in \hat\G_{dt}}\frac{1}{(\hat q_{dtj}-\hat\theta_d)^{2}}\hat e_{\psi,dtj}^2,
\end{align*}
where $\hat\xi_{dt}^2$ is an estimator of $\text{var}(q_{dtj})$, i.e.
$$
\hat\xi_{dt}^2=\widehat{\text{var}}(q_{dtj})=\frac{1}{n_{dt}-1}\sum_{j\in s_{dt}}(\hat q_{dtj}-\hat{\bar{q}}_{dt.})^2,\quad
\hat{\bar{q}}_{dt.}=\frac{1}{n_{dt}}\sum_{j\in s_{dt}}\hat q_{dtj}.
$$
The second term of $S_1$ can be estimated as
\begin{align*}
	\hat S_{12,0}=\frac{N_{dt}-n_{dt}}{n_{dt}}\frac{\hat\xi_{dt}^2}{N_{dt}^2}\sum_{j\in s_{dt}}\frac{1}{(\hat q_{dtj}-\hat\theta_d)^{2}}\hat e_{\psi,dtj}^2.
\end{align*}
Therefore, we can estimate $V_{dt}^{(1)}$ by $\hat S_{1,0}= \hat S_{11,0}+\hat S_{12,0}$. Nevertheless, the differences $\hat q_{dtj}-\hat\theta_d$ are expected to be close to zero, leading to instability problems due to multiplication by $(\hat q_{dtj}-\hat\theta_d)^{-2}$, $j=1,\dots,n_{dt}$, both in the estimation of $S_{11,0}$ and $S_{12,0}$. For this reason, we do not recommend using them. Potential solutions include proposing the following estimators
\begin{eqnarray*}
	\hat S_{11}&=&\Big(1-\frac{n_{dt}}{N_{dt}}\Big)^2\frac{\hat\xi_{dt}^2}{n_{dt}^2}\Big( \frac{1}{\text{card}(\hat\G_{dt})}\sum_{j\in \hat\G_{dt}}{(\hat q_{dtj}-\hat\theta_d)^{2}}\Big) ^{-1}\sum_{j\in \hat\G_{dt}}\hat e_{\psi,dtj}^2,\\
	\hat S_{12}&=&\frac{N_{dt}-n_{dt}}{n_{dt}}\frac{\hat\xi_{dt}^2}{N_{dt}^2}\Big( \frac{1}{n_{dt}}\sum_{j\in s_{dt}}{(\hat q_{dtj}-\hat\theta_d)^{2}}\Big) ^{-1}\sum_{j\in s_{dt}}\hat e_{\psi,dtj}^2,
\end{eqnarray*}
such that we can finally write $\hat S_1=\hat S_{11}+\hat S_{12}$. Clearly, this new estimation of $S_1$ largely avoids the problems of numerical instability, but it may also bias the final estimation. All things considered, we strongly recommend this second approach, which provides more stable results.

On the other hand, we can use estimators of $S_2$ and $S_3$ to estimate the variance terms $V_{dt}^{(2)}$ and $V_{dt}^{(3)}$, respectively. An estimator for both $S_2$ and $S_3$ is obtained as follows
\begin{eqnarray*}
	\hat V_{dt}^{(2)}&=&\hat V_{dt}^{(3)}=\sum_{j\in U_{dt}}
	\Big(\Big(1-\frac{n_{dt}}{N_{dt}}\Big)^2\frac{1}{n_{dt}^2}I_{\hat{\mathcal G}_{dt}}(j)+\frac{1}{N_{dt}^2}I_{r_{dt}}(j)\Big)
	\xx_{dtj}^\prime \hat V_{\beta}\xx_{dtj}
	\\
	&=&\Big(1-\frac{n_{dt}}{N_{dt}}\Big)^2\frac{1}{n_{dt}^2}
	\sum_{j\in \hat{\mathcal G}_{dt}}\xx_{dtj}^\prime \hat V_{\beta}\xx_{dtj}
	+\frac{1}{N_{dt}^2}\sum_{j\in r_{dt}}\xx_{dtj}^\prime \hat V_{\beta}\xx_{dtj},
\end{eqnarray*}
where $\hat V_{\beta}$ is an estimator of $V_{\beta}=\text{var}\big(\hat\bbeta_{\psi}(\hat\theta_d,\ww_t))$, as the one given in \eqref{var.beta}.

The last element $S_4$ of $MSE\big(\hat{\overline{Y}}_{dt}^{btmq}\big)$ corresponds to the bias term $E_{dt}^{(1)2}$. First, we propose
\begin{eqnarray*}
	\sum_{j\in s_{dt}-\G_{dt}}\hat {E}\big[\text{sgn}(e_{\psi,dtj})\big]&=&\sum_{j\in s_{dt}-\hat\G_{dt}}\text{sgn}(\hat e_{\psi,dtj}).
\end{eqnarray*}
From the Taylor expansion \eqref{T1},  $R_{dtj}=\frac12\,u_{\psi,dtj}^{*2}$, $0<|u_{\psi,dtj}^{*}|<|u_{\psi,dtj}|$, $j=1,\ldots,N_{dt}$, so
\begin{eqnarray*}
	\sum_{j\in\G_{dt}}\hat E[R_{dtj}]&=&\frac{1}{2\sigma_{\theta_dt}^2}\sum_{j\in\hat\G_{dt}}\hat e_{\psi,dtj}^2.
\end{eqnarray*}
Finally, we derive the following estimator of $MSE\big(\hat{\overline{Y}}_{dt}^{btmq}\big)$:
\begin{eqnarray*}
	mse_{3,dt}^{btmq}&=&
	\Big(1-\frac{n_{dt}}{N_{dt}}\Big)^2\frac{\hat\xi_{dt}^2}{n_{dt}^2}\Big( \frac{1}{\text{card}(\hat\G_{dt})}\sum_{j\in \hat\G_{dt}}{(\hat q_{dtj}-\hat\theta_d)^{2}}\Big) ^{-1}\sum_{j\in \hat\G_{dt}}\hat e_{\psi,dtj}^2+
	\\
	&+&\frac{N_{dt}-n_{dt}}{n_{dt}}\frac{\hat\xi_{dt}^2}{N_{dt}^2}\Big( \frac{1}{n_{dt}}\sum_{j\in s_{dt}}{(\hat q_{dtj}-\hat\theta_d)^{2}}\Big) ^{-1}\sum_{j\in s_{dt}}\hat e_{\psi,dtj}^2
	\\
	&+&2\Big(1-\frac{n_{dt}}{N_{dt}}\Big)^2\frac{1}{n_{dt}^2}\sum_{j\in \hat{\mathcal G}_{dt}}\xx_{dtj}^\prime \hat V_{\beta}\xx_{dtj}
	+\frac{1}{N_{dt}^2}\sum_{j\in r_{dt}}\xx_{dtj}^\prime \hat V_{\beta}\xx_{dtj}
	\\
	&+&
	\Big(1-\frac{n_{dt}}{N_{dt}}\Big)^2\frac{1}{n_{dt}^2}\bigg( c_\phi\sum_{j\in s_{dt}-\hat\G_{dt}}\text{sgn}(\hat e_{\psi,dtj})
	+\frac{1}{2\sigma_{\theta_dt}}\sum_{j\in\hat\G_{dt}}\hat e_{\psi,dtj}^2\bigg) ^2.
\end{eqnarray*}
Finally, an estimator of $RMSE\big(\hat{\overline{Y}}_{dt}^{btmq}\big)=\big(MSE\big(\hat{\overline{Y}}_{dt}^{btmq}\big)\big)^{1/2}$ is
$rmse_{3,dt}^{btmq}=\big(mse_{3,dt}^{btmq}\big)^{1/2}$.

\section{Selection of area-time specific robustness parameters}\label{sect.app2}

\setcounter{equation}{0}
\renewcommand{\theequation}{\thesection.\arabic{equation}}

In this Section \ref{sect.app2}, we provide the proof of Theorem \ref{exis.uni}.

\begin{proof}
	Given $mse_{dt}^{btmq}\in\{mse_{1,dt}^{btmq}, mse_{2,dt}^{btmq}\}$, we write $A_{dt}(c_\phi)$ as a function of $c_\phi$, i.e.
	\begin{eqnarray*}
		A_{dt}(c_\phi)=\Big(1-\frac{n_{dt}}{N_{dt}}\Big)^2\Big(\frac{\sigma_{\theta_d t}}{n_{dt}}\Big)^2 \underset{j\in s_{dt}}\sum\phi^2\big(\hat{u}_{\psi,dtj}\big)
		+\Big( \hat{B}_{dt}+\Big(1-\frac{n_{dt}}{N_{dt}}\Big)\frac{\sigma_{\theta_d t}}{n_{dt}}\sum_{j\in s_{dt}}
		\phi\big(\hat{u}_{\psi,dtj}\big)\Big) ^2.
	\end{eqnarray*}
	Let $\U=\{|\hat{u}_{\psi,dtj}|_{(1)}, \dots, |\hat{u}_{\psi,dtj}|_{(n_{dt})}\}$ be the ordered version of the set
	$U=\{|\hat{u}_{\psi,dt1}|,\dots, |\hat{u}_{\psi,dtn_{dt}}|\}$, so that
	$|\hat{u}_{\psi,dtj}|_{(1)}=\underset{j\in s_{dt}}{\min}|\hat{u}_{\psi,dtj}|$ and $|\hat{u}_{\psi,dtj}|_{(n_{dt})}=\underset{j\in s_{dt}}{\max}|\hat{u}_{\psi,dtj}|$.
	Define $|\hat{u}_{\psi,dtj}|_{(0)}=0$ and
	$\Lambda_{dt,\ell}=\left\lbrace j\in\{1,\dots, n_{dt}\}:  |\hat{u}_{\psi,dtj}|\leq|\hat{u}_{\psi,dtj}|_{(\ell+1)}\right\rbrace$,
	where $\text{card}(\Lambda_{dt,\ell})=n_{dt}-n_{dt,\ell}$, \ $n_{dt,\ell}=n_{dt,\ell}^++n_{dt,\ell}^-\in\mathbb N$, $\ell=0,\dots, n_{dt}-1$, and
	\begin{align*}
		&n_{dt,\ell}^+=\text{card}\left\lbrace |\hat u_{\psi,dtj}|\in\U: |\hat{u}_{\psi,dtj}|>|\hat{u}_{\psi,dtj}|_{(\ell+1)},\ \hat{u}_{\psi,dtj}> 0\right\rbrace \in\mathbb N,\\&
		n_{dt,\ell}^-=\text{card}\left\lbrace |\hat u_{\psi,dtj}|\in\U: |\hat{u}_{\psi,dtj}|>|\hat{u}_{\psi,dtj}|_{(\ell+1)},\ \hat{u}_{\psi,dtj}< 0\right\rbrace \in\mathbb N.
	\end{align*}
	
	

The $c_\phi$-dependent terms of $A_{dt}(c_\phi)$ are continuous in $c_\phi\in [0,\infty)$,
piecewise quadratic in $c_\phi\in I_{\ell}=[|\hat{u}_{\psi,dtj}|_{(\ell)}, |\hat{u}_{\psi,dtj}|_{(\ell+1)})$, $\ell=0,\dots, n_{dt}-1$, and
constant in $c_\phi\in I_{n_{dt}}=[|\hat{u}_{\psi,dtj}|_{(n_{dt})}, \infty)$.
Therefore, we have
\begin{eqnarray*}
	A_{dt}(c_\phi)&=&\Big(\hat B_{dt}+\Big( 1-\frac{n_{dt}}{N_{dt}}\Big)\frac{\sigma_{\theta_dt}}{n_{dt}}\Big(c_\phi(n_{dt,\ell}^+-n_{dt,\ell}^-)
	+ \underset{j\in \Lambda_{dt,\ell}}\sum  \hat{u}_{\psi,dtj}\Big) \Big)^2
	\\
	&+&\Big( 1-\frac{n_{dt}}{N_{dt}}\Big)^2\frac{\sigma_{\theta_dt}^2}{n_{dt}^2}
	\Big( c_\phi^2 n_{dt,\ell}+\underset{j\in \Lambda_{dt,\ell}}\sum \hat{u}^2_{\psi,dtj}\Big),\,\,\,
	c_\phi\in I_{\ell},\, \ell=0,\dots, n_{dt}-1,
	\\
	A_{dt}(c_\phi)&=&\Big(\hat B_{dt}+\Big( 1-\frac{n_{dt}}{N_{dt}}\Big)\frac{\sigma_{\theta_dt}}{n_{dt}}\underset{j\in s_{dt}}\sum \hat{u}_{\psi,dtj} \Big)^2
	\\ &+&\Big( 1-\frac{n_{dt}}{N_{dt}}\Big)^2\frac{\sigma_{\theta_dt}^2}{n_{dt}^2}
	\underset{j\in s_{dt}}\sum \hat{u}_{\psi,dtj}^2,\,\,\,
	c_\phi\geq |\hat{u}_{\psi,dtj}|_{(n_{dt})}.
\end{eqnarray*}

\textit{Existence.} Since $A_{dt}(c_\phi)$ is a constant function in $I_{n_{dt}}$, the search for extreme values is reduced to the compact (closed and bounded) interval $\left[ 0, |\hat{u}_{\psi,dtj}|_{(n_{dt})}\right]$.
Further, $A_{dt}(c_\phi)$ is a piecewise function in $\left[ 0, |\hat{u}_{\psi,dtj}|_{(n_{dt})}\right]$ but, as an inherited property of function $\phi(u)$, $A_{dt}(c_\phi)$ is continuous in $[0,\infty)$. By virtue of the Weierstrass Theorem, $A_{dt}(c_\phi)$ reaches its absolute maximum and minimum values in $\left[ 0, |\hat{u}_{\psi,dtj}|_{(n_{dt})}\right]$. If the minimum is reached at $c_\phi=0$, no bias correction is needed. If the minimum is reached at $c_\phi=|\hat{u}_{\psi,dtj}|_{(n_{dt})}$, is also reached in $[|\hat{u}_{\psi,dtj}|_{(n_{dt})},\infty)$, and the bias correction is maximal.
\medskip

\textit{Uniqueness.}
By definition, $A_{dt}(c_\phi)$ is an infinitely differentiable function in $c_\phi\in[0,\infty)-\mathcal U$.
The first and second order derivatives of $A_{dt}(c_\phi)$ in $c_\phi\in I_{\ell}-\{|\hat{u}_{\psi,dtj}|_{(\ell)}\}$, $\ell=0,\dots, n_{dt}-1$, are
\begin{eqnarray*}
	\frac{\partial A_{dt}(c_\phi)}{\partial c_\phi}&=&2\bigg(\hat B_{dt}+\Big( 1-\frac{n_{dt}}{N_{dt}}\Big)\frac{\sigma_{\theta_dt}}{n_{dt}}\Big(c_\phi(n_{dt,\ell}^+-n_{dt,\ell}^-)
	+\underset{j\in \Lambda_{dt,\ell}}\sum \hat{u}_{\psi,dtj}\Big)\bigg)
	\\
	&\cdot&\Big(1-\frac{n_{dt}}{N_{dt}}\Big)\frac{\sigma_{\theta_dt}}{n_{dt}}(n_{dt,\ell}^+-n_{dt,\ell}^-)
	+
	2c_\phi\Big(1-\frac{n_{dt}}{N_{dt}}\Big)^2\Big( \frac{\sigma_{\theta_dt}}{n_{dt}}\Big)^2n_{dt,\ell},
	\\
	\frac{\partial^2A_{dt}(c_\phi)}{\partial^2 c_\phi}&=&2\Big(1-\frac{n_{dt}}{N_{dt}}\Big)^2\Big( \frac{\sigma_{\theta_dt}}{n_{dt}}\Big)^2(n_{dt,\ell}^+-n_{dt,\ell}^-)^2+2\Big(1-\frac{n_{dt}}{N_{dt}}\Big)^2\Big( \frac{\sigma_{\theta_dt}}{n_{dt}}\Big)^2n_{dt,\ell}
	\\
	&=&2\Big(1-\frac{n_{dt}}{N_{dt}}\Big)^2\Big( \frac{\sigma_{\theta_dt}}{n_{dt}}\Big)^2\Big( (n_{dt,\ell}^+-n_{dt,\ell}^-)^2+n_{dt,\ell}\Big).
\end{eqnarray*}
For $c_\phi\in I_{\ell}-\{|\hat{u}_{\psi,dtj}|_{(\ell)}\}$, $\ell=0,\dots, n_{dt}-1$, it follows that
$\frac{\partial^2A_{dt}(c_\phi)}{\partial^2 c_\phi}>0$ if and only if $n_{dt,\ell}>0$,
so $A_{dt}(c_\phi)$ is strictly convex in $\overset{n_{dt}}{\underset{l=0}\bigcup}  (I_{\ell}-\{|\hat{u}_{\psi,dtj}|_{(\ell)}\})=[0,|\hat{u}_{\psi,dtj}|_{(n_{dt})}]-\mathcal U$. Given that a strictly convex continuous function in an open set is strictly convex at its closure, $A_{dt}(c_\phi)$ is strictly convex in the compact interval $[0,|\hat{u}_{\psi,dtj}|_{(n_{dt})}]$.
Therefore, the uniqueness of the global minimum of $A_{dt}(c_\phi)$ in $[0,|\hat{u}_{\psi,dtj}|_{(n_{dt})}]$ is guaranteed.
\medskip

\textit{Explicit expression of $\hat c_{\phi, dt}$.}
We distinguish two cases.
Case 1 is the extreme solution $\hat c_{\phi, dt}=|\hat{u}_{\psi,dtj}|_{(n_{dt})}$.
Case 2 is a solution $\hat c_{\phi, dt}\in I_{\ell}$ for some $\ell\in\{0,\dots, n_{dt}-1\}$,
so either $\hat c_{\phi, dt}=|\hat u_{\psi,dtj}|_{(\ell)}$, or $\hat c_{\phi, dt}$ fulfills that
$\frac{\partial A_{dt}(c_\phi)}{\partial c_\phi}\Big|_{c_{\phi}=\hat c_{\phi, dt}}=0$, i.e.
\begin{align*}
	&\bigg( \hat B_{dt}+\Big( 1-\frac{n_{dt}}{N_{dt}}\Big)\frac{\sigma_{\theta_dt}}{n_{dt}}\Big(\hat c_{\phi, dt}(n_{dt,\ell}^+-n_{dt,\ell}^-)
	+\underset{j\in \Lambda_{dt,\ell}}\sum \hat{u}_{\psi,dtj}\Big)\bigg)(n_{dt,\ell}^+-n_{dt,\ell}^-)\\&
	+\hat c_{\phi, dt}\Big(1-\frac{n_{dt}}{N_{dt}}\Big)\Big( \frac{\sigma_{\theta_dt}}{n_{dt}}\Big)n_{dt,\ell}=0\Longleftrightarrow \left(\hat B_{dt}+\left( 1-\frac{n_{dt}}{N_{dt}}\right)\frac{\sigma_{\theta_dt}}{n_{dt}}\underset{j\in \Lambda_{dt,\ell}}\sum \hat u_{\psi,dtj}\right)(n_{dt,\ell}^+-n_{dt,\ell}^-)\\&
	+\left( 1-\frac{n_{dt}}{N_{dt}}\right)\frac{\sigma_{\theta_dt}}{n_{dt}}\left((n_{dt,\ell}^+-n_{dt,\ell}^-)^2+n_{dt,\ell} \right)\hat c_{\phi, dt}=0.
\end{align*}
Solving for $\hat c_{\phi, dt}$ from this equation, we get
\begin{align}\label{c.opt}
	\hat c_{\phi, dt}=\frac{\bigg( \hat B_{dt}+\big( 1-\frac{n_{dt}}{N_{dt}}\big)\frac{\sigma_{\theta_dt}}{n_{dt}}\underset{j\in \Lambda_{dt,\ell}}\sum \hat u_{\psi,dtj}\bigg)(n_{dt,\ell}^--n_{dt,\ell}^+)}
	{\Big( 1-\frac{n_{dt}}{N_{dt}}\Big)\frac{\sigma_{\theta_dt}}{n_{dt}}\Big((n_{dt,\ell}^+-n_{dt,\ell}^-)^2+n_{dt,\ell} \Big)}.
\end{align}
Since $\hat c_{\phi, dt}\in (|\hat u_{\psi,dtj}|_{(\ell)}, |\hat u_{\psi, dtj}|_{(\ell+1)})$, $\hat c_{\phi, dt} >0$ and
the numerator of \eqref{c.opt} is strictly positive.
Further, there are two possibilities: (i) $n_{dt,\ell}^+< n_{dt,\ell}^-$ and $\hat B_{dt}+\left( 1-\frac{n_{dt}}{N_{dt}}\right)\frac{\sigma_{\theta_dt}}{n_{dt}}\underset{j\in \Lambda_{dt,\ell}}\sum \hat u_{\psi,dtj}>0$; (ii) $n_{dt,\ell}^+> n_{dt,\ell}^-$ and $\hat B_{dt}+\left( 1-\frac{n_{dt}}{N_{dt}}\right)\frac{\sigma_{\theta_dt}}{n_{dt}}\underset{j\in \Lambda_{dt,\ell}}\sum \hat u_{\psi,dtj}<0$.
Therefore, if the number of negative outliers, $ n_{dt,\ell}^-$, is greater than the number of positive outliers, $n_{dt,\ell}^+$, then the bias is positive.
Otherwise, the opposite applies.
\end{proof}

Theorem \ref{exis.uni} provides a local, area-time specific approach, that calculates the robustness parameter that best bounds the outlier observations in each subdomain to reduce the predictive bias, without detriment to the MSE. Finally, it allows us to intuit the relationship between the sign of the bias and the number of large positive and negative residuals. Selecting the value of $\hat c_{\phi,dt}$ not only avoids subjective choices, but also reveals the atypical condition of a subdomain.

\section{Model-based simulations}\label{suppl.sim}

\setcounter{equation}{0}
\renewcommand{\theequation}{\thesection.\arabic{equation}}

\setcounter{table}{0}
\renewcommand{\thetable}{\thesection.\arabic{table}}

\setcounter{figure}{0}
\renewcommand{\thefigure}{\thesection.\arabic{figure}}

The target of this section is to report additional results from the simulation experiments presented in Section \ref{sec.sim} of the manuscript. The outline of the experimental design is described below, including the scenarios for the incorporation of unit-level and area-level outliers, the cases for the time dependency random effects and the number of iterations, $S=500$. Both the MQ3 and TWMQ models are fitted using the iterative re-weighted least squares (IRLS) algorithm. In addition, the projective influence function $\psi$ is the Huber function \eqref{psiHuber} with tuning constant $c_\psi = 1.345$, the same as the function $\phi$ of the robust bias corrected MQ predictors BMQ and BTMQ, but their tuning constant has been selected area-time specific.
So as to calculate $\hat c^{(s)}_{\phi, dt}$, $s=1,\dots, S$, we use a fine grid from 0 to 10, with evenly spaced breaks of 0.001 width.
For the fitting of the MQ models, the prediction of small area linear indicators, the estimation of the MSE and the selection of the robustness parameters, we have used a code developed by the authors. In addition, LMMs are fitted using residual maximum likelihood (REML) so as to them calculate empirical best linear unbiased predictors (EBLUP) of the population means. The \texttt{R} library \texttt{nlme} has been used for this purpose and, in particular, the function \texttt{lme}.

Simulations 1--5 have the following steps:
\begin{enumerate}
\item Define $\beta_1=100$ and $\beta_2=5$. Vary $q$ on a fine grid $G\subset [0,1]$.

\item[$\rightarrow$] Choose Scenario [0,0], [e,0] or [e,u], where
\item[] [0,0] -- absence of outliers, $u_{1,d}\backsim N(0,3)$ and $e_{dtj}\backsim N(0,6)$;
\item[] [e,0] -- only individual level outliers, $u_{1,d}\backsim N(0,3)$ and $e_{dtj}\backsim\delta N(0,6)+(1-\delta)N(20,150)$, where $\delta$    is an independently generated Bernoulli random variable with $P(\delta=1)=0.97$;
\item[] [e,u] -- outliers affect both area and individual effects, $u_{1,d}\backsim N(0,3)$ for areas $1\leq d\leq 36$, $u_{1,d}\backsim N(9,20)$ for areas $37\leq d\leq 40$, and $e_{dtj}\backsim\delta N(0,6)+(1-\delta)N(20,150)$.

\item[$\rightarrow$] Choose Case 1.1, 1.2 or 2, where
\item[] Case 1. $\uu_2=\underset{1\leq t\leq T}{\mbox{col}}(u_{2,t})\backsim N_T(\bm 0, \Sigma_u)$, where $\Sigma_u=\sigma_u^2\Omega_T(\rho)$ and the correlation matrix is
\begin{align}\label{rho}
	\Omega_T(\rho)=\frac{1}{1-\rho^2}
	\renewcommand\arraystretch{0.6}\arraycolsep=1.2pt
	\begin{pmatrix}
		1 & \rho & \cdots&\rho^{T-1}\\
		\rho&1&\ddots&\rho^{T-2}\\
		\vdots&\ddots&\ddots&\vdots\\
		\rho^{T-1}&\rho^{T-2}&\rho&1
	\end{pmatrix}\in\mathcal M_{T\times T}, \quad \rho\in(-1,1).
\end{align}
Case 1.1: $\sigma_u=1$, \ $\rho=0.2$; \ and Case 1.2: $\sigma_u=1$, \ $\rho=0.8$.
\item[] Case 2. Each $u_{2,t}$ is independently generated according to a stationary $AR(3)$ model with coefficients $\phi_1=0.4$, $\phi_2=0.3$, $\phi_3=0.25$ and white noise variance $\sigma=1$.

\item Repeat $S=500$ times $(s=1,\dots,S)$:
\begin{enumerate}
	\item For $d=1\dots, D$, $t=1\dots, T$, $j\in U_{dt}$, generate $x_{dtj}^{(s)}\backsim \text{LogN}(1,0.5)$, $u_{1,d}^{(s)}$ and $e_{dtj}^{(s)}$ depending on the chosen scenario, and $u_{2,t}^{(s)}$ depending on the chosen case.
	
	\item For $d=1\dots, D$, $t=1\dots, T$, $j\in U_{dt}$, calculate
	\[y_{dtj}^{(s)}=\beta_1+x_{dtj}^{(s)}\beta_2+u_{1,d}^{(s)}+u_{2,t}^{(s)}+e_{dtj}^{(s)}, \quad \overline Y_{dt}^{(s)}=\frac{1}{N_{dt}}\underset{j=1}{\overset{N_{dt}}\sum}
	y_{dtj}^{(s)}. \]
	
	\item  Fit the MQ3 models using the population data. Calculate $q_{dtj}^{(s)}$ and then $\theta_d^{(s)}$,  $d=1\dots, D$, $t=1\dots, T$, $j\in U_{dt}$. Use the IRLS algorithm.
	
	\item Randomly generating $n_{dt}$ different positions between 1 and $N_{dt}$, draw a sample $s_{dt}^{(s)}$ of size $n_{dt}$, $d=1\dots, D$, $t=1\dots, T$. In what follows, only sample data are used.
	
	\item Calculate the HÃ¡jek estimator with equal weights, i.e. the arithmetic mean:
	\[\hat{\overline{Y}}_{dt}^{hajek}=\frac{1}{n_{dt}}\sum_{j=1}^{n_{dt}} y_{dtj}.\]
	
	\item Using REML, fit the area-level $\text{LMM}_1$ model
	\begin{align*}
		\nonumber
		y_{dtj}=\beta_{1}+x_{dtj}\beta_{2}+u_{1,d}+e_{dtj}, \ u_{1,d}\backsim N(0,\sigma_{u_1}^2), \ e_{dtj}\backsim N(0, \sigma^2_{e}), \ \ \sigma_{u_1}^2, \sigma^2_{e}>0,
	\end{align*}
	and the area-level and time-level $\text{LMM}_2$ model
	\begin{align*}
		\nonumber
		y_{dtj}=\beta_{1}+x_{dtj}\beta_{2}+u_{1,d}+u_{2,t}+e_{dtj}, \ &u_{1,d}\backsim N(0,\sigma_{u_{1}}^2), \ u_{2,t}\backsim N(0,\sigma_{u_{2}}^2), \quad\quad\quad \\ &e_{dtj}\backsim N(0, \sigma^2_{e}),
		\ \sigma_{u_{1}}^2, \sigma_{u_{2}}^2, \sigma^2_{e}>0,
	\end{align*}
	where $\beta_{1}$ and $\beta_{2}$ are the corresponding regression coefficients; $u_{1,d}$ are the area-level random intercepts of $\text{LMM}_1$; $u_{1,d}$ and $u_{2,t}$ are the area-level and time-level random intercepts of $\text{LMM}_2$, respectively; and $e_{dtj}$ are the corresponding model errors.
	
	\item Calculate the predictors $\hat{\overline{Y}}_{dt}^{eblup_1}$ and $\hat{\overline{Y}}_{dt}^{eblup_2}$, $d=1,\dots, D$, $t=1,\dots, T$, given by
	\begin{align*}
		& \hat{\overline{Y}}_{dt}^{eblup_1}=\frac{1}{N_{dt}}\Big\{\sum_{j\in s_{dt}^{(s)}}y_{dtj}+\sum_{j\in r_{dt}^{(s)}}(\hat\beta_{1}+x_{dtj}\hat\beta_{2}+\hat u_{1,d})\Big\}\\&
		\hat{\overline{Y}}_{dt}^{eblup_2}=\frac{1}{N_{dt}}\Big\{\sum_{j\in s_{dt}^{(s)}}y_{dtj}+\sum_{j\in r_{dt}^{(s)}}(\hat\beta_{1}+x_{dtj}\hat\beta_{2}+\hat u_{2,d}+\hat u_{2,t})\Big\},
	\end{align*}
	where $r_{dt}^{(s)}=U_{dt}-s_{dt}^{(s)}$ is the non sampled subset of $U_{dt}$; $\hat u_{1,d}$ is the EBLUP of the random intercept $u_{1,d}$ for $\text{LMM}_1$; and $\hat u_{1,d}$ and $\hat u_{2,t}$ are the EBLUPs of the random intercepts $u_{1,d}$ and $u_{2,t}$ for $\text{LMM}_2$, respectively.
	
	\item Fit the MQ3 model, i.e. estimate $\hat\bbeta^{(s)}_\psi(q)$ and $\hat\sigma_{q}^{(s)}=\hat\sigma_{\psi}(\hat\bbeta^{(s)}_\psi(q))$, with $q\in G$.
	
	\item For $d=1\dots, D$, $t=1\dots, T$, $j\in s_{dt}^{(s)}$, estimate $\hat q_{dtj}^{(s)}$ and then $\hat\theta_d^{(s)}$.
	
	\item Calculate the predictors $\hat{\overline{Y}}_{dt}^{mq}$ and $\hat{\overline{Y}}_{dt}^{bmq}$, $d=1,\dots, D$, $t=1,\dots, T$.
	
	\item Calculate the inter-period weights $\ww_t^{(s)}=(w_{t1}^{(s)},\dots, w_{tT}^{(s)})$, $t=1,\dots, T$.
	
	\item For $t=1,\dots, T$, $d=1,\dots, D$, fit the TWMQ models with $q=\hat\theta_d^{(s)}$, i.e. estimate $\hat\bbeta^{(s)}_\psi\big(\hat\theta_d^{(s)},\ww_t^{(s)}\big)$ and $\hat\sigma_{\hat\theta_d t}^{(s)}=\hat\sigma_{\psi}(\hat\bbeta^{(s)}_\psi\big(\hat\theta_d^{(s)},\ww_t^{(s)}\big))$.
	
	\item Define  $\overline{\hat\bbeta}^{(s)}_\psi=(\overline{\hat\beta}^{(s)}_{\psi 1}, \overline{\hat\beta}^{(s)}_{\psi 2})=\frac{1}{DT}\underset{d=1}{\overset{D}\sum}\underset{t=1}{\overset{T}\sum}\hat\beta^{(s)}_\psi\big(\hat\theta_d^{(s)},\ww_t^{(s)}\big)$.
	
	\item Calculate the predictors $\hat{\overline{Y}}_{dt}^{tmq}$ and $\hat{\overline{Y}}_{dt}^{btmq}$, $d=1,\dots, D$, $t=1,\dots, T$.
	
	\item For the TMQ predictor, calculate $rmse^{tmq}_{dt}\in\{rmse_{11,dt}^{tmq}, rmse_{12,dt}^{tmq}, rmse_{21,dt}^{tmq},rmse_{22,dt}^{tmq}\}$ and for the BTMQ predictor, calculate  $rmse^{btmq}_{dt}\in\{rmse_{1,dt}^{btmq}, rmse_{2,dt}^{btmq}, rmse_{3,dt}^{btmq}\}$, $d=1,\dots, D$, $t=1,\dots, T$.
\end{enumerate}

\item For $\hat{\overline{Y}}_{dt}\in\big\{\hat{\overline{Y}}_{dt}^{hajek},\hat{\overline{Y}}_{dt}^{eblup_1},\hat{\overline{Y}}_{dt}^{eblup_2}, \hat{\overline{Y}}_{dt}^{mq},\hat{\overline{Y}}_{dt}^{bmq}, \hat{\overline{Y}}_{dt}^{tmq},\hat{\overline{Y}}_{dt}^{btmq}\big\}$, $d=1,\dots, D$, $t=1,\dots, T$, $\hat\tau_1\in\big\{\hat\beta_{1}, \overline{\hat\beta}_{\psi 1}\big\}$,
$\hat\tau_2\in\big\{\hat\beta_{2},  \overline{\hat\beta}_{\psi 2}\big\}$ and $\hat\theta_d$, calculate
\begin{align}
	\nonumber
	&\text{BIAS}(\hat \tau_l)=\frac{1}{S}\sum_{s=1}^S(\hat \tau_l^{(s)}-\beta_l), \quad \text{RMSE}(\hat \tau_l)=\Big( \frac{1}{S}\sum_{s=1}^S(\hat \tau_l^{(s)}-\beta_l)^2\Big)^{1/2}, \quad l=1,2,\\&
	\nonumber
	\text{BIAS}_d=\frac{1}{S}\sum_{s=1}^S(\hat \theta_d^{(s)}-\theta_d^{(s)}), \quad \text{RMSE}_d=\Big( \frac{1}{S}\sum_{s=1}^S(\hat \theta_d^{(s)}-\theta_d^{(s)})^2\Big)^{1/2},
	\\&
	\label{RBIAS.RMSE}
	\text{BIAS}_{1,dt}=\frac{1}{S}\underset{s=1}{\overset{S}{\sum}}\big(\hat {\overline Y}_{dt}^{(s)}- {\overline Y}_{dt}^{(s)} \big),\quad
	\text{RMSE}_{1,dt}=\Big(\frac{1}{S}\underset{s=1}{\overset{S}{\sum}}\big(\hat {\overline Y}_{dt}^{(s)}- {\overline Y}_{dt}^{(s)} \big)^2\Big)^{1/2}.
\end{align}

For $rmse_{dt}\in\{rmse^{tmq}_{dt},rmse^{btmq}_{dt}\}$, $d=1,\dots, D$, $t=1,\dots, T$, calculate
\begin{align*}
	\text{BIAS}_{2,dt}=\frac{1}{S}\overset{S}{\underset{s=1}{\sum}}(rmse_{dt}^{(s)}-\text{RMSE}_{1,dt}),\
	\text{RMSE}_{2,dt}= \Big(\frac{1}{S}\overset{S}{\underset{s=1}{\sum}}\big(rmse_{dt}^{(s)}-\text{RMSE}_{1,dt}\big) ^2\Big)^{1/2},
\end{align*}
where $\text{RMSE}_{1,dt}$ is taken from \eqref{RBIAS.RMSE} for $\hat{\overline{Y}}_{dt}\in\big\{\hat{\overline{Y}}_{dt}^{tmq},\hat{\overline{Y}}_{dt}^{btmq}\big\}$.

Consistent with the later notation, write $\text{RMSE}_{1,dt}\in\big\{\text{RMSE}_{dt}^{tmq}, \text{RMSE}_{dt}^{btmq}\big\}$, $d=1,\dots, D$, $t=1,\dots, T$, and $\text{RMSE}_{1}=\frac{1}{DT}\underset{d=1}{\overset{D}{\sum}}\underset{t=1}{\overset{T}{\sum}} \text{RMSE}_{1,dt}\in\big\{\text{RMSE}^{tmq}, \text{RMSE}^{btmq}\big\}$.

\item For $d=1,\dots, D$, $t=1,\dots, T$, calculate the relative performance measures
\begin{align*}
	&\text{RBIAS}(\hat \tau_l)=\frac{100\cdot\text{BIAS}(\hat \tau_l)}{\beta_l}, \quad \text{RRMSE}(\hat \tau_l)=\frac{100\cdot\text{RMSE}(\hat \tau_l)}{\beta_l}, \quad l=1,2,\\&
	\nonumber
	\text{RBIAS}_d=\frac{100\cdot\text{BIAS}_d}{\theta_d^*}, \quad \text{RRMSE}_d=\frac{100\cdot\text{RMSE}_d}{\theta_d^*}, \quad \theta_d^*=\frac{1}{S}\underset{s=1}{\overset{S}{\sum}} \theta_d^{(s)},\\&
	\text{RBIAS}_{1,dt}=\frac{100\cdot\text{BIAS}_{1,dt}}{\overline{Y}_{dt}^*},\quad
	\text{RRMSE}_{1,dt}=\frac{100\cdot\text{RMSE}_{1,dt}}{\overline Y_{dt}^*}, \quad \overline{Y}_{dt}^*=\frac{1}{S}\underset{s=1}{\overset{S}{\sum}}\overline Y_{dt}^{(s)},\\&
	\text{RBIAS}_{2,dt}=\frac{100\cdot\text{BIAS}_{2,dt}}{\text{RMSE}_{1,dt}}, \quad \text{RRMSE}_{2,dt}=\frac{100\cdot\text{RMSE}_{2,dt}}{\text{RMSE}_{1,dt}},
\end{align*}
and the average relative performance measures
\begin{align*}
	& \text{ARBIAS}=\frac{1}{D}\sum_{d=1}^D|\text{RBIAS}_d|, \quad \text{RRMSE}=\frac{1}{D}\sum_{d=1}^D \text{RRMSE}_d,
	\label{PerMea}
	\\&\text{ARBIAS}_l=\frac{1}{DT}\sum_{d=1}^D\sum_{t=1}^T|\text{RBIAS}_{l,dt}|, \ \text{RRMSE}_l=\frac{1}{DT}\sum_{d=1}^D\sum_{t=1}^T\text{RRMSE}_{l,dt}, \ l=1,2.
\end{align*}
\end{enumerate}

Finally, to measure overestimates (underestimates) of the several methods of MSE estimation, we define the proportion of subdomains in which the proposed estimates are higher (lower) than the empirical values which, we recall again, are obtained as the output of Simulation in \eqref{RBIAS.RMSE}. In accordance with the above, let be
\[P_+=\frac{1}{DT}\sum_{d=1}^D\sum_{t=1}I(\text{BIAS}_{2,dt}\geq0),\quad P_-=\frac{1}{DT}\sum_{d=1}^D\sum_{t=1}I(\text{BIAS}_{2,dt}<0)=DT-P_+. \]

\subsection{Fitting algorithm and estimation of domain means of MQ coefficients}\label{sec.sim1suple}

Here we show the performance of the fitting algorithm and the estimation of domain means of unit-level MQ coefficients for MQ3 models. 

Table \ref{ModelSimbeta0} presents the relative performance measures (in \%) for the REML estimators of the model parameters for the $\text{LMM}_2$, which is an LMM with independent area-level and time-level random intercepts, and the mean IRLS estimators for the TWMQ models. In general terms, it can be observed that both fitting algorithms perform quite well. In the classical setting, with no atypical data, the results are slightly better for the $\text{LMM}_2$. Otherwise, the estimation is less biased and more accurate by fitting the TWMQ models. The differences between cases for the generation of time-level random effects are of minor importance.

\begin{table}[!h]
\centering
\renewcommand{\arraystretch}{0.85}
\begin{tabular}{|l|rr|rr|rr|}
	\cline{2-7}
	\multicolumn{1}{c|}{}&  \multicolumn{2}{c}{ [0,0] \quad 1--40} &  \multicolumn{2}{|c}{ [e,0] \quad 1--40} & \multicolumn{2}{|c|}{[e,u] \quad 1--40} \\
	\multicolumn{1}{c|}{}&RBIAS&RRMSE&RBIAS&RRMSE&RBIAS&RRMSE\\
	\cline{2-7}
	\multicolumn{7}{l}{Case 1.1  $\quad\quad\quad  \uu_2\backsim N_T(\bm 0, \Sigma_u)$: $\quad \Sigma_u=\sigma^2_u\Omega_T(\rho)$,  $\quad\sigma_u=1$, $\quad \rho=0.2$} \\[2pt]
	\hline
	$\hat\beta_1: \ $ REML&{ -0.006}&{ 0.477}&0.600&0.798&1.507&1.609\\
	$\quad\quad $ IRLS&-0.015&0.491&{ 0.083}&{ 0.497}&{ 1.003}&{\ 1.144}\\
	$\hat\beta_2: \ $ REML&{-0.005}&{ 0.710}&-0.130&1.322&-0.128&1.320\\
	$\quad\quad $ IRLS&-0.027&0.861&{ -0.065}&{ 0.886}&{ -0.081}&{ 1.193}\\
	\hline
	\multicolumn{7}{l}{Case 1.2  $\quad\quad\quad  \uu_2\backsim N_T(\bm 0, \Sigma_u)$: $\quad \Sigma_u=\sigma^2_u\Omega_T(\rho)$,  $\quad\sigma_u=1$, $\quad \rho=0.8$} \\[2pt]
	\hline
	$\hat\beta_1: \ $ REML&{ 0.002}&{1.257}&0.618&1.415&1.525&2.002\\
	$\quad\quad $ IRLS&0.012&1.260&{ 0.102}&{ 1.264}&{ 1.019}&{ 1.648}\\
	$\hat\beta_2: \ $ REML&{ -0.008}&{ 0.714}&-0.135&1.329&-0.133&1.327\\
	$\quad\quad $ IRLS&-0.028&0.852&{ -0.068}&{ 0.887}&{ -0.077}&{ 1.195}\\
	\hline
	\multicolumn{7}{l}{Case 2  $\quad\quad\quad \ \ u_{2,t}\backsim AR(3)$: $\ \ \phi_1=0.4$, $\ \  \phi_2=0.3$, $\ \ \phi_3=0.25$, $\quad  \sigma=1$} \\[2pt]
	\hline
	$\hat\beta_1: \ $ REML&{-0.139}&{2.290}&0.464&2.326&1.373&2.662\\
	$\quad\quad $ IRLS&-0.144&2.288&-0.053&2.280&0.868&2.447\\
	$\hat\beta_2: \ $ REML&-0.037&{0.694}&0.044&1.281&0.045&1.279\\
	$\quad\quad $ IRLS&{ -0.026}&0.867&0.010&0.910&0.019&1.221\\
	\hline
\end{tabular}
\caption{RBIAS and RRMSE values (in \%) for the estimation of $\beta=(\beta_1,\beta_2)=(100,5)$.}\label{ModelSimbeta0}
\end{table}


\begin{table}[!h]
\centering
\renewcommand{\arraystretch}{0.875}
\begin{tabular}{|>{\small}l|rr|rr|rr|}
	\cline{2-7}
	\multicolumn{1}{c|}{}&  \multicolumn{2}{c}{ [0,0] \quad 1--40} &  \multicolumn{2}{|c}{ [e,0] \quad 1--40} & \multicolumn{2}{|c|}{[e,u] \quad 1--40} \\
	\multicolumn{1}{c|}{}&RBIAS&RRMSE&RBIAS&RRMSE&RBIAS&RRMSE\\
	\cline{2-7}
	\hline
	Case 1.1&0.231&7.506&0.281&7.603&0.277&7.394\\
	Case 1.2&0.240&7.401&0.271&7.514&0.274& 7.315\\
	Case 2&0.252&7.492&0.245&7.569&0.237&7.345\\
	\hline
\end{tabular}
\caption{RBIAS and RRMSE values (in \%) for the estimation of $\theta_d$.}\label{ModelSimthetad}
\end{table}

Table \ref{ModelSimthetad} presents the relative performance measures (in \%) for the estimation of domain means of unit-level MQ coefficients for MQ3 models. The most important point to note is the similarity of the results between the different scenarios. Consequently, we can deduce that the estimation of $\theta_d$ is not highly affected by the presence of atypical data, neither at individual nor at area level. As for the difference between cases for the generation of time-level random effects, the same applies. In terms of numbers, the ARBIAS is around 0.2--0.3\% and the RRMSE rises to 7.0--8.0\%, giving more than acceptable results for both relative error and bias.

\subsection{MSE estimation}\label{sec.sim2}

In this section we investigate the performance of several methods of MSE estimation for the TMQ and BTMQ predictors. Table \ref{ModelSim2} presents the performance measures for Case 1.1, Case 1.2 and Case 2, and the different scenarios for the generation of atypical data. We have included a third column with the proportion of subdomains in which the bias is positive and, therefore, the RMSE overestimated.

\begin{table}[H]
\centering
\renewcommand{\arraystretch}{0.65}
\setlength{\tabcolsep}{1.5pt}
\begin{tabular}{|l|ccc|ccc|ccc|}
	\cline{2-10}
	\multicolumn{1}{c|}{}&  \multicolumn{3}{c}{ [0,0] \quad 1--40} &  \multicolumn{3}{|c}{ [e,0] \quad 1--40} & \multicolumn{3}{|c|}{[e,u] \quad 1--40} \\
	\multicolumn{1}{c|}{}&ARBIAS&RRMSE& $P_+$&ARBIAS&RRMSE&$P_+$&ARBIAS&RRMSE&$P_+$\\
	\cline{2-10}
	\multicolumn{10}{l}{$\quad$ Case 1.1  $\quad\quad\quad \uu_2\backsim N_T(\bm 0, \Sigma_u)$: $\quad \Sigma_u=\sigma^2_u\Omega_T(\rho)$,  $\quad\sigma_u=1$, $\quad \rho=0.2$} \\[2pt]
	\hline
	\makecell[l]{$\text{RMSE}^{tmq}$}&  \multicolumn{3}{c|}{0.811}&\multicolumn{3}{c|}{1.002}&\multicolumn{3}{c|}{1.040}\\
	\hline
	$rmse_{11,dt}^{tmq}$&6.624&57.506&0.97&8.448&47.668&0.00&18.129&64.743&0.21\\
	$rmse_{12,dt}^{tmq}$&5.281& 57.301&0.89&9.714&47.928&0.00&19.499&64.362&0.13\\
	$rmse_{21,dt}^{tmq}$&6.977&55.923&0.98&8.036&46.331&0.00&18.064&64.639&0.25\\
	$rmse_{22,dt}^{tmq}$&7.006&55.881&0.98&7.745&45.820&0.00&17.870&64.224&0.27\\
	\hline
	\makecell[l]{$\text{RMSE}^{btmq}$}&\multicolumn{3}{c|}{0.638}&\multicolumn{3}{c|}{0.871}&\multicolumn{3}{c|}{0.933}\\
	\hline
	$rmse_{1,dt}^{btmq}$&4.119&54.558& 0.70&18.203&44.658&0.00&35.200&68.035&0.10\\
	$rmse_{2,dt}^{btmq}$&4.145& 54.462&0.72&17.465&43.312&0.00&34.504&66.770&0.10\\
	$rmse_{3,dt}^{btmq}$&6.467&59.178&0.14&15.335&61.613&0.00&22.095&91.702&0.10\\
	\hline
	\multicolumn{10}{l}{$\quad$ Case 1.2  $\quad\quad\quad\uu_2\backsim N_T(\bm 0, \Sigma_u)$: $\quad \Sigma_u=\sigma^2_u\Omega_T(\rho)$,  $\quad\sigma_u=1$, $\quad \rho=0.8$} \\[2pt]
	\hline
	\makecell[l]{$\text{RMSE}^{tmq}$}&  \multicolumn{3}{c|}{0.800}&\multicolumn{3}{c|}{1.017}&\multicolumn{3}{c|}{1.056}\\
	\hline
	$rmse_{11,dt}^{tmq}$&6.007&57.181&0.97&10.660&46.763&0.00&19.548&63.270&0.12\\
	$rmse_{12,dt}^{tmq}$&4.648&57.016&0.92&11.825&47.067&0.00&20.994&63.044&0.11\\
	$rmse_{21,dt}^{tmq}$&6.406&55.626&0.98&10.254&45.521&0.00&19.383&63.263&0.14\\
	$rmse_{22,dt}^{tmq}$&6.438&55.579&0.98&9.950&44.980&0.00&19.128&62.821&0.14\\
	\hline
	\makecell[l]{$\text{RMSE}^{btmq}$}&\multicolumn{3}{c|}{0.620}&\multicolumn{3}{c|}{0.880}&\multicolumn{3}{c|}{0.940}\\
	\hline
	$rmse_{1,dt}^{btmq}$&3.862&54.707&0.72&19.912&44.300&0.00&36.336&67.695&0.10\\
	$rmse_{2,dt}^{btmq}$&3.892&54.599&0.73&19.161&42.918&0.00&35.631&66.408&0.10\\
	$rmse_{3,dt}^{btmq}$&5.620&68.705&0.17&16.238&60.570&0.00&24.483&91.384&0.10\\
	\hline
	\multicolumn{10}{l}{$\quad$ Case 2  $\quad\quad\quad\  u_{2,t}\backsim AR(3)$: $\quad  \phi_1=0.4$, $\quad  \phi_2=0.3$, $\quad  \phi_3=0.25$, $\quad  \sigma=1$} \\[2pt]
	\hline
	\makecell[l]{$\text{RMSE}^{tmq}$}&  \multicolumn{3}{c|}{0.802 }&\multicolumn{3}{c|}{1.012}&\multicolumn{3}{c|}{ 1.049}\\
	\hline
	$rmse_{11,dt}^{tmq}$&7.335&58.195&0.97&9.572&46.851&0.00&19.109&64.197&0.18\\
	$rmse_{12,dt}^{tmq}$&5.982&57.982&0.94&10.782&47.142&0.00&20.474&63.888&0.11\\
	$rmse_{21,dt}^{tmq}$&7.696&56.635&0.97&9.170&45.454&0.00&18.993&63.946&0.21\\
	$rmse_{22,dt}^{tmq}$&7.726&56.592&0.97&8.867&44.918&0.00&18.786&63.510&0.22\\
	\hline
	\makecell[l]{$\text{RMSE}^{btmq}$}&\multicolumn{3}{c|}{0.630 }&\multicolumn{3}{c|}{0.878 }&\multicolumn{3}{c|}{0.938 }\\
	\hline
	$rmse_{1,dt}^{btmq}$&4.668&55.316&0.80&18.947&44.205&0.00&35.512&67.396&0.10\\
	$rmse_{2,dt}^{btmq}$&4.715&55.216&0.81&18.185&42.809&0.00&34.795&66.093&0.10\\
	$rmse_{3,dt}^{btmq}$&6.213&69.447&0.15&15.769&58.991&0.00&22.402&72.775&0.10\\
	\hline
\end{tabular}
\caption{By rows, empirical RMSE for each case and scenario. By columns, ARBIAS and RRMSE (in \%) for the corresponding RMSE estimators.}\label{ModelSim2}
\end{table}

As a first general comment, estimating RMSEs is much more difficult than estimating small area linear indicators, such as population means. Therefore, the magnitude of the results in Table \ref{ModelSim2} should be assessed with caution. Although it is suggested that the RMSE estimators for the TMQ predictors offer the most balanced performance in terms of ARBIAS and RRMSE, the empirical RMSEs of the BTMQ predictors are smaller. It can be observed that 
the difference between cases for the generation of time effects is of small importance.
In terms of RRMSE, the presence of area-level outliers greatly worsens the average values in Table \ref{ModelSim2}.

\subsection{Role of the time component variability and temporal correlation}\label{sec.sim3}
Here we analyze how the values of $\sigma_u>0$ and $\rho\in (-1,1)$ in Case 1 determine the behaviour of the RMSE estimators and, not least importantly, that of the TMQ and BTMQ predictors themselves.
Recall that in Case 1 the time dependency effects are generated as
$\uu_2=\underset{1\leq t\leq T}{\mbox{col}}(u_{2,t})\backsim N_T(\bm 0, \Sigma_u)$, where $\Sigma_u=\sigma_u^2\Omega_T(\rho)$. Our aim is: (1) to study the role of the time component variability, and (2) to investigate how temporal correlation affects. In order to do so, we set $\sigma_u=1$ and vary the correlation $\rho\in\{0, 0.2, 0.35, 0.5, 0.75, 0.9, 0.95\}$; and we set $\rho=0.5$ and vary the standard deviation $\sigma_u\in\{0.25, 0.5, 0.75, 1, 1.5, 2, 2.5\}$. Given the performance of the RMSE estimators in Table \ref{ModelSim2}, we focus the analysis on $rmse_{22,dt}^{tmq}$ and $rmse_{2,dt}^{btmq}$, respectively. The reason is that they seem to be slightly better in Case 1 or, at least, not worse than the other alternatives. In addition, we have limited ourselves to Scenario [0,0], without the presence of outliers, because we are interested in understanding what happens in a conventional setting.

Table \ref{ModelSim3} reports the results for the TMQ predictor and Table \ref{ModelSim3.Supple} for the BTMQ predictor. The empirical RMSE is included because the relative measures, although comparable, refer to quantities with slightly different scale. Namely, in all our settings $\text{RMSE}^{tmq}>\text{RMSE}^{btmq}$.

\begin{table}[H]
\centering
\renewcommand{\arraystretch}{1}
\setlength{\tabcolsep}{1.5pt}
\begin{tabular}{|l|ccccccc|}
\multicolumn{8}{c}{Case 1. Scenario [0,0]}\\
\cline{2-8}
\multicolumn{1}{c}{{$(\sigma_u=1,\ \rho)$}}&   \multicolumn{1}{|c}{(1, 0.0)}& \multicolumn{1}{c}{(1, 0.2)}&
\multicolumn{1}{c}{(1, 0.35)}& \multicolumn{1}{c}{(1, 0.5)}&
\multicolumn{1}{c}{(1, 0.75)}& \multicolumn{1}{c}{(1, 0.9)}&\multicolumn{1}{c|}{(1, 0.95)}\\
\hline
\makecell[l]{$\text{RMSE}^{tmq}$} &0.807&0.811&0.811&0.804&0.796&0.806&0.794\\
\hline
ARBIAS&7.649&7.006&6.755&7.028&7.142&5.459&8.330\\
RRMSE&56.082&55.881&55.822&56.066&56.047&54.734&53.159\\
$P_+$&0.99&0.98&0.97&0.99&0.99&0.96&0.99\\
\hline
\multicolumn{8}{c}{ }\\
\cline{2-8}
\multicolumn{1}{c}{{$(\sigma_u,\ \rho=0.5)$}}&
\multicolumn{1}{|c}{(0.25, 0.5)} &   \multicolumn{1}{c}{(0.5, 0.5)}& \multicolumn{1}{c}{(0.75, 0.5)} & \multicolumn{1}{c}{(1, 0.5)}& \multicolumn{1}{c}{(1.5, 0.5)}& \multicolumn{1}{c}{(2, 0.5)}& \multicolumn{1}{c|}{(2.5, 0.5)}\\
\hline
\makecell[l]{$\text{RMSE}^{tmq}$} &0.691&0.721&0.758&0.804&0.935&1.102&1.263\\
\hline
ARBIAS&22.652&18.306&13.133&7.028&7.573&19.054&27.629\\
RRMSE&71.134&66.577&61.416&56.066&46.829&42.775&40.607\\
$P_+$&1&1&1&0.99&0.11&0.00&0.00\\
\hline
\end{tabular}
\caption{Study of the effect of $\sigma_u$ and $\rho$ in $rmse_{22,dt}^{tmq}$ (in \%). }\label{ModelSim3}
\end{table}
\begin{table}[H]
\centering
\renewcommand{\arraystretch}{1}
\setlength{\tabcolsep}{1.5pt}
\begin{tabular}{|l|ccccccc|}
\multicolumn{8}{c}{Case 1. Scenario [0,0]}\\
\cline{2-8}
\multicolumn{1}{c}{{$(\sigma_u=1,\ \rho)$}}&   \multicolumn{1}{|c}{(1, 0.0)}& \multicolumn{1}{c}{(1, 0.2)}&
\multicolumn{1}{c}{(1, 0.35)}& \multicolumn{1}{c}{(1, 0.5)}&
\multicolumn{1}{c}{(1, 0.75)}& \multicolumn{1}{c}{(1, 0.9)}&\multicolumn{1}{c|}{(1, 0.95)}\\
\hline
\makecell[l]{$\text{RMSE}^{btmq}$} &0.634&0.638&0.638&0.630&0.618&0.623&0.622\\
\hline
ARBIAS&4.234&4.145&4.047&4.127&4.246&3.493&4.946\\
RRMSE&54.650&54.462&54.437&54.740&55.004&53.819&55.818\\
$P_+$&0.79&0.72&0.68&0.72&0.81&0.67&0.83\\
\hline
\multicolumn{8}{c}{ }\\
\cline{2-8}
\multicolumn{1}{c}{{$(\sigma_u,\ \rho=0.5)$}}&
\multicolumn{1}{|c}{(0.25, 0.5)} &   \multicolumn{1}{c}{(0.5, 0.5)}& \multicolumn{1}{c}{(0.75, 0.5)} & \multicolumn{1}{c}{(1, 0.5)}& \multicolumn{1}{c}{(1.5, 0.5)}& \multicolumn{1}{c}{(2, 0.5)}& \multicolumn{1}{c|}{(2.5, 0.5)}\\
\hline
\makecell[l]{$\text{RMSE}^{btmq}$} &0.546&0.568&0.596&0.630&0.738&0.888&1.038\\
\hline
ARBIAS&19.230&14.551&9.199&4.127&11.364&23.917&32.844\\
RRMSE&69.091&64.719&59.837&54.740&46.134&43.395&44.477\\
$P_+$&1&1&0.99&0.72&0.08&0.00&0.00\\
\hline
\end{tabular}
\caption{Study of the effect of $\sigma_u$ and $\rho$ in $rmse_{2,dt}^{btmq}$ (in \%).}\label{ModelSim3.Supple}
\end{table}

It can be seen that, once the variance of the time component is set to $\sigma_u=1$, the higher the correlation, the better the results. By the same token, the findings are convincing: with moderate variability, increasing correlation makes the use of the TWMQ models worthwhile. Secondly, it is ensured that the variability of the time component is key in all aspects. As long as the correlation is moderate, say $\rho=0.5$, the ARBIAS results that are set out in Table \ref{ModelSim3} show that, if a large part of the randomness lies in the time component, the error is underestimated. Nevertheless, we should bear in mind that the target of the TWMQ models is not to capture random relationships, but rather well-founded time dependency structures, where correlation is somewhat large and randomness is moderate. What is more, if the time component is purely random, there is no point in borrowing strength from time.

\subsection{Consistency of $\hat\theta_d$}\label{sec.sim4}
In this section we aim to analytically investigate the consistency of the estimators $\hat\theta_d$ of the domain population means of unit-level MQ coefficients $\theta_d$ for MQ3 models, $d=1,\dots,D$. Here the population is only generated once for computational reasons and, for each iteration, the sample set is drawn by simple random sampling without replacement within each subdomain. The study is performed for Case 1.1 and Scenario [0,0], i.e. with uncontaminated data.

We employ the following steps:
\begin{enumerate}
\item For $d=1\dots, D$, $t=1\dots, T$, $j\in U_{dt}$, generate $x_{dtj}\backsim \text{LogN}(1,0.5)$, $u_{1,d}\backsim N(0,3)$, $e_{dtj}\backsim N(0,6)$, $\uu_2=\underset{1\leq t\leq T}{\mbox{col}}\big(u_{2,t}\big)\backsim N_T(\bm 0, \Omega_T(0.2))$, where  $\Omega_T(\rho)$ has been defined in \eqref{rho}, and then calculate $y_{dtj}=\beta_1+x_{dtj}\beta_2+u_{1,d}+u_{2,t}+e_{dtj}$, with $\beta_1=100$ and $\beta_2=5$.

\item  Fit the MQ3 models using the population data. Calculate $q_{dtj}$ and then $\theta_d$, $d=1\dots, D$, $t=1\dots, T$, $j\in U_{dt}$. Use the IRLS algorithm.

\item Repeat $S=500$ times $(s=1,\dots,S)$:
\begin{enumerate}
\item Randomly generating $n_{dt}$ different positions between 1 and $N_{dt}$, draw a sample $s_{dt}^{(s)}$ of size $n_{dt}$, $d=1\dots, D$, $t=1\dots, T$. In what follows, only sample data are used.

\item Fit the MQ3 model using the sample data and the IRLS algorithm.

\item For $d=1\dots, D$, $t=1\dots, T$, $j\in s_{dt}^{(s)}$, estimate $\hat q_{dtj}^{(s)}$ and then $\hat\theta_d^{(s)}$.
\end{enumerate}

\item For $d=1,\dots, D$, $t=1,\dots, T$, calculate the absolute and relative performance measures
\begin{align*}
&\text{BIAS}_d=\frac{1}{S}\sum_{s=1}^S(\hat \theta_d^{(s)}-\theta_d), \ \text{RBIAS}_d=\frac{100\cdot\text{BIAS}_d}{\theta_d}, \ \text{ARBIAS}=\frac{1}{D}\sum_{d=1}^D|\text{RBIAS}_d|,\\&
\text{RMSE}_d=\Big( \frac{1}{S}\sum_{s=1}^S(\hat \theta_d^{(s)}-\theta_d)^2\Big)^{1/2}, \ \text{RRMSE}_d=\frac{100\cdot\text{RMSE}_d}{\theta_d}, \ \text{RRMSE}=\frac{1}{D}\sum_{d=1}^D \text{RRMSE}_d.
\end{align*}
\end{enumerate}

Let $d=1,\dots, D$, $t=1,\dots, T$. The population size of each $U_{dt}$ is set equal to $N_{dt}=100$, so that of $U_d$ is $N_{d}=10\cdot 100=1000$. The sample sizes are set equal to $n_{dt}\in\{1,2,5,10,25,50\}$, and therefore, in the same way, it holds that $n_{d}\in\{10,20,50,100,250,500\}$ and the sample fractions are $\frac{n_{dt}}{N_{dt}}=\frac{n_{d}}{N_{d}}\in\{0.01, 0.02, 0.05, 0.10, 0.25, 0.50\}$. Figure \ref{Simu4} prints boxplots of RBIAS (left) and RRMSEs (right), both in \%, for the previous sample sizes.

\begin{figure}[H]
\centering
\begin{subfigure}{.7\textwidth}
\includegraphics[width=0.46\textwidth]{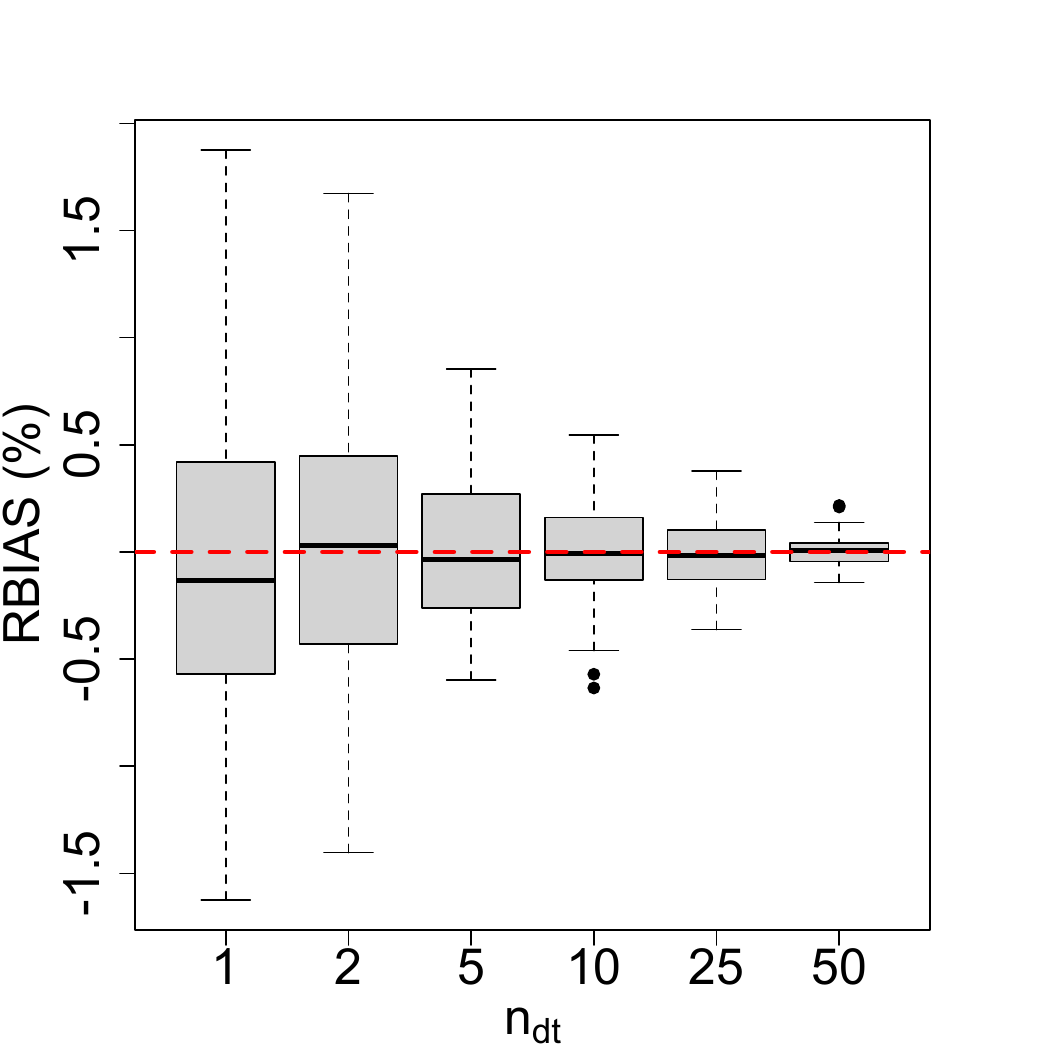}
\includegraphics[width=0.46\textwidth]{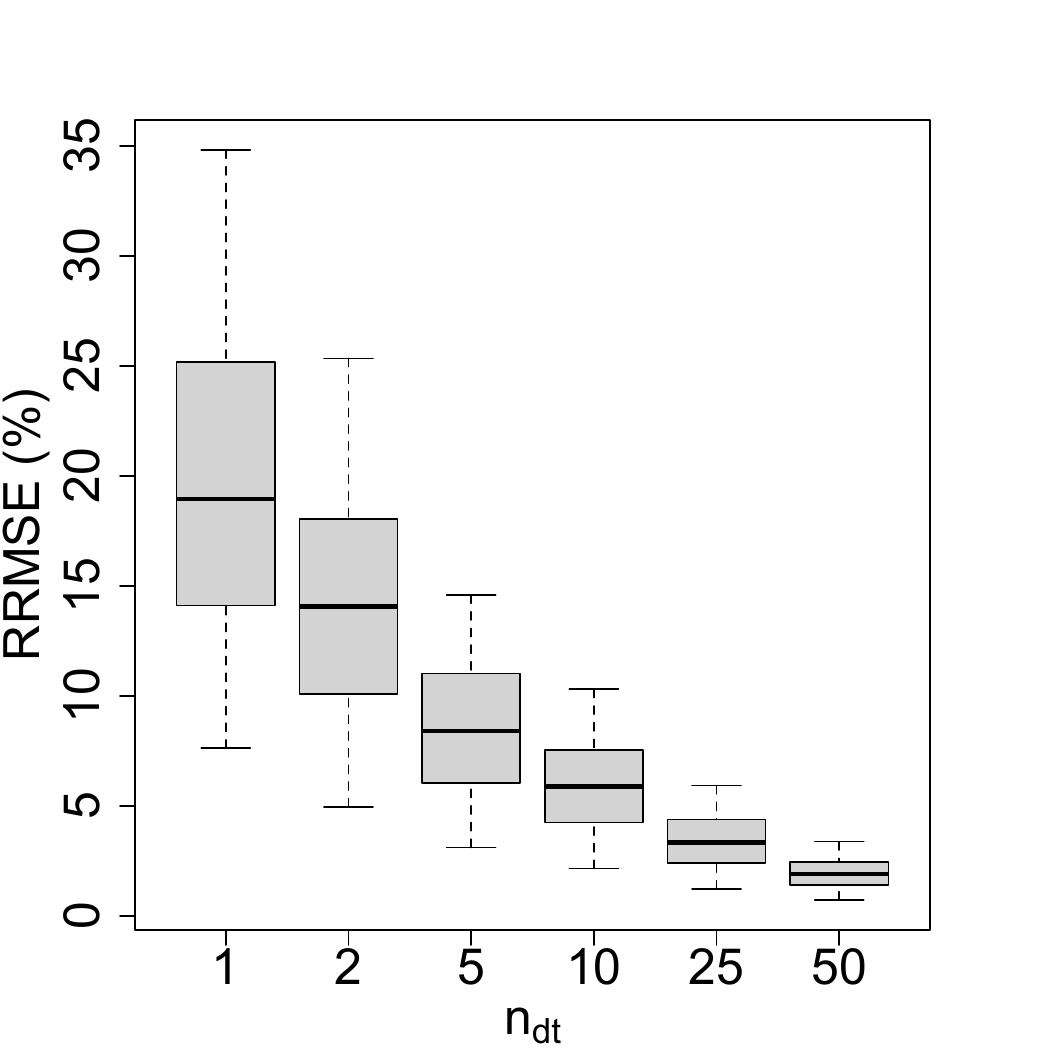}
\end{subfigure}
\caption{Boxplots of RBIAS (left) and RRMSE (right) in \% for the estimation of $\theta_d$.}
\label{Simu4}
\end{figure}

As expected, the RBIAS decreases slightly with increasing sample size, showing an origin-centric behaviour. Nonetheless, what is more remarkable is that the variability is clearly reduced with increasing sample size, leading to a conical pattern in Figure \ref{Simu4} (left). On the other hand, the RRMSE shows a monotonic pattern of decreasing with increasing sample size (Figure \ref{Simu4}, right), with a more than acceptable performance for $n_{dt}\geq 10$ ($n_{d}\geq 100$). Not only does the RBIAS decrease slightly with increasing sample size, but also the variability is clearly reduced. Table \ref{theta} confirms this behavior, with percentage ARBIAS and RRMSEs decreasing as $n_{dt}$  increases, but suggesting some stabilization for $n_{dt}\geq 10$. From this empirical evidence, we conclude that the results for the estimation of $\theta_d$ are satisfactory and the asymptotic consistency is already evident at moderate sample sizes.

\begin{table}[H]
\centering
\renewcommand{\arraystretch}{1.2}
\begin{tabular}{|l|cccccc|}
\multicolumn{1}{c}{}&\multicolumn{6}{c}{$n_{dt}$}\\
\cline{2-7}
\multicolumn{1}{c|}{}&1&2&5&10&25&50\\
\hline
ARBIAS (\%)& 0.609 &0.471 &0.284& 0.195 &0.147 &0.052\\
RRMSE (\%)&19.621 &13.980 & 8.523 & 5.925 & 3.407 & 1.948\\
\hline
\end{tabular}
\caption{ARBIAS and RRMSE (\%) for the estimation of $\theta_d$.}
\label{theta}
\end{table}

\section{Additional analysis for the application to real data}\label{sec.aplic.suple.top}

\setcounter{equation}{0}
\renewcommand{\theequation}{\thesection.\arabic{equation}}

\setcounter{table}{0}
\renewcommand{\thetable}{\thesection.\arabic{table}}

\setcounter{figure}{0}
\renewcommand{\thefigure}{\thesection.\arabic{figure}}

\subsection{Model validation}\label{sec.val.suppe}

Residual analysis is widely used to assess the adequacy of a model by examining the differences between observed and predicted values. Let $d=1\dots, D$, $t=1,\dots, T$. For $j=1,\ldots,n_{dt}$ and $q=\hat\theta_d$, the model residuals are $$\hat{e}_{\psi,dtj}\triangleq y_{dtj}-\xx_{dtj}' \hat\bbeta_\psi(\hat\theta_d,w_t).$$ 
We define the subdomain sample means of model residuals as $$\overline{\hat{e}}_{\psi,dt.}=\frac{1}{n_{d_t}}\sum_{j=1}^{n_{dt}} \hat{e}_{\psi,dtj},$$
the subdomain raw residuals as $$\overline{\hat{e}}_{\psi,dt.}-\overline{\hat{e}}_{\psi,..},$$ where $\overline{\hat{e}}_{\psi,..}=\frac{1}{DT}\sum_{d=1}^D\sum_{t=1}^T \overline{\hat{e}}_{\psi,dt.}$,
and the subdomain standardized residuals (SSR) by dividing by the standard deviation, given by $\nu=\Big(\frac{1}{DT}\sum_{d=1}^D\sum_{t=1}^T(\overline{\hat{e}}_{\psi,dt.}-\overline{\hat{e}}_{\psi,..})^2\Big)^{1/2}$.

Figure \ref{ASR} includes boxplots of the SSRs by province (left) and year (right).
As a result, most of them oscillate around $y=0$ and lie in the interval $(-3,3)$. Not surprisingly, the provincial variability is greater than the annual one, but neither province seems to be particularly poorly modelled.
Outlier detection, based on the selection of area-time specific robustness parameters, is presented in Section \ref{sec.out} of the main document.

\begin{figure}[ht!]
\centering
\begin{subfigure}{.7\textwidth}
\includegraphics[width=0.47\textwidth]{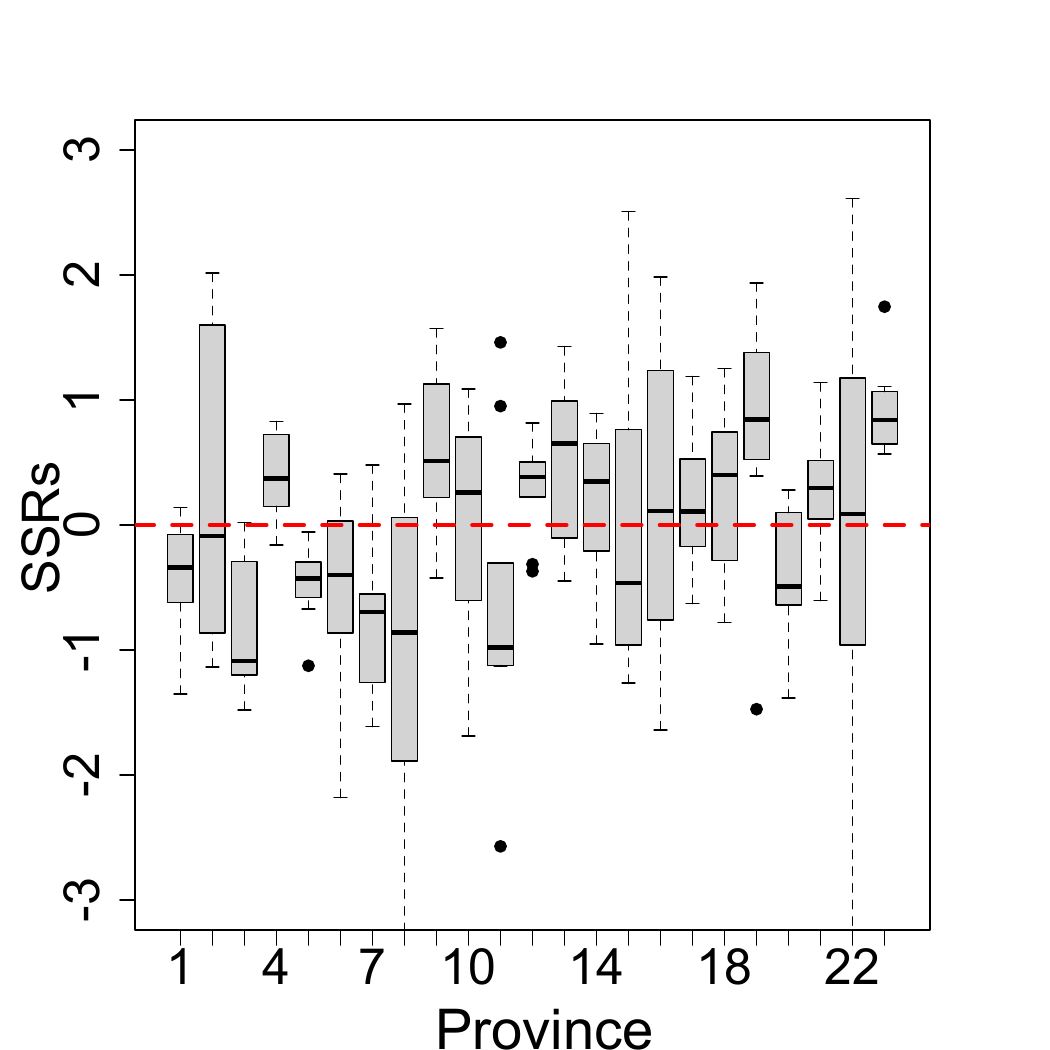}
\includegraphics[width=0.47\textwidth]{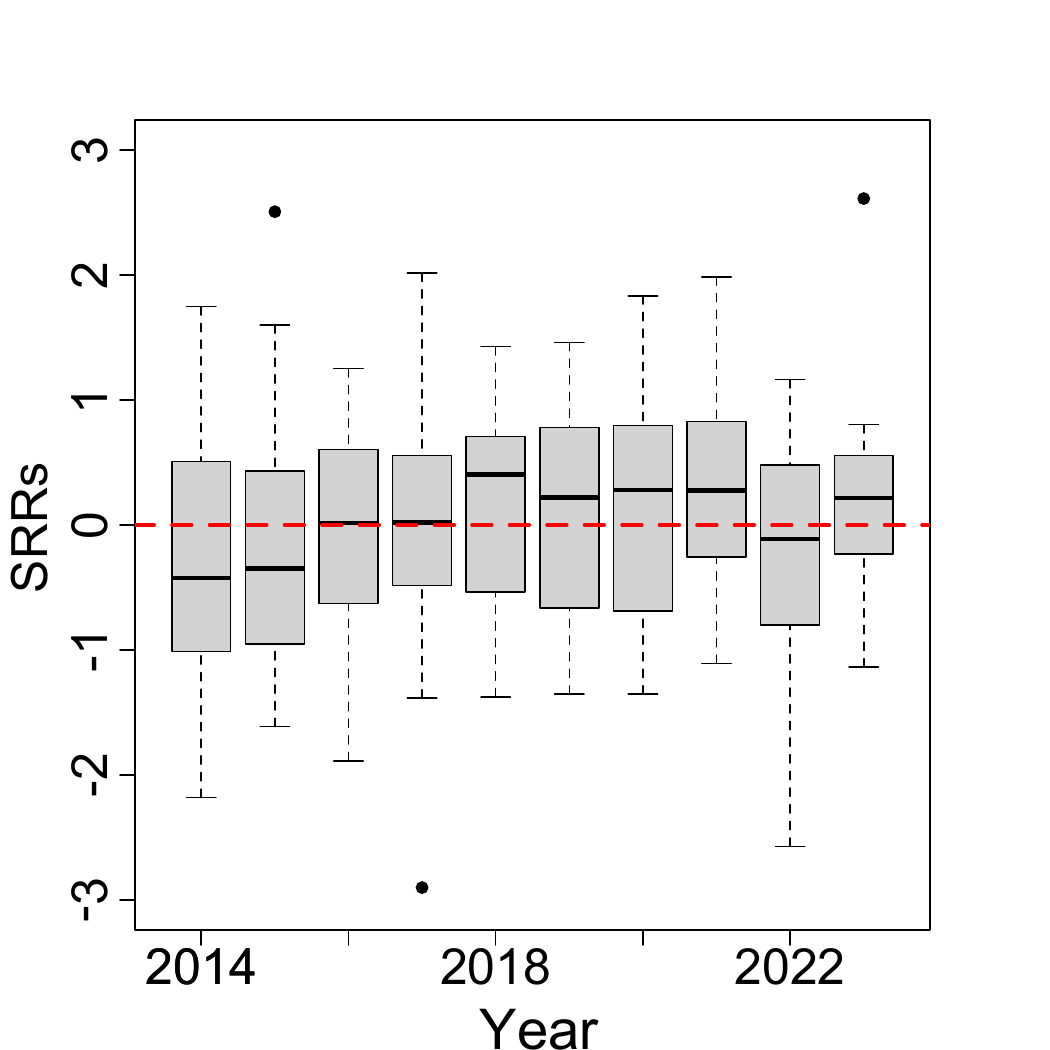}
\end{subfigure}
\caption{Boxplot of the SSRs by province (left) and year (right).}
\label{ASR}
\end{figure}

\subsection{Additional mapping results}\label{sec.aplic.suple}

This section includes the results for the TMQ predictor in the application to real data. Figure \ref{map.Income.SM} maps the equivalized disposable income for Empty Spain in 2013 (left), 2018 (center) and 2022 (right). The interpretation is in line with that discussed in Section \ref{pred.error}.
Nevertheless, a comparison of Figures \ref{map.Income} and \ref{map.Income.SM} reveals valuable insights because it is helpful in order to identify the differences in magnitude between the two small area predictions in some provinces and years. The latter evidences the unacceptable prediction bias of the TMQ predictor, as has been discussed repeatedly throughout the paper.

\begin{figure}[ht!]
\centering
\hspace{-13mm}
\includegraphics[width=0.39\linewidth]{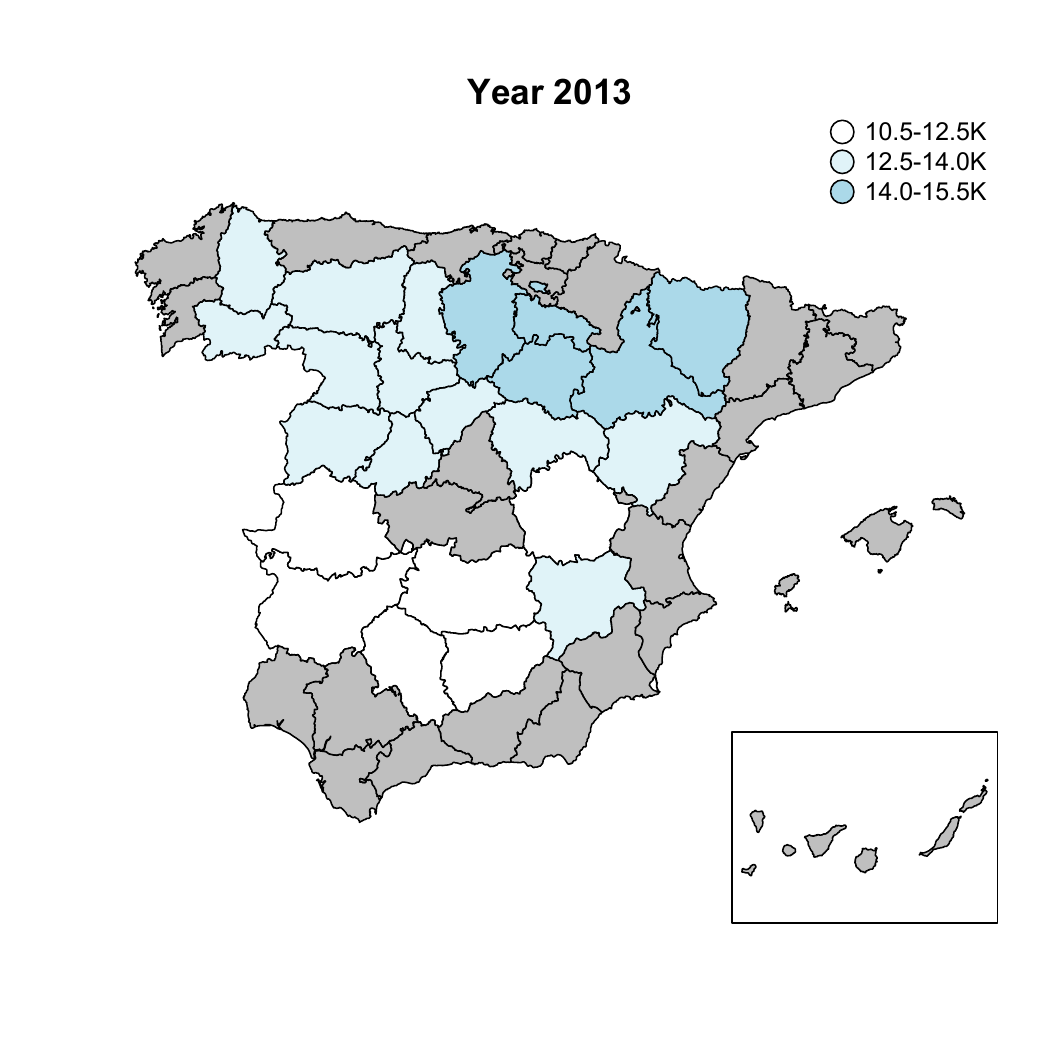}
\hspace{-11.5mm}
\includegraphics[width=0.39\linewidth]{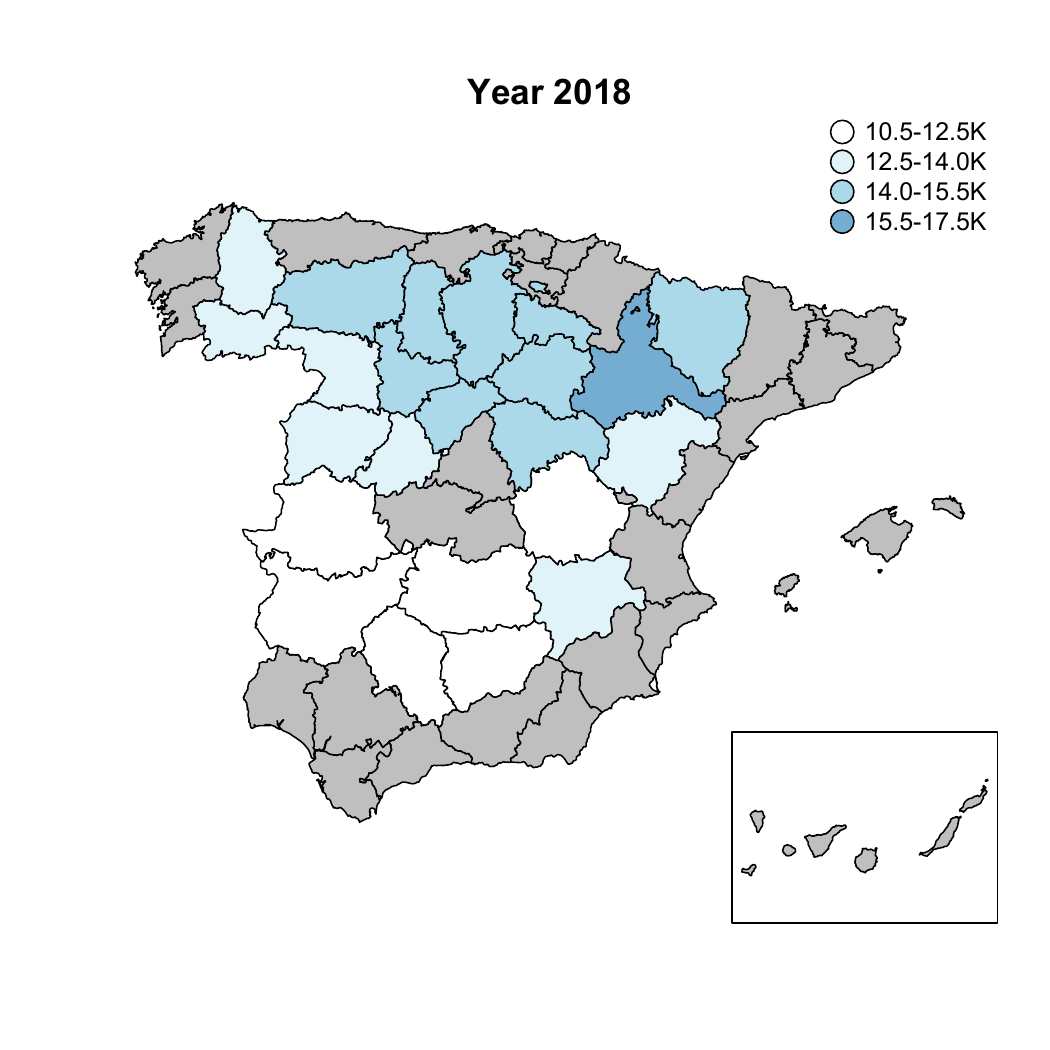}
\hspace{-11.5mm}
\includegraphics[width=0.39\linewidth]{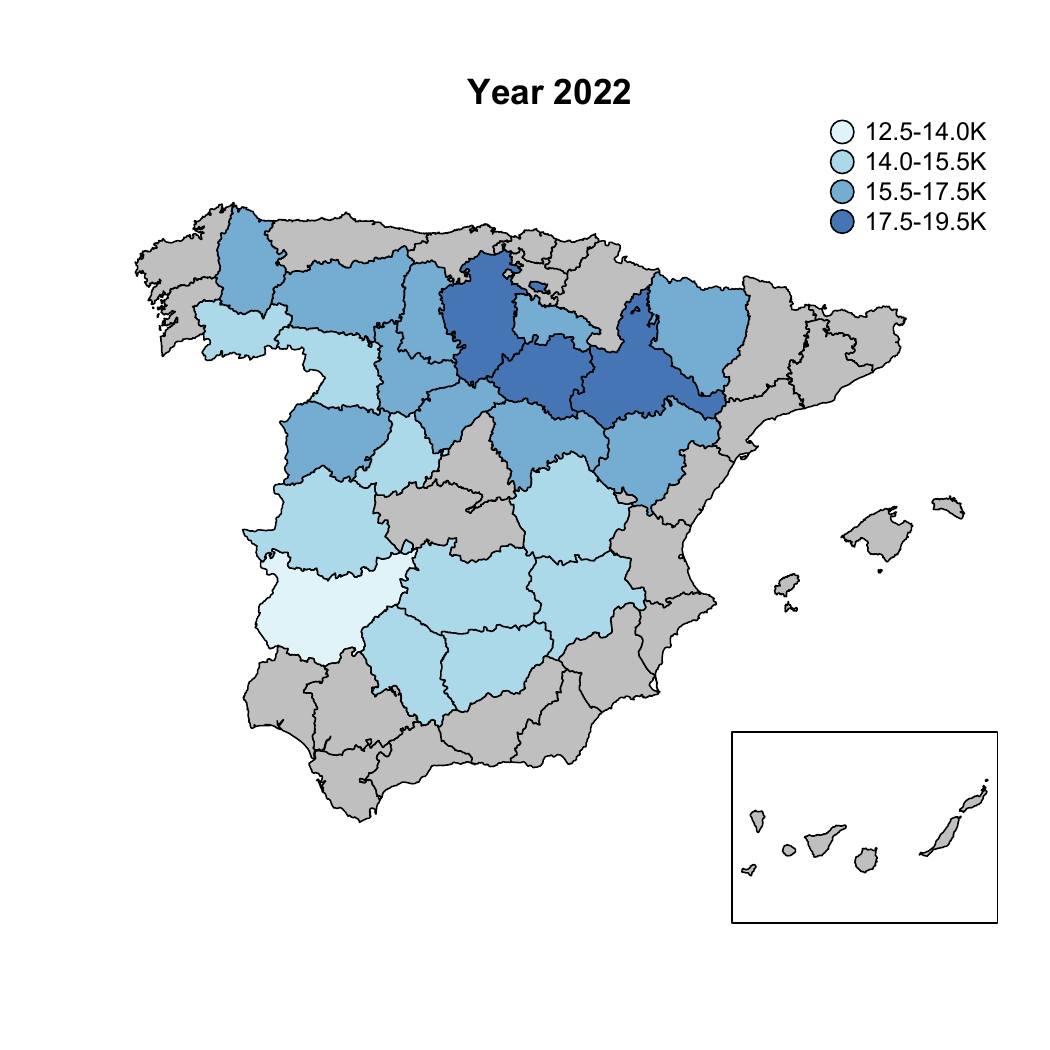}
\hspace{-10mm}
\caption{Estimates of equivalized disposable income for Empty Spain in 2013 (left), 2018 (center) and 2022 (right). Results obtained using the TMQ predictor.}
\label{map.Income.SM}
\end{figure}

For the sake of completeness, Figure \ref{map.RRMSE.SM} maps the coefficients of variation for the predictions in Figure \ref{map.Income.SM}, using the $rmse_{22,dt}^{tmq}$ estimator proposed in Section \ref{MSE.TMQ}. More significantly than the differences in the predictions themselves are the differences in their coefficients of variation. Because of the bias, the error is much higher than that shown in Figure \ref{map.RRMSE} for the BTMQ predictor.\\

\begin{figure}[ht!]
\centering
\hspace{-13mm}
\includegraphics[width=0.39\linewidth]{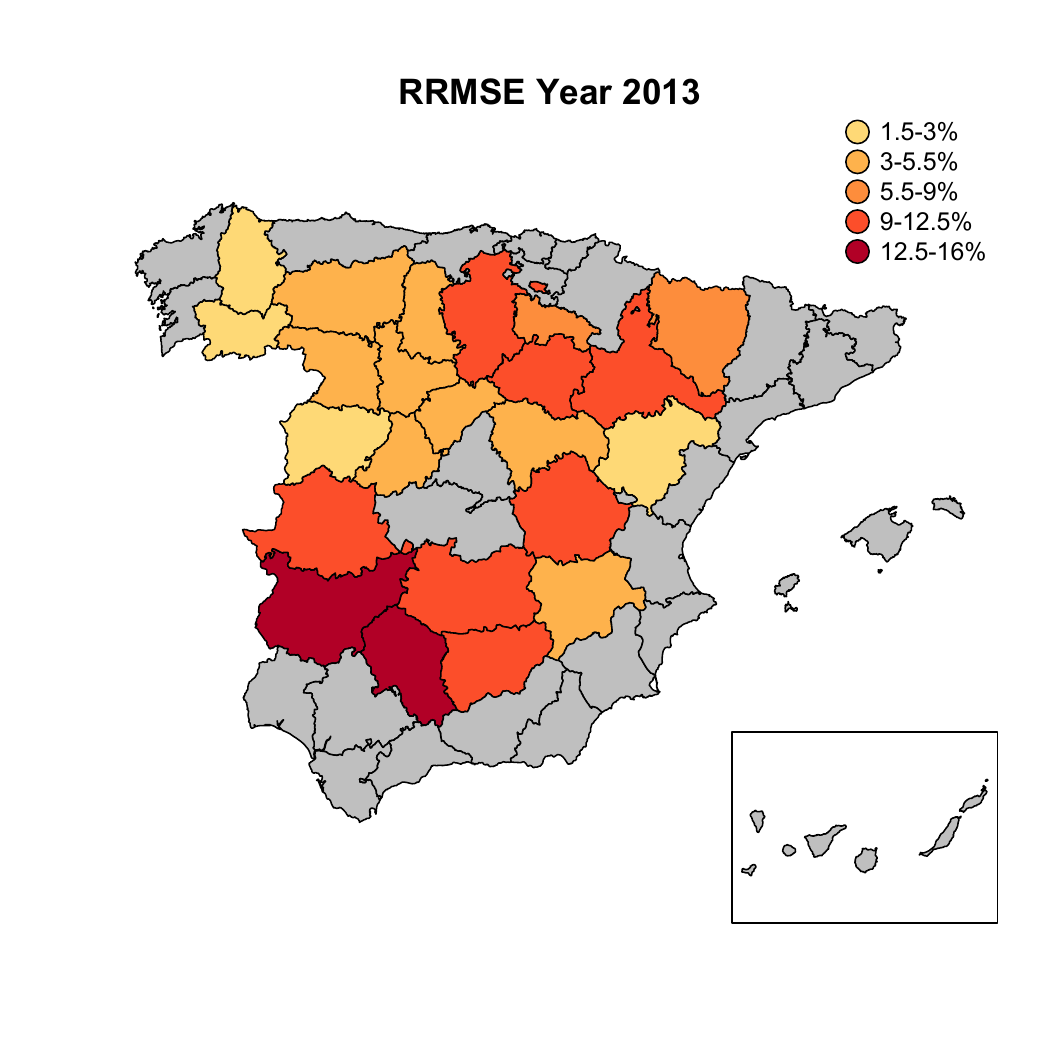}
\hspace{-11.5mm}
\includegraphics[width=0.39\linewidth]{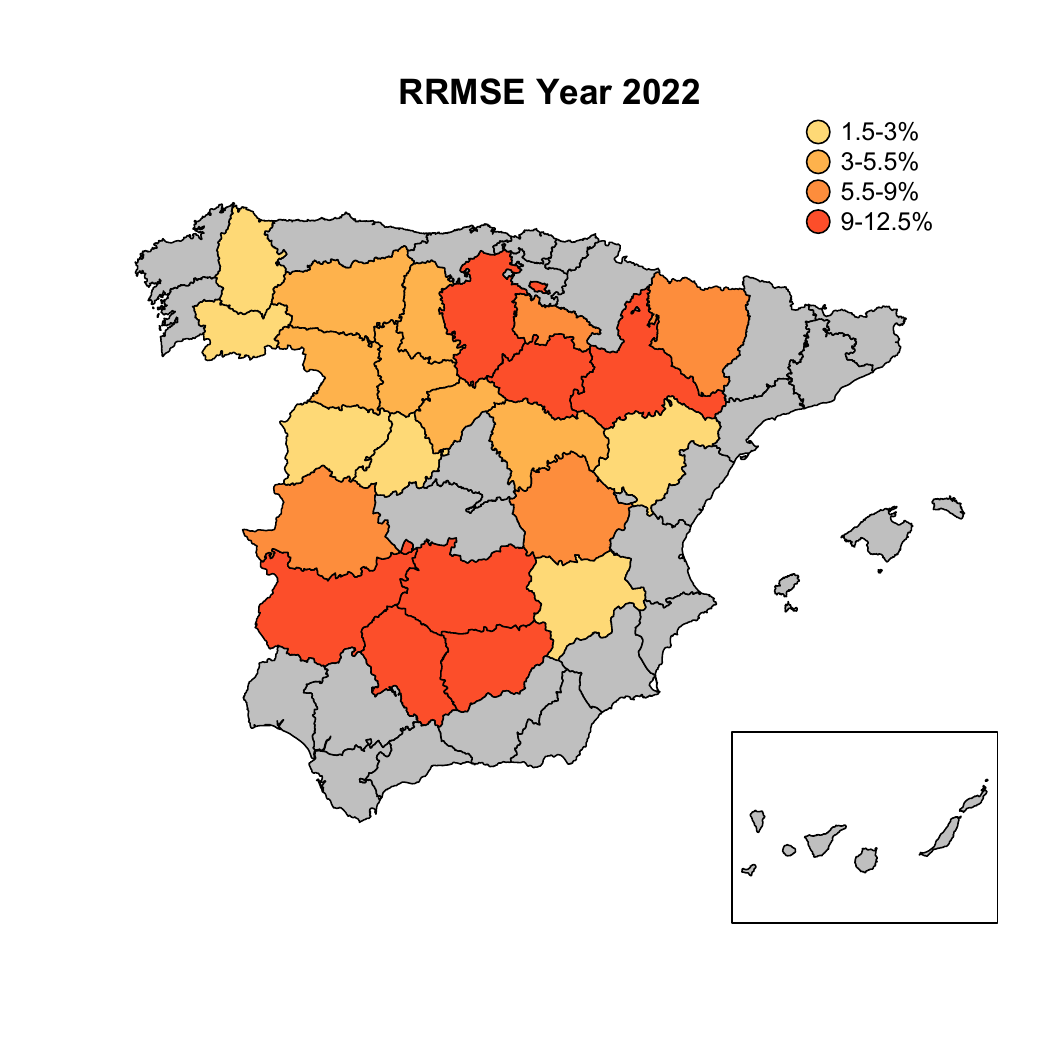}
\hspace{-11.5mm}
\includegraphics[width=0.39\linewidth]{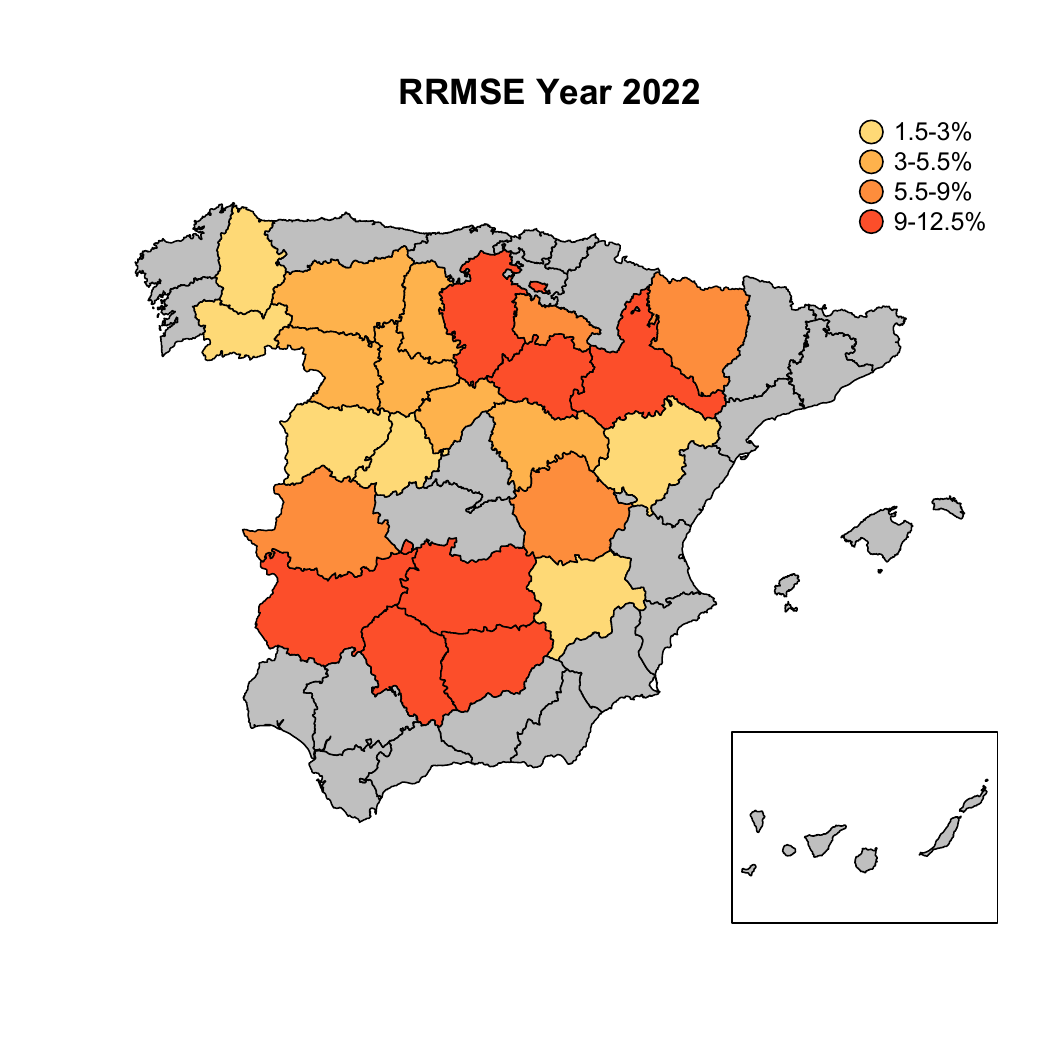}
\hspace{-10mm}
\caption{Coefficients of variation of the equivalized disposable income for  Empty Spain in 2013 (left), 2018 (center) and 2022 (right). Results for the TMQ predictor.}
\label{map.RRMSE.SM}
\end{figure}

\newpage

\section*{References}

\begin{itemize}
\setlength{\textwidth}{31pc}

\item[]
Bianchi A., Salvati N. (2015). Asymptotic properties and variance estimators of the M-quantile regression coefficients estimators. {\it Communications in Statistics 44}, 2416-2429.

\item[] Friedman, M. (1937). The use of ranks to avoid the assumption of normality implicit in the analysis of variance. {\it Journal of the American Statistical Association 32}, 200, 675-701.

\item[] Huber, P. (1981). {\it Robust Statistics}. John Wiley  and Sons.

\item[] Tukey, J. (1949). Comparing individual means in the analysis of variance. {\it Biometrics 5}, 99-114.

\item[]
Welsh, A.H. (1986). Bahadur representations or robust scale estimators based on regression residuals.
{\it Annals of Statistics 14}, 1246-1251.

\item[] Zewotir, T., Galpin, J. (2007). A unified approach on residuals, leverages and outliers in the linear mixed model. {\it TEST 16}, 1, 58-75.

\end{itemize}

\end{document}